\newcommand{\indic}{\mathds{1}}
\def\E{\mathds{E}}
\def\X{\mathds{X}}
\def\R{\mathds{R}}
\newcommand{\Ecr}{\mathscr{E}}
\newcommand{\Scr}{\mathscr{S}}
\newcommand{\Ccr}{\mathscr{C}}
\DeclareMathOperator*{\argmin}{arg\,min}
\def\de{\mathrm{d}}
\newtheorem{theorem}{Theorem}
\newtheorem{corollary}{Corollary}
\def\simiid{\stackrel{\mbox{\scriptsize{iid}}}{\sim}}
\begin{document}



\title{\bf A Bayesian Nonparametric Approach to Species Sampling Problems with Ordering}

\author[,1]{Cecilia Balocchi \thanks{cecilia.balocchi@ed.ac.uk}}
\author[,2]{Federico Camerlenghi \thanks{federico.camerlenghi@unimib.it}}
\author[,3]{Stefano Favaro \thanks{stefano.favaro@unito.it}}

\affil[1]{School of Mathematics, University of Edinburgh, UK.}
\affil[2]{Department of Economics, Management and Statistics, University of Milano-Bicocca, Milano, Italy.}
\affil[3]{Department of Economics, Social Studies, Applied Mathematics and Statistics, University of Torino, Torino, Italy.}

\date{}
\maketitle

\bigskip
\begin{abstract}
Species-sampling problems (SSPs) refer to a vast class of statistical problems calling for the estimation of (discrete) functionals of the unknown species composition of an unobservable population. A common feature of SSPs is their invariance with respect to species labeling, which is at the core of the Bayesian nonparametric (BNP) approach to SSPs under the popular Pitman-Yor process (PYP) prior. In this paper, we develop a BNP approach to SSPs that are not “invariant” to species labeling, in the sense that an ordering or ranking is assigned to species’ labels. Inspired by the population genetics literature on age-ordered alleles' compositions, we study the following SSP with ordering: given an observable sample from an unknown population of individuals belonging to species (alleles), with species' labels being ordered according to weights (ages), estimate the frequencies of the first $r$ order species’ labels in an enlarged sample obtained by including additional unobservable samples. By relying on an ordered PYP prior, we obtain an explicit posterior distribution of the first $r$ order frequencies, with estimates being of easy implementation and computationally efficient. We apply our approach to the analysis of genetic variation, showing its effectiveness in estimating the frequency of the oldest allele, and then we discuss other potential applications.
\end{abstract}

\noindent \textsc{Keywords}: Bayesian nonparametrics; exchangeable partition probability function; first $r$ order frequency; ordered Pitman-Yor process prior; species sampling problems; population genetics.

\maketitle


\section{Introduction} \label{sec:intro}

Species sampling problems (SSPs) refer to a vast class of statistical problems, of which the estimation of the number of unseen species is arguably the most popular example \citep{Goo(56),Efr(76),Lij(07),Orl(17)}. Consider $n\geq1$ observable samples from a generic population of individuals, with each individual taking a value in a (possibly infinite) discrete space of symbols or species' labels. The unseen-species problem assumes that observable samples are modeled as a random sample $(X_{1},\ldots,X_{n})$ from an unknown discrete distribution $p$, and calls for estimating 
\begin{equation}\label{unseen_ssp}
K_{m}^{(n)}=|\{X_{n+1},\ldots,X_{n+m}\}\setminus \{X_{1},\ldots,X_{n}\}|,
\end{equation}
namely the number of hitherto unseen symbols that would be observed if $m\geq1$ additional samples $(X_{n+1},\ldots,X_{n+m})$ were collected from the same distribution. SSPs comprise of generalizations or refinements of the unseen-species problem, calling for the estimation of (discrete) functionals of the species' composition of unobservable samples, e.g. missing mass, discovery probabilities, unseen species with prevalences and coverages of prevalence. We refer to \citet{Den(19)} and \citet{Bal(22)} for reviews of SSPs, both in methods and applications, mostly in the field of biological sciences but also in machine learning, electrical engineering, computer science and information theory.

A common feature of SSPs is that species' labels identifying the $X_{i}$'s are immaterial in the definition of the functional of interest, as for instance in \eqref{unseen_ssp}, thus making SSPs ``invariant" to species labeling. Such a feature is at the core of the Bayesian nonparametric (BNP) approach to SSPs \citep{Lij(07),Lij(08),Fav(09),Fav(13)}, which relies on the specification of a (nonparametric) prior $\mathscr{P}$ for the unknown distribution $p$, i.e.
\begin{align}\label{ex_model}
X_i\,|\,P & \quad\simiid\quad P \qquad i=1,\ldots,n,\\
\notag P& \quad\sim\quad\mathscr{P}.
\end{align}
Species sampling models (SSMs) \citep{Pit(96)} provide a natural choice for the prior distribution $\mathscr{P}$, including the celebrated Dirichlet process (DP) prior \citep{Fer(73)} and the Pitman-Yor process (PYP) prior \citep{Pit(97)}. Under the BNP model \eqref{ex_model} with a SSM for $P$, the random sample $(X_{1},\ldots,X_{n})$ induces a random partition $\tilde{\pi}_{n}$ of $[n]=\{1,\ldots,n\}$ whose blocks correspond to the (equivalence) classes induced by the equivalence relation $i\sim j\iff X_{i}=X_{j}$ almost surely. In particular, $\tilde{\pi}_{n}$ is exchangeable \citep[Chapter 2]{Pit(06)}, namely its distribution is such that the probability of any partition of $[n]$ with $k$ blocks of frequencies $(n_{1},\ldots,n_{k})$ is a symmetric function of compositions $(n_{1},\ldots,n_{k})$ of $[n]$. The exchangeability of $\tilde{\pi}_{n}$ implies that blocks' labels are immaterial, and therefore it legitimates the BNP approach to SSPs under the class of SSMs.

\subsection{Our contributions}

In this paper, we consider SSPs that are not ``invariant" with respect to species labeling, in the sense that an ordering or ranking is assigned to species' labels, and we develop a BNP approach to such problems. Under the infinitely-many neutral alleles model for the evolution of genetic populations \citep{Ewe(72),Kin(75),Wat(77),Gri(79)}, the work of \citet{Don(86)} first investigated the alleles' composition of a random sample from the population by also taking into account the ages of alleles, namely the times elapsed since the first time each allele first appeared in the sample. This study led to the introduction of an age-ordered version of the random partition induced by the DP prior, where species are alleles and species' labels are ordered according to the age of alleles in such a way that the smaller the order the older the allele. Besides providing distributional properties of the age-ordered random partition, \citet{Don(86)} applied such a model to answer a critical question raised in \citet{Cro(72)}: ``Is the most frequent allele the oldest?". Under the infinitely-many neutral alleles model, \citet{Don(86)} came up with a positive answer to such a question, showing that the probability that an allele represented $i$ times in sample of size $n$ is the oldest is $i/n$ \citep{Kel(77),Wat(77)}. The question of \citet{Cro(72)} is to some extent a SSP with ordering, as the object of interest involves a species' label with a precise order, namely the species' label of order $1$ that corresponds to the oldest allele.

Inspired by the seminal work of \citet{Don(86)}, we study the following SSP with ordering: assuming $n$ observable samples to be modeled as a random sample $((T_{1},X_{1}),\ldots,(T_{n},X_{n}))$ from an unknown discrete distribution $q$ on $\R_+ \times \X$, with the positive $T_{i}$'s being considered as weights inducing an ordering among species' labels identifying the $X_{i}$'s, on a general (measurable) space $\X$, we estimate the frequencies of the first $r$ order species' labels in an enlarged sample obtained by including $m$ additional unobservable samples $((T_{n+1},X_{n+1}),\ldots,(T_{n+m},X_{n+m}))$ from the same unknown distribution. Within the population genetic setting of \citet{Don(86)}, this ordered SSP corresponds to the estimation of the frequency of the $r$ oldest alleles in a sample of size $(n+m)$ based on $n$ observable samples. We introduce a BNP approach to estimate the first $r$ order frequencies, which relies on the use of the class of spatial neutral to the right SSMs \citep{Jam(06)}, or ordered SSMs, as prior distributions for $q$. The most popular ordered SSM is the ordered DP prior, which is known to induce the age-ordered random partition of \citet{Don(86)}. See also \citet{Gne(05)} and references therein. Here, we consider the more general ordered PYP prior  \citep{Gne(05),Jam(06)}, and we determine the posterior distribution of the first $r$ order frequencies; then, a BNP estimator is proposed in terms of the posterior mean, whose closed-form expression results to be of easy implementation and also computationally efficient. Of special interest is the case $r=1$ that, in the original setting of \citet{Don(86)}, leads to an estimate of the frequency of the oldest allele.

We present an empirical validation of the effectiveness of our BNP approach, both on synthetic data and real data. It is natural to focus on applications to genetic data, for which the weights $T_{i}$'s have an interpretation as the ages of the alleles $X_{i}$'s. The problem of modeling the interplay between the alleles' composition of a genetic population and the age of alleles dates back to the 1970s and the 1980s, and nowadays the genealogical structure of alleles is well recognized as a fundamental aspect in many inferential (decision) processes in the field of population genetics. In particular, investigating genetic variation while incorporating the information on the variants' age enhances the investigation of several problems, such as analyzing and comparing populations structure, detecting which samples are related, studying demographic history, and learning about genetic susceptibility to disease \citep{Mat(14)}. For example, it enables researchers to use variants' age distribution to compare populations, to differentiate age distributions in pathogenic and benign variants, and to learn about genealogical history \citep{Alb(20)}. Here, we apply our BNP methodology to the problem of estimating the frequency of the oldest allele, using genetic variation data from the 1000 Genomes Project \citep{10002015global} and variants' age estimates from the Human Genome Dating Project \citep{Alb(20)}. Thanks to our posterior estimator, we can not only answer inferential questions on the frequency of the oldest allele in an observed sample, but also make predictions by analyzing an enlarged sample. By studying the trajectory of this frequency as a function of the enlarged sample size, we can enhance our understanding of the population distribution, thereby addressing the investigation of the aforementioned issues more effectively.

Besides population genetics, SSPs with ordering arise in at least other two contexts: i) citations to academic articles and ii) online purchases of items. In the context of citations to academic articles, with articles being ordered according to their publication's dates, one may be interested in the frequency of citations to the oldest paper. Citation data are often analyzed in the framework of citation networks to study the movement of ideas in academic fields or examine scholars' influence \citep{Por(17)}. Incorporating knowledge of articles' age permits the investigation of the effects of time on the number of citations \citep{Haj(05)} and to answer questions such as ``are older papers more frequently cited than newer papers?''. Assuming that each citation represents an observation, that the article cited represents a species, and that the order of the cited article is determined by its publication's date, we may apply our BNP approach in order to predict the frequency of citations to the oldest article in future observations. In the context of online purchases of items, species' labels are represented by the items purchased, with items being ordered according to their costs, as well as by a generic independent measure of popularity. In such a context, we may apply our BNP approach to study the distribution of the most popular item in future purchases, which is particularly relevant in order to plan suitable changes in the current marketing strategies.

\subsection{Organization of the paper}

The paper is structured as follows. In Section \ref{sec2} we present the ordered PYP prior and review its sampling structure in terms of sampling formulae and predictive distributions. In Section \ref{sec3} we provide the posterior distribution of the first $r$ order frequencies, with emphasis on the special case $r=1$, and obtain corresponding estimators. Section \ref{sec4} contains numerical illustrations of our BNP approach, both on synthetic and real data, whereas in Section \ref{sec5} we discuss our work and some directions for future research. Additional numerical illustrations on genetic data, an illustration in the context of citations to academic articles, and the proofs of our results are deferred to the Supplementary Material.


\section{The ordered PYP}\label{sec2}

To introduce the ordered PYP, it is useful to recall the PYP and its sampling structure. Let $P$ be a PYP with parameter $\alpha\in[0,1)$ and $\theta>-\alpha$ on a measurable space $\mathbb{X}$. That is $P=\sum_{i\geq1}P_{i}\delta_{S_{i}}$, where: i) $P_{1}=V_{1}$ and $P_{i}=V_{i}\prod_{1\leq j\leq i-1}(1-V_{j})$ with $(V_{i})_{i\geq1}$ being independent Beta random variables with parameter $(1-\alpha,\theta+i\alpha)$; ii) $(S_{i})_{i\geq1}$ be random variables, independent of the $V_{i}$'s, and independent and identically distributed according to a non-atomic distribution $\nu$ on $\mathbb{X}$ \citep{Per(92),Pit(95)}. Because of the (almost sure) discreteness of $P$, a random sample $(X_{1},\ldots,X_{n})$ from $P$ induces a random partition $\tilde{\pi}_{n}$ of $[n]$ into $K_{n}\leq n$ blocks, labelled by $\{X_{1}^{\ast},\ldots,X_{K_{n}}^{\ast}\}$, with frequencies $N_{j,n}=|i\in[n]\text{ : }X_{i}=X_{j}^{\ast}|$ for $j=1,\ldots,K_{n}$ and such that $N_{j,n}\geq1$ and $\sum_{1\leq j\leq K_{n}}N_{j,n}=n$. In particular, if we set $(a)_{(r)}=\prod_{0\leq i\leq r-1}(a+i)$ for any $a\geq0$ and $r\in\mathbb{N}_{0}$, then the probability of any partition of $[n]$ with $k$ blocks of frequencies $(n_{1},\ldots,n_{k})$ is
\begin{equation}\label{eq_eppf}
\Pi_{k}^{(n)}(n_{1},\ldots,n_{k})=\frac{\prod_{i=1}^{k}(\theta+(i-1)\alpha)}{(\theta)_{(n)}}\prod_{i=1}^{k}(1-\alpha)_{(n_{i}-1)}.
\end{equation}
Equation \eqref{eq_eppf} is referred to as the exchangeable partition probability function (EPPF), a concept introduced in \citet{Pit(95)} as a development of results in \citet{Kin(78)}. For $\alpha=0$ the PYP reduces to DP, and hence \eqref{eq_eppf} reduces to the Ewens sampling formula \citep{Ewe(72)}. See \citet[Chapter 3 and Chapter 4]{Pit(06)} for a detailed account of EPPFs.

The predictive distribution of the PYP provides a generative scheme for the random partition $\tilde{\pi}_{n}$. This is typically stated in terms of the Chinese Restaurant Process \citep[Chapter 3]{Pit(06)}, which is a sequential construction of $\tilde{\pi}_{n}$ through the metaphor of customers (observations) sitting at tables (species) of a restaurant. Under Chinese Restaurant Process, the first customer $X_{1}$ arrives and is assigned to a table. After $n$ customers $(X_{1},\ldots,X_{n})$ have arrived and have been assigned to $k$ tables $\{X_{1}^{\ast},\ldots,X_{k}^{\ast}\}$, with $n_{i}$ being the number of customers at table $i=1,\ldots,k$, the customer $X_{n+1}$ arrives and
\begin{itemize}
\item[i)] she will sit at a (``new") table $X^{\ast}$, that is a table not already occupied, with a probability
\begin{equation}\label{eq:prednew_class}
p^{\tiny{(new)}}=\frac{\theta+k\alpha}{\theta+n};
\end{equation}
\item[ii)] she will sit at a table $X^{\ast}_{j}$ that has been already occupied, for $j=1,\ldots,k$, with a probability
\begin{equation}\label{eq:predold_class}
p^{\tiny{(old)}}_{j} =\frac{n_{j}-\alpha}{\theta+n}.
\end{equation}
\end{itemize}
We refer to \citet[Chapter 3 and Chapter 4]{Pit(06)} for a detailed account of Chinese Restaurant Process and its generalizations to SSMs. In particular, the PYP is characterized as the sole SSM for which $p^{\tiny{(new)}}$ depends only on $(n,k)$ and $p^{\tiny{(old)}}_{j}$ depends only on $(n,n_{j})$ 
(\cite{Zab(97)}; see \cite{Bac(17)} for more general sufficiency postulates).

Equation \eqref{eq_eppf} is a symmetric function of compositions $(n_{1},\ldots,n_{k})$ of $[n]$, that is the random partition $\tilde{\pi}_{n}$ induced by the PYP is an exchangeable random partition \citep[Chapter 2]{Pit(06)}. The ordered PYP is a discrete random probability measure generalizing the PYP, in the sense that random sampling from the ordered PYP allows to couple each species' label with a corresponding order \citep{Gne(05),Jam(06)}. 
An ordered PYP $Q$ is an almost surely discrete random probability measure with parameters $\alpha\in[0,1)$ and $\theta>0$, that can be defined relying on the de Finetti theorem. Indeed, if  $\{(T_i,X_i)\}_{i \geq 1}$ is an exchangeable sequence of observations whose directing measure is an ordered PYP $Q$, i.e., $(T_{i},X_i)\,|\,Q  \simiid Q$ as $i \geq 1$, we can characterize $Q$ by assigning the predictive distributions of the associated exchangeable sequence.
In order to do this, consider the random sample $((T_{1},X_{1}),\ldots,(T_{n},X_{n}))$ from $Q$, with the $T_{i}$'s being viewed as weights that induce an ordering among species' labels identifying the $X_{i}$'s.
Because of the (almost sure) discreteness of $Q$, the random sample $((T_{1},X_{1}),\ldots,(T_{n},X_{n}))$ from $Q$ induces a random partition of $[n]$ into $K_{n}\leq n$ blocks, labelled by a $K_{n}$-tuple $((T^{\ast}_{1},X_{1}^{\ast}),\ldots,(T^{\ast}_{K_{n}},X_{K_{n}}^{\ast}))$ that is ordered according to the $T^{\ast}_{j}$'s in such a way that $T^{\ast}_{1}>\cdots> T^{\ast}_{K_{n}}$, with corresponding ordered frequencies $M_{j,n}=|{i\in[n]\text{ : }(T_{i},X_{i})=(T^{\ast}_{j},X^{\ast}_{j})}|$ for $j=1,\ldots,K_{n}$ and such that $M_{j,n}\geq1$ and $\sum_{1\leq j\leq K_{n}}M_{j,n}=n$. Species' labels $X_{j}^{\ast}$'s are thus ordered with respect to the decreasing ordering of the weights $T^{\ast}_{j}$'s, namely the larger the weight the smaller the order, such that $M_{j,n}$ is the frequency of the species' label of order $j$ that corresponds to the $j$-th largest weight $T^{\ast}_{j}$. An analogous construction follows for ordered SSMs \citep{Jam(06)}. 

In analogy with the Chinese Restaurant Process, the predictive distribution of the ordered PYP $Q$ may be stated as an ordered version of the Chinese Restaurant Process, with tables ordered according to weights \citep{Jam(06)}. In particular, under the ordered Chinese Restaurant Process, the first $n$ customers $((T_{1},X_{1}),\ldots,(T_{n},X_{n}))$ arrive and they are assigned to the $k$ ordered tables $((T^{\ast}_{1},X_{1}^{\ast}),\ldots,$ $(T^{\ast}_{k},X_{k}^{\ast}))$, with the table of order $j$ corresponding to the $j$-th largest weight $T^{\ast}_{j}$. Hereinafter, we denote by $m_{j}$ the number of customers seated at the table of order $j$ for $j=1,\ldots,k$, and we set $r_{j}=m_j + \cdots +m_k$, for any $j=1,\ldots,k$, and $r_{k+1}=0$. Then, the customer $(T_{n+1},X_{n+1})$ arrives and
\begin{itemize}
\item[i)] she will sit at a ``new" table  $(T^{\ast},X^{\ast})$ of order $j=1,\ldots,k+1$, that is a table not already occupied and whose order $j$ is determined through the weight $T^{\ast}$, with a probability
\begin{equation}\label{eq:prednew}
q^{\tiny{(new)}}_{j} = \frac{\theta+\alpha r_{j}}{(1+r_{j} ) (\theta+n)} \prod_{i =1 }^{j-1}   \frac{r_{i} (\alpha  r_{i+1} +\alpha +\theta m_{i})}{(r_{i}+1) (\alpha  r_{i+1} +\theta m_{i})},
\end{equation}
where $T^*$ and $X^*$ are generated from two non-atomic distributions on $\R_+$ and $\X$ respectively;
\item[ii)] she will sit at a table $(T^{\ast}_{j},X^{\ast}_{j})$ that has been already occupied, for $j=1,\ldots,k$, with a probability
\begin{equation}\label{eq:predold}
q^{\tiny{(old)}}_{j} = \frac{r_{j}(m_j-\alpha) (\alpha r_{j+1} +\theta m_j +\theta)}{(1+r_{j})(\theta+n)(\alpha r_{j+1} +\theta m_j )}\prod_{i = 1}^{j-1} \frac{r_{i} (\alpha r_{i+1}+\alpha+\theta m_{i})}{(r_{i}+1) (\alpha r_{i+1}+\theta m_{i})}.
\end{equation}
\end{itemize}
Given that the $T^*_j$ only affect the distribution of an ordered partition through the ordering induced on the clusters, we avoid using specific notation for its distribution, as it is immaterial. 
We refer to \citet{Gne(05)} and \citet{Jam(06)} for a detailed account of \eqref{eq:prednew} and \eqref{eq:predold}. The predictive distribution of the ordered DP arises from \eqref{eq:prednew} and \eqref{eq:predold} by setting $\alpha=0$.

Equation \eqref{eq:prednew} and Equation \eqref{eq:predold} provide a generative scheme for the random partition of $[n]$ induced by the ordered PYP $Q$. That is: i) if the $(n+1)$-th customer $(T_{n+1},X_{n+1})$ sits at the new table $(T^{\ast},X^{\ast})$, which happens with probability \eqref{eq:prednew}, then the order of such a table with respect to the ordering of the already occupied tables $((T^{\ast}_{1},X_{1}^{\ast}),\ldots,(T^{\ast}_{k},X_{k}^{\ast}))$ is determined by $T^{\ast}$, thus possibly changing the ordering of occupied tables by shifting the order of tables $(T^{\ast}_{j},X^{\ast}_{j})$'s with weights smaller than $T^*$; ii) if the $(n+1)$-th customer $(T_{n+1},X_{n+1})$ sits at a table $(T^{\ast}_{j},X^{\ast}_{j})$ that is already occupied, which happens with probability \eqref{eq:predold}, then the order of such a table with respect to the ordering of the already occupied tables is determined by $T^{\ast}_{j}$, thus not changing the ordering of occupied tables. In other terms, a new customer sitting at a new table may determine a change in the ordering of the occupied tables, whereas a new customer sitting at a table occupied does not determine a change in the ordering of the already occupied tables.

\citet{Gne(05)} and \citet{Jam(06)} first investigated properties of the random partition induced by the ordered PYP, and introduced the notion of ordered EPPF. Generalizing the definition of EPPF, the ordered EPPF is defined as the probability of any ordered partition of $[n]$ with $k$ blocks of frequencies $(m_{1},\ldots,m_{k})$. Here the term \textit{ordered partition} refers to a partition of $[n]$, where the blocks are ordered in accordance with the weights $T_j$. \citet{Gne(05)} showed that the ordered PYP induces a random partition whose ordered EPPF is
\begin{equation}\label{eq_eppf_ord}
\Phi^{(n)}_{k}(m_{1},\ldots,m_{k})=\frac{\prod_{i=1}^{k}\frac{\theta m_{i}+\alpha r_{i+1}}{r_{i}}}{(\theta)_{(n)}}\prod_{i=1}^{k}(1-\alpha)_{(m_{i}-1)}.
\end{equation}
See also \citet{Gne(10)}, and references therein, for a comprehensive account on ordered EPPFs and generalizations thereof. Note that the EPPF \eqref{eq_eppf} can be recovered from the ordered EPPF \eqref{eq_eppf_ord} by summing over the set $S_{k}$ of all possible permutations of the $k$ blocks, that is
\begin{equation}\label{eq:interpl}
\Pi_{n}^{(k)}(m_{1},\ldots,m_{k})=\sum_{\pi\in S_{k}}\Phi^{(n)}_{k}(m_{\pi(1)},\ldots,m_{\pi(k)}).
\end{equation}
See Section S1.1 of the Supplementary Material for details on Equation \eqref{eq:interpl}. The distribution of the age-ordered partition of \citet{Don(86)} arises from \eqref{eq_eppf_ord} by setting $\alpha=0$, where species' labels are ordered according to weights $T_{i}$'s that are interpreted as the ages of alleles. Another special case of the ordered EPPF \eqref{eq_eppf_ord} is obtained by setting $\alpha\in(0,1)$ and $\theta=0$. See \citet{Fav(16)} and references therein for details.

By applying the ordered EPPF \eqref{eq_eppf_ord}, one may compute the probability $P_{n}(i;\alpha,\theta)$ that a species with frequency $i$ has species' label of order $1$, i.e. the species' label corresponding to the largest weight $T^{\ast}_{1}$. For $\alpha=0$, \citet{Don(86)} computed such a probability, showing that it is independent of $\theta$ and also an increasing (linear) function of $i$, i.e. 
\begin{equation}\label{oldest_dp}
P_{n}(i;0,\theta)=\frac{i}{n}.
\end{equation}
Within the population genetic setting of \citet{Don(86)}, Equation \eqref{oldest_dp} shows that the most frequent allele is the oldest allele. In general, for any $\alpha\in[0,1)$ and $\theta>0$ it holds
\begin{equation}\label{eq:Pn}
P_{n}(i;\alpha,\theta)=\frac{\alpha n +i (\theta-\alpha)}{n}\E \left[ \frac{1}{\theta +\alpha K_{n-i}}  \right],
\end{equation}
where $K_{n-i}$ is the number of distinct species in $(n-i)$ random samples for the ordered PYP $Q$, with the proviso $K_{0}=0$ \citep[Chapter 3]{Pit(06)}. See Section~S1.2 of the Supplementary Material for the proof of Equation \eqref{eq:Pn}. It is easy to show that \eqref{eq:Pn} reduces to the probability $\eqref{oldest_dp}$ for $\alpha=0$.  The comparison between \eqref{eq:Pn} and \eqref{oldest_dp} is critical, as it highlights the increased flexibility of the ordered PYP compared to the ordered DP ($\alpha=0$). In fact, differently from the probability \eqref{oldest_dp}, the probability \eqref{eq:Pn} depends on $(\alpha,\theta)$ and, most importantly, it is no more an increasing (linear) function of $i$. Figure \ref{fig:grafico_i} shows that the probability \eqref{eq:Pn} may increase or decrease in $i$ according to the value of $(\alpha,\theta)$; for instance, for $\alpha\in(0,1)$ and $\theta=0$ the probability \eqref{eq:Pn} is the product of a term decreasing in $i$ and one increasing in $i$. The non-increasing behavior of $P_n(i;\alpha,\theta)$ for $\alpha > \theta$ and for $\theta = 0$ is depicted in Figure~S1 of the Supplementary Material.

\begin{figure}[h!]
  \begin{center}
    \subfigure[]{\includegraphics[width=0.48\linewidth]{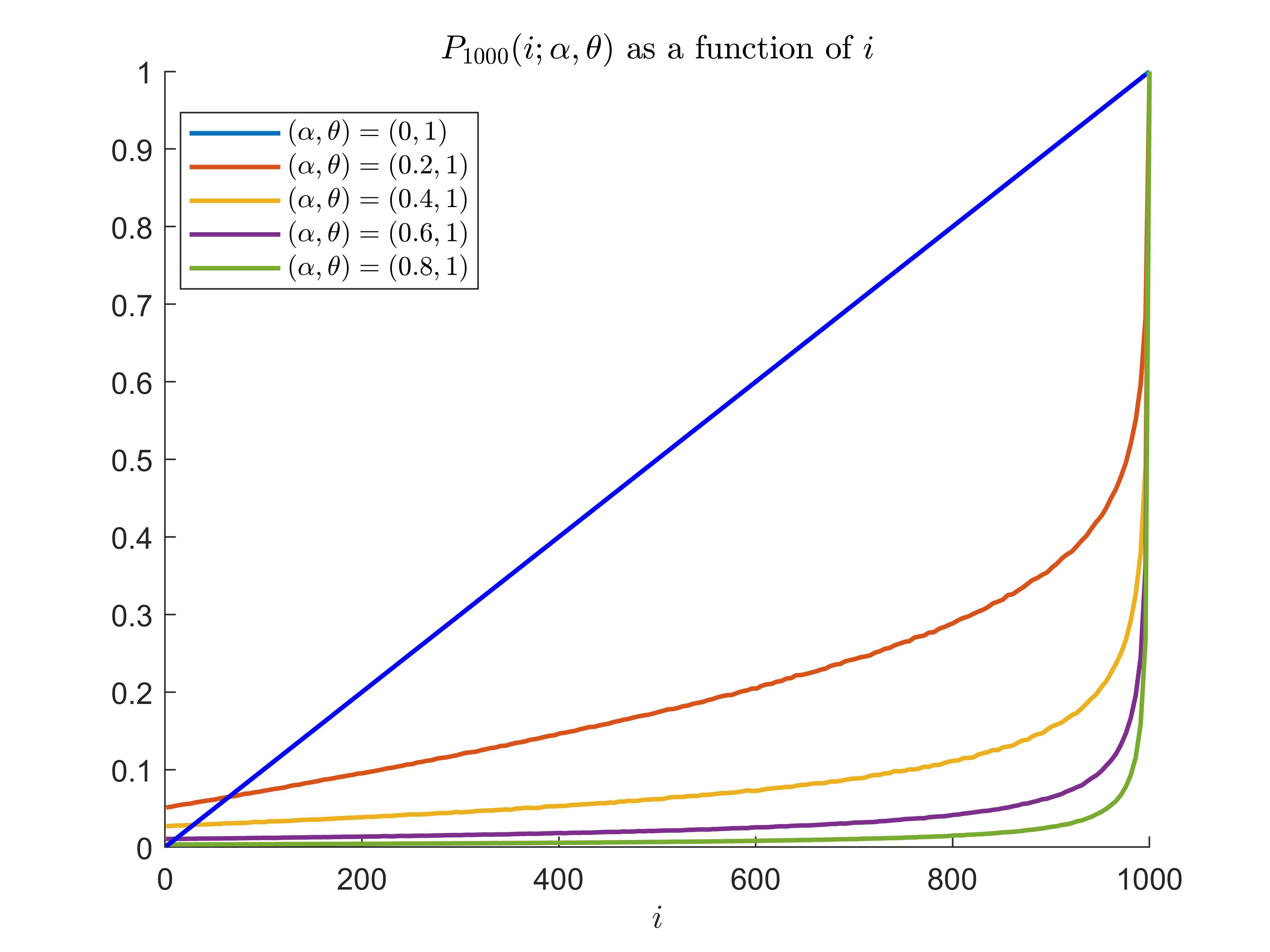} \label{fig:theta1_grafico_i}}
    \subfigure[]{\includegraphics[width=0.48\linewidth]{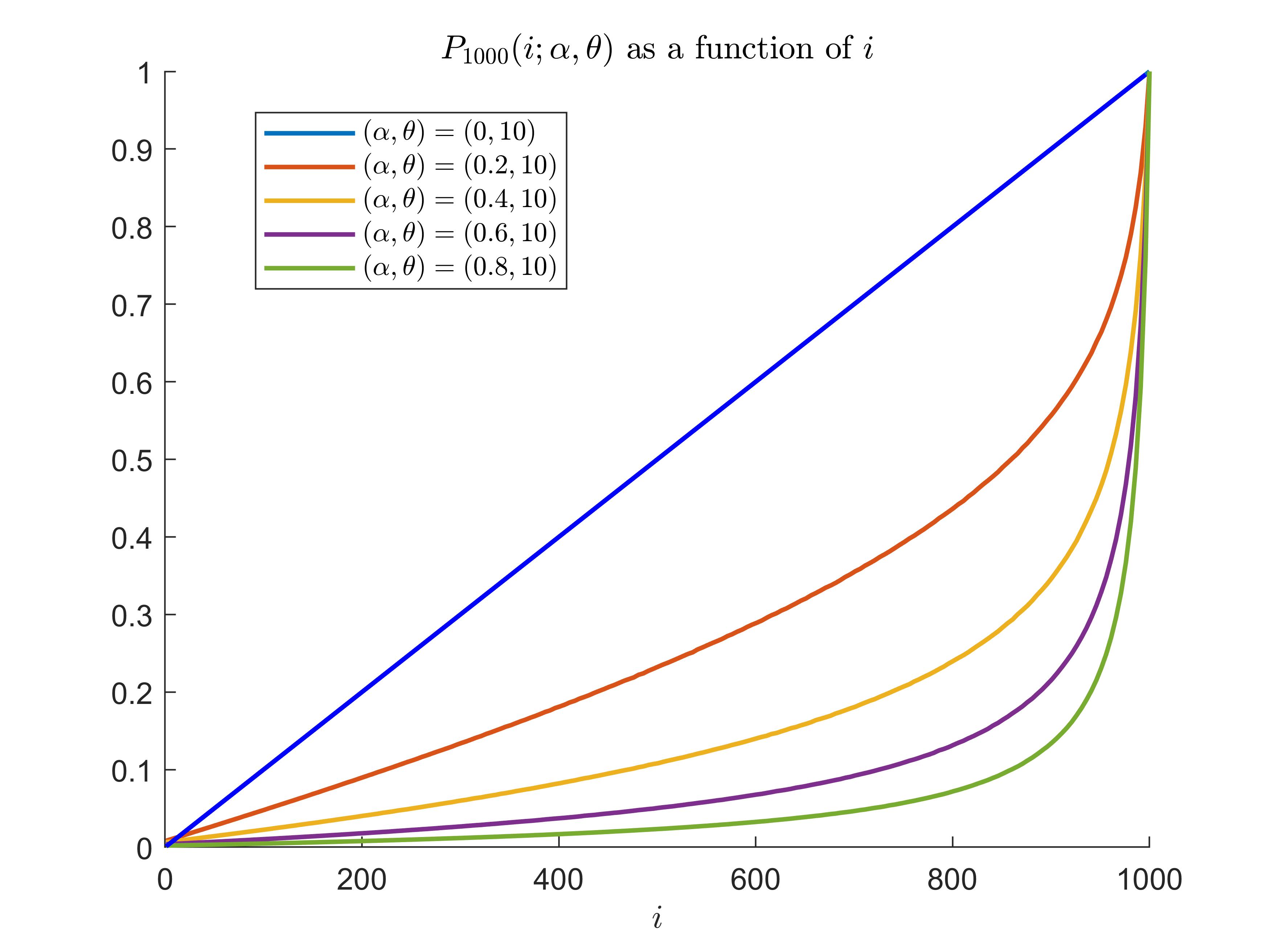} \label{fig:theta10_grafico_i}}\\
    \subfigure[]{\includegraphics[width=0.48\linewidth]{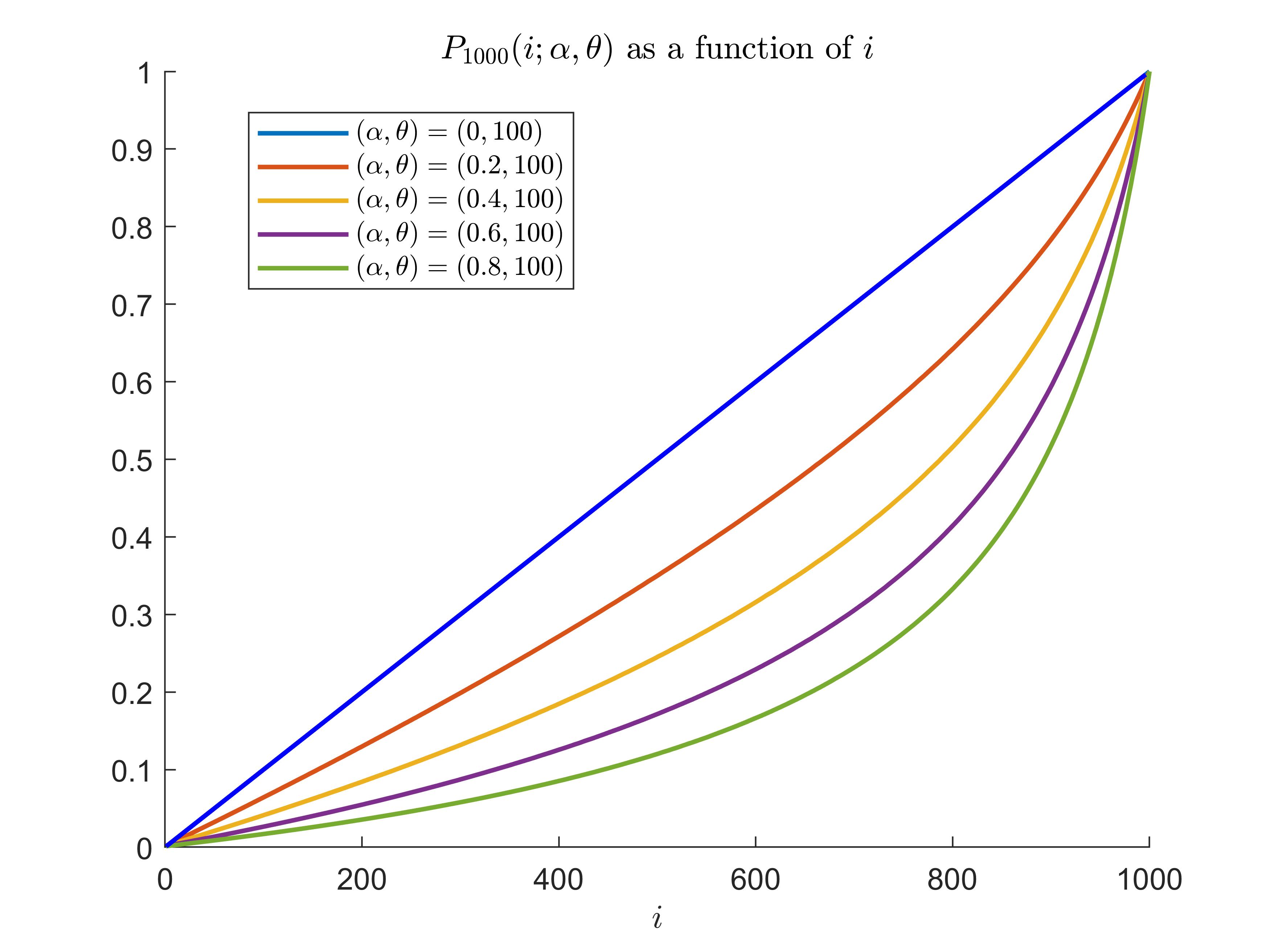} \label{fig:theta100_grafico_i}}
    \subfigure[]{\includegraphics[width=0.48\linewidth]{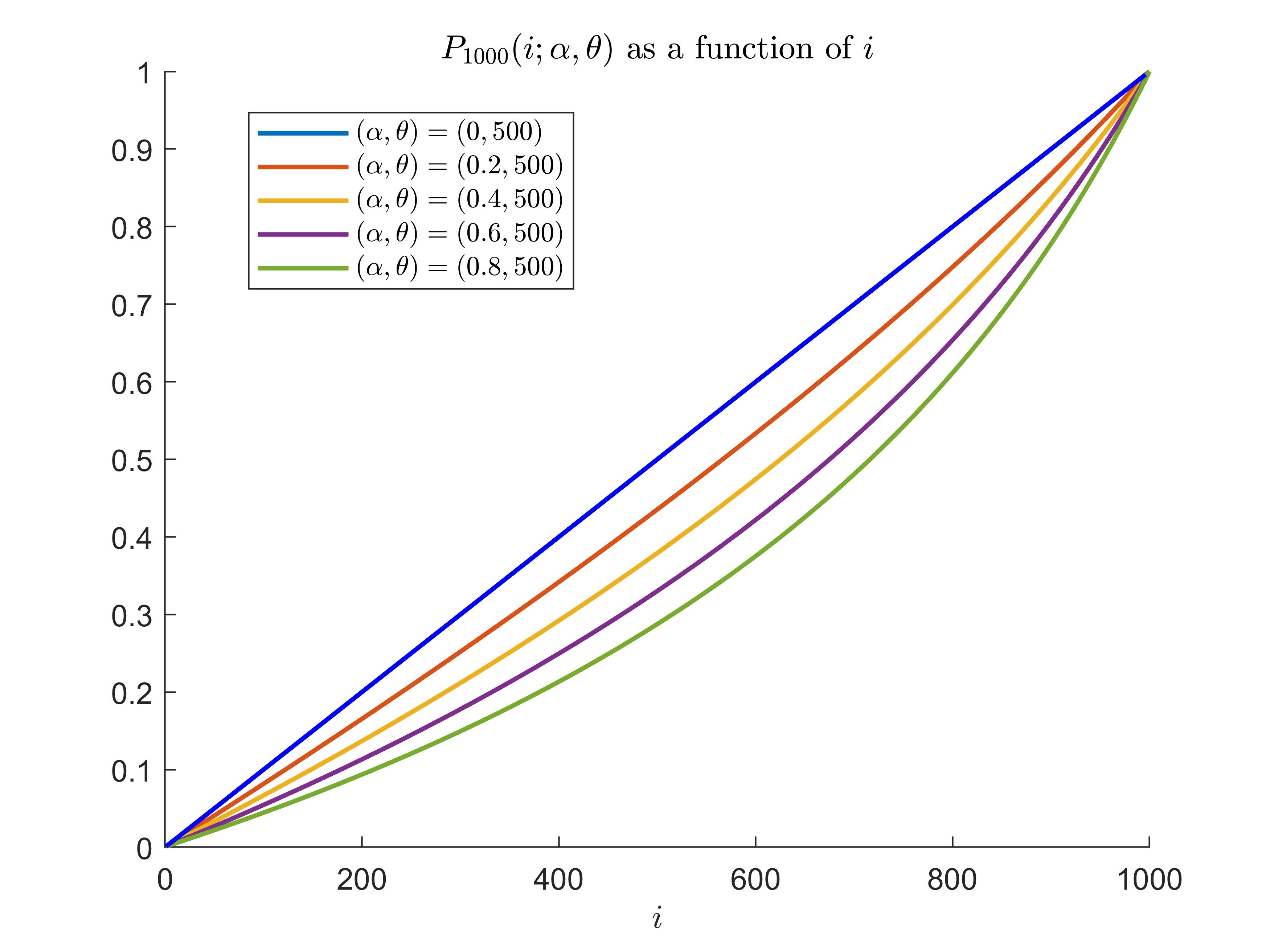} \label{fig:theta500_grafico_i}}\\
  \end{center}
  \begin{flushleft} 
    \caption{The probability $P_n(i ; \alpha, \theta)$, where $n=1000$, as a function of the frequency $i$, for different values of 
    $\alpha$ and $\theta$. Each panel corresponds to a different $\theta$, from top left to bottom right, we have: $\theta=1,10, 100, 500$.}
    \label{fig:grafico_i}
  \end{flushleft}
\end{figure}

To conclude, it is worth mentioning the construction of the order PYP $Q$ that induces the predictive distributions \eqref{eq:prednew} and \eqref{eq:predold}, though we will not make use of such a construction in the paper. The ordered PYP belongs to the class of spatial neutral to the right SSMs defined by \cite{Jam(06)}, and hence it is defined as a discrete random probability measure on the product space $\Scr= \R^+ \times \X$, from which the observations $(T_i, X_i)$, as $i =1, \ldots , n$, are sampled. To formalize such a definition, we consider a marked Poisson process $N$ \citep{Kin(93)} on the space $[0,1]\times \Scr$ with mean intensity given  by 
\[
\nu (\de u , \de s , \de x) := \rho (\de u | s) \Lambda_0(\de s , \de x)
\]
where $\rho$ is a L\'evy density, while $\Lambda_0$ is a hazard measure on the space $\Scr$. Thus, one may define a functional of the Poisson process $N$ as follows  $\Lambda (\de s, \de x) = \int_0^1 u N (\de u, \de s, \de x)$, which turns out to be a completely random measure \citep{Dal(08)}; $\Lambda$ represents a hazard measure in the framework of survival analysis. Now, define the survival function associated with $\Lambda$ as $-\log (S(t-)):= \int_{[0,1]\times \Scr} [-\indic_{\{s < t\}} \log (1-u)] N (\de u, \de s, \de x)$.
Then, a spatial neutral to the right random probability measure equals $Q(\de t, \de x) := S(t-) \Lambda (\de t, \de x)$. By choosing $\rho$ as in \citep[Section 6.2]{Jam(06)}, the law of the resulting $Q$ is the de Finetti measure associated with the prediction rules \eqref{eq:prednew}--\eqref{eq:predold}. 
Note that in this construction the $T_i$'s are considered as times, but they can be seen more generally as weights inducing an order, making the model more widely applicable.
See \cite{Jam(06)} for general properties of the ordered PYP $Q$, including the posterior distribution.


\section{BNP inference for the first $r$ order frequencies}\label{sec3}

In analogy with SSPs, SSPs with ordering assume $n\geq1$ observable samples from a population of individuals, with each individual taking a value in a (possibly infinite) discrete space of symbols, and then consider $m\geq1$ additional unobservable sample from the same population. The critical difference between SSPs and SSPs with ordering lies in the definition of the (discrete) functional of interest: while in SSPs such a functional is ``invariant" with respect to species ordering and deals with the species' composition of the additional samples, in SSPs with ordering the functional is not ``invariant" with respect to species ordering and deals with the species' composition of both the additional samples and the enlarged sample. The estimation of the first $r$ order frequencies is arguably the most natural example of SSPs with ordering. Assuming $n$ observable samples to be modeled as a random sample $((T_{1},X_{1}),\ldots,(T_{n},X_{n}))$ from the ordered PYP $Q$:
\begin{align}\label{ex_model_ord}
(T_{i},X_i)\,|\,Q & \quad\simiid\quad Q \qquad i=1,\ldots,n,
\end{align}
i.e., the observations are updated according to the predictive laws \eqref{eq:prednew}-\eqref{eq:predold}. We introduce a BNP approach to estimate the frequencies of the first $r$ order species in an enlarged sample obtained by collecting $m$ additional samples $((T_{n+1},X_{n+1}),\ldots,(T_{n+m},X_{n+m}))$ from the same $Q$.

\subsection{Posterior distributions for the first $r$ order frequencies}

We start by introducing a marginal distribution related to the random partition induced by an ordered PYP $Q$, with parameters $(\alpha,\theta)$. For any $n\geq1$ let $((T_{1},X_{1}),\ldots,(T_{n},X_{n}))$ be a random sample under the BNP model \eqref{ex_model_ord}, such that the sample features $K_{n}=k$ distinct species with ordered frequencies $\mathbf{M}_{n}=\mathbf{m}$. Hereinafter, for the sake of simplicity in notation, we denote by $|\mathbf{m}|_{1:r}$ the sum of the first $r$ elements of $\mathbf{m}$, i.e. $|\mathbf{m}|_{1:r}=\sum_{1\leq j\leq r}m_{j}$, with the proviso $|\mathbf{m}|_{1:0}=0$. For any index $r \in \{  1, \ldots , n\}$ such that $ r \leq |\mathbf{m}|_{1:r} \leq n-k+r$, if we set
\begin{displaymath}
C_{r,n}(\alpha,\theta,\mathbf{m})=\prod_{j=1}^{r}\frac{  [\alpha (n-|\mathbf{m}|_{1:j} )+ \theta m_j]}{n-|\mathbf{m}|_{1:j-1}}(1-\alpha)_{(m_j -1)}
\end{displaymath}
then 
\begin{align} \label{eq:dist_a_priori}
& \text{Pr}[M_{1,n} = m_1 , \ldots , M_{r,n} = m_r ,  K_n \geq r]\\
&\notag\quad= {n\choose m_{1},\ldots,m_{r},n-|\mathbf{m}_{1:r}|}\frac{C_{r,n}(\alpha,\theta,\mathbf{m})}{(\theta+n-|\mathbf{m}|_{1:r})_{|\mathbf{m}|_{1:r}}}.
\end{align}
See Section~S1.4 of the Supplementary Material for the proof of Equation \eqref{eq:dist_a_priori}. Equation \eqref{eq:dist_a_priori} generalizes \citet[Proposition 6.1]{Don(86)}, which is recovered from \eqref{eq:dist_a_priori} by letting $\alpha \to 0$. For $r=1$, Equation \eqref{eq:dist_a_priori} provides the distribution of first order frequency, i.e.
\begin{equation} \label{eq_old}
\text{Pr}[M_{1,n}= m_1] ={n\choose m_{1}} \frac{\alpha (n-m_1) +\theta m_1}{n(\theta+n-m_{1})_{(m_{1})}}(1-\alpha)_{(m_1-1)} .
\end{equation}
Within the population genetic setting of \cite{Don(86)}, Equation \eqref{eq_old} with $\alpha=0$ provides the distribution of the frequency of the oldest allele \citep{Kel(77),Wat(77)}.

Now, we can state our main results on the posterior distribution of the first $r$ order frequencies. Let $((T_{1},X_{1}),\ldots,(T_{n},X_{n}))$ be a random sample from the ordered PYP $Q$, and let $((T_{n+1},X_{n+1}),\ldots,(T_{n+m},X_{n+m}))$ be an additional random sample from the same ordered PYP $Q$. Moreover, we denote by $K_m^{(n)}$ the number of distinct species in the sample $((T_{n+1},X_{n+1}),\ldots,(T_{n+m},X_{n+m}))$ that are not in  $((T_{1},X_{1}),\ldots,(T_{n},X_{n}))$, i.e. $K_{m}^{(n)}=K_{n+m}-K_{n}$, and we denote by $W_{i, n+m}$ the frequency of the specie's label of order $i$ in the enlarged sample $((T_{1},X_{1}),\ldots,(T_{n+m},X_{n+m}))$, for $i = 1, \ldots, K_{n+m}$. To determine the distribution of the ordered frequencies $\mathbf{W}_{n+m}=(W_{1, n+m},\ldots , W_{K_{n+m}, n+m})$, it is useful to set
\begin{equation}\label{def_a}
A_{r}= \{ \text{species' labels with order }  1, \ldots , r \text{ are new} \},
\end{equation}
i.e., the event that the species' labels with higher weights have not been recorded in the first sample, and 
\begin{equation}\label{def_b}
B_{r}= \{ \text{species' labels with order }  1, \ldots , r \text{ are old} \},
\end{equation}
i.e. the event that the observations with higher weights have been recorded in the initial sample. 

\begin{theorem} \label{prp:posterior}
Let $((T_{1},X_{1}),\ldots,(T_{n},X_{n}))$ be a random sample under the BNP model \eqref{ex_model_ord}, such that the sample features $K_{n}=k$ distinct species with ordered frequencies $\mathbf{M}_{n}=\mathbf{m}$. Let $((T_{n+1},X_{n+1}),\ldots,(T_{n+m},X_{n+m}))$ be an additional random sample under the same BNP model \eqref{ex_model_ord} such that the enlarged sample $((T_{1},X_{1}),\ldots,(T_{n+m},X_{n+m}))$ features $K_{m+n}$ distinct species with corresponding ordered frequencies $\mathbf{W}_{n+m}$, and set $K_{m}^{(n)}=K_{n+m}-K_{n}$. If $A_{r}$ and $B_{r}$ are the events defined in \eqref{def_a} and \eqref{def_b}, respectively, then it holds:
\begin{itemize}
\item[i)] for $r \in \{ 1, \ldots , n+m\}$ such that $r \leq |\mathbf{w}|_{1:r} \leq m$,
\begin{align} \label{eq:dist_a_posteriori}
& \text{Pr} [A_r, W_{1,n+m} = w_1 , \ldots , W_{r,n+m} = w_r ,  K_m^{(n)} \geq r\, |\, K_n=k , \mathbf{M}_{n}=\mathbf{m}]\\
&\notag \quad ={m\choose w_{1},\ldots,w_{r},m-|\mathbf{w}|_{1:r}}\frac{C_{r,n+m}(\alpha,\theta,\mathbf{w})}{(\theta+n+m-|\mathbf{w}|_{1:r})_{(|\mathbf{w}|_{1:r})}};
\end{align}
\item[ii)] for $r \in \{  1, \ldots , k\}$ such that $0 \leq |\mathbf{w}|_{1:r} \leq m$,
\begin{align} \label{eq:dist_a_posteriori_old}
& \text{Pr}[B_r, W_{1,n+m} = w_1+m_1 , \ldots , W_{r,n+m} = w_r+m_r  \,|\, K_n=k , \mathbf{M}_{n}=\mathbf{m}]\\
&\notag\quad\times{m\choose w_{1},\ldots,w_{r},m-|\mathbf{w}|_{1:r}}\frac{\frac{C_{r,n+m}(\alpha,\theta,\mathbf{w}+\mathbf{m})}{(\theta+n+m-|\mathbf{w}+\mathbf{m}|_{1:r})_{(|\mathbf{w}+\mathbf{m}|_{1:r})}}}{\frac{C_{r,n}(\alpha,\theta,\mathbf{m})}{(\theta+n-|\mathbf{m}|_{1:r})_{(|\mathbf{m}|_{1:r})}}}
\end{align}
\end{itemize}
\end{theorem}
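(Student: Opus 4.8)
The plan is to prove both identities by the same two‑step scheme: a Bayes'‑rule reduction to a ratio of ordered set–partition probabilities, followed by a ``peeling'' decomposition of the ordered EPPF \eqref{eq_eppf_ord} combined with the sampling consistency of the ordered PYP. Throughout, $\Phi^{(n)}_{k}(m_{1},\ldots,m_{k})$ is read as the probability of one fixed ordered set partition of $[n]$ whose blocks, listed in decreasing weight order, have sizes $m_{1},\ldots,m_{k}$; since the $(T_{i},X_{i})$ are i.i.d.\ given $Q$, all $\binom{n}{m_{1},\ldots,m_{k}}$ such partitions are equiprobable, so $\text{Pr}[K_{n}=k,\mathbf{M}_{n}=\mathbf{m}]=\binom{n}{m_{1},\ldots,m_{k}}\Phi^{(n)}_{k}(\mathbf{m})$, which serves as the denominator in Bayes' rule in both parts. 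The single algebraic fact used repeatedly is the \emph{peeling identity}: directly from the product form \eqref{eq_eppf_ord}, because the factor $(\theta m_{i}+\alpha r_{i+1})/r_{i}$ attached to block $i$ depends only on the sizes of blocks $i,i+1,\ldots,k$, deleting the $j$ highest‑weight blocks factorises
\[
\Phi^{(N)}_{k}(a_{1},\ldots,a_{k})=\frac{(\theta)_{(N-|\mathbf{a}|_{1:j})}}{(\theta)_{(N)}}\,C_{j,N}(\alpha,\theta,\mathbf{a})\,\Phi^{(N-|\mathbf{a}|_{1:j})}_{k-j}(a_{j+1},\ldots,a_{k}).
\]
I will also use that, the samples being i.i.d.\ given $Q$, the ordered partition induced by any $n$ of $N$ samples has the law of the ordered partition of an $n$-sample, and that decreasing weight order restricted to a subsample coincides with the subsample's own weight order (so being ``old'' and the weight ranking of old species are intrinsic to the first sample).

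For part (i), I would write the left‑hand side of \eqref{eq:dist_a_posteriori} as a ratio and expand its numerator as the sum of $\Phi$ over all ordered set partitions $\pi$ of $[n+m]$ compatible with the event: the $r$ highest‑weight blocks of $\pi$ are contained in $\{n+1,\ldots,n+m\}$ with weight‑ordered sizes $w_{1},\ldots,w_{r}$ — this is exactly $A_{r}$ of \eqref{def_a}, and it forces $K_{m}^{(n)}\ge r$ — while $\pi$ restricted to $[n]$ has $k$ blocks with ordered sizes $\mathbf{m}$. Applying the peeling identity with $j=r$, $N=n+m$, $\mathbf{a}=\mathbf{w}$ pulls out the prefix factor $\tfrac{(\theta)_{(n+m-|\mathbf{w}|_{1:r})}}{(\theta)_{(n+m)}}C_{r,n+m}(\alpha,\theta,\mathbf{w})$; a multinomial coefficient $\binom{m}{w_{1},\ldots,w_{r},m-|\mathbf{w}|_{1:r}}$ counts which of the $m$ second‑sample items populate these $r$ blocks; and the residual sum of $\Phi$ over the ``bottom'' ordered partition of the remaining $n+m-|\mathbf{w}|_{1:r}$ items, constrained only on the fixed $[n]$-subset, collapses by sampling consistency to $\text{Pr}[K_{n}=k,\mathbf{M}_{n}=\mathbf{m}]$, i.e.\ the denominator. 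Cancelling and using $(\theta)_{(n+m)}=(\theta)_{(n+m-|\mathbf{w}|_{1:r})}(\theta+n+m-|\mathbf{w}|_{1:r})_{(|\mathbf{w}|_{1:r})}$ yields \eqref{eq:dist_a_posteriori}.

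Part (ii) follows the same route, with the extra observation that under $B_{r}$ of \eqref{def_b} the $r$ highest‑weight blocks of the enlarged sample must be precisely the $r$ highest‑weight blocks of the first sample (weight order among old species is preserved), so block $j$, $j\le r$, has total size $m_{j}+w_{j}$, of which $m_{j}$ items lie in $[n]$ and $w_{j}\ge 0$ come from the second sample. Peeling off these $r$ blocks — now $\mathbf{a}=\mathbf{w}+\mathbf{m}$ — produces the factor $\tfrac{(\theta)_{(n+m-|\mathbf{w}+\mathbf{m}|_{1:r})}}{(\theta)_{(n+m)}}C_{r,n+m}(\alpha,\theta,\mathbf{w}+\mathbf{m})$, a multinomial $\binom{m}{w_{1},\ldots,w_{r},m-|\mathbf{w}|_{1:r}}$, and a bottom sum which by sampling consistency reduces to $\binom{n-|\mathbf{m}|_{1:r}}{m_{r+1},\ldots,m_{k}}\Phi^{(n-|\mathbf{m}|_{1:r})}_{k-r}(m_{r+1},\ldots,m_{k})$. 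Finally, applying the peeling identity to the \emph{first} sample rewrites $\Phi^{(n-|\mathbf{m}|_{1:r})}_{k-r}(m_{r+1},\ldots,m_{k})$ as $\tfrac{(\theta)_{(n)}}{(\theta)_{(n-|\mathbf{m}|_{1:r})}}C_{r,n}(\alpha,\theta,\mathbf{m})^{-1}\Phi^{(n)}_{k}(\mathbf{m})$; dividing by the denominator $\binom{n}{m_{1},\ldots,m_{k}}\Phi^{(n)}_{k}(\mathbf{m})$, the $\Phi^{(n)}_{k}(\mathbf{m})$'s and the factorials cancel, and regrouping the $(\theta)_{(\cdot)}$'s into $(\theta+n-|\mathbf{m}|_{1:r})_{(|\mathbf{m}|_{1:r})}$ and $(\theta+n+m-|\mathbf{w}+\mathbf{m}|_{1:r})_{(|\mathbf{w}+\mathbf{m}|_{1:r})}$ gives exactly \eqref{eq:dist_a_posteriori_old}.

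I expect the main obstacle to be the decoupling of the ``bottom'' sum in each part: one must check carefully that, once the $r$ top‑weight blocks are stripped off, the residual sum over configurations of the remaining items — constrained only on the fixed $[n]$-subset — is genuinely an unconditioned ordered set–partition probability for a smaller sample, which is where sampling consistency and the intrinsic nature of weight order on subsamples are essential; the remainder is bookkeeping of multinomial coefficients and routine Pochhammer manipulation. A secondary point requiring care, specific to part (ii), is the identification under $B_{r}$ of the $r$ top‑weight species of the enlarged sample with those of the initial sample, so that the peeled block sizes are correctly $m_{j}+w_{j}$.
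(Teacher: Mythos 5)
Your route is, in substance, the paper's own computation repackaged: the paper also reduces to a ratio of joint probabilities, pulls out exactly the first $r$ factors of the ordered EPPF product (your ``peeling identity'' is precisely the factorization producing $C_{r,n+m}$ over the corresponding Pochhammer ratio), and collapses the residual sum by recognizing it as a probability for a smaller sample; the only real difference is that the paper stratifies by the number $s$ of new species and invokes the laws \eqref{eq:Kn} and \eqref{eq:Kmn} before summing a conditional distribution to one, whereas you absorb that summation in one stroke through projectivity of the ordered partition under subsampling (correctly noting that the weight ranking is intrinsic to any subsample). Part (i) of your sketch is correct as written: under $A_r$ the peeled top blocks contain no first-sample items, so the bottom sum is constrained only on $[n]$, collapses to $\mathrm{Pr}[K_n=k,\mathbf{M}_n=\mathbf{m}]$, and cancels the Bayes denominator, giving \eqref{eq:dist_a_posteriori}.

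In part (ii), however, the bookkeeping as written does not close: with the factors you list, the claim ``the factorials cancel'' fails by exactly ${n\choose m_{1},\ldots,m_{r},\,n-|\mathbf{m}|_{1:r}}$. The conditioning event $\{K_n=k,\mathbf{M}_n=\mathbf{m}\}$ fixes only block sizes, so in the numerator you must also sum over which of the $n$ initial observations constitute the $r$ top-weight blocks (of sizes $m_1,\ldots,m_r$) as opposed to the remaining $n-|\mathbf{m}|_{1:r}$ observations; this contributes the multinomial ${n\choose m_{1},\ldots,m_{r},\,n-|\mathbf{m}|_{1:r}}$, which together with the ${n-|\mathbf{m}|_{1:r}\choose m_{r+1},\ldots,m_{k}}$ coming out of your bottom sum rebuilds ${n\choose m_{1},\ldots,m_{k}}$ and only then cancels the denominator ${n\choose m_{1},\ldots,m_{k}}\Phi^{(n)}_{k}(\mathbf{m})$. (Equivalently, you could condition on the realized first-sample ordered set partition, in which case the denominator is $\Phi^{(n)}_{k}(\mathbf{m})$ alone and the bottom sum is $\Phi^{(n-|\mathbf{m}|_{1:r})}_{k-r}(m_{r+1},\ldots,m_{k})$ with no multinomial; but your sketch mixes the two conventions, keeping the size-only denominator while omitting the top-block allocation count.) Once this factor is restored, your peeling of the first sample, $\Phi^{(n)}_{k}(\mathbf{m})=\frac{(\theta)_{(n-|\mathbf{m}|_{1:r})}}{(\theta)_{(n)}}C_{r,n}(\alpha,\theta,\mathbf{m})\,\Phi^{(n-|\mathbf{m}|_{1:r})}_{k-r}(m_{r+1},\ldots,m_{k})$, does deliver \eqref{eq:dist_a_posteriori_old}, so the slip is local and the plan otherwise goes through; the identification under $B_r$ of the top $r$ enlarged-sample species with the top $r$ first-sample species, which you single out as delicate, is indeed valid because the species weights are intrinsic.
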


See Section~S1.5 of the Supplementary Material for the proof of Theorem \ref{prp:posterior}. Theorem \ref{prp:posterior} may be viewed as the posterior counterpart of Equation \eqref{eq:dist_a_priori}, with respect to an initial observable sample $((T_{1},X_{1}),\ldots,(T_{n},X_{n}))$. In particular, Equation \eqref{eq:dist_a_posteriori} and Equation \eqref{eq:dist_a_posteriori_old} provide two posterior distributions of the first $r$ order frequencies under the events $A_{r}$ and $B_{r}$, respectively, for $r\geq1$. Equation \eqref{eq:dist_a_posteriori} provides the posterior distribution of the first $r$ order frequencies having species' labels not belonging to the additional observable samples; that is the ordering of species' labels in the initial sample $((T_{1},X_{1}),\ldots,(T_{n},X_{n}))$ is changed according to the additional sample $((T_{n+1},X_{n+1}),\ldots,(T_{n+m},X_{n+m}))$. Equation \eqref{eq:dist_a_posteriori_old} provides the posterior distribution of the first $r$ order frequencies having species' labels belonging to the additional observable samples; that is the ordering of species' labels in the initial sample $((T_{1},X_{1}),\ldots,(T_{n},X_{n}))$ is not changed according to the additional sample $((T_{n+1},X_{n+1}),\ldots,(T_{n+m},X_{n+m}))$. BNP estimators of the first $r$ order frequencies, with respect to a squared loss function, are obtained in terms of posterior expectations, i.e. the vectors of expected values with respect to the posterior distributions \eqref{eq:dist_a_posteriori} and \eqref{eq:dist_a_posteriori_old}. As a corollary of Theorem \ref{prp:posterior}, we obtain the posterior distributions of the frequency of order $1$.

\begin{corollary} \label{cor:oldest_species_posterior}
Let $((T_{1},X_{1}),\ldots,(T_{n},X_{n}))$ be a random sample under the BNP model \eqref{ex_model_ord}, such that the sample features $K_{n}=k$ distinct species with ordered frequencies $\mathbf{M}_{n}=\mathbf{m}$. Let $((T_{n+1},X_{n+1}),\ldots,(T_{n+m},X_{n+m}))$ be an additional random sample under the same BNP model \eqref{ex_model_ord} such that the enlarged sample $((T_{1},X_{1}),\ldots,(T_{n+m},X_{n+m}))$ features $K_{m+n}$ distinct species with corresponding ordered frequencies $\mathbf{W}_{n+m}$, and set $K_{m}^{(n)}=K_{n+m}-K_{n}$. If $A_{1}$ and $B_{1}$ are the events defined in \eqref{def_a} and \eqref{def_b}, respectively, then it holds:
\begin{itemize}
\item[i)]
\begin{align} \label{eq_old_posterior}
&\text{Pr}[A_1, W_{1,n+m}= w_1 , K_m^{(n)} \geq 1\,|\, K_n=k , \mathbf{M}_{n}=\mathbf{m}] \\
& \notag\quad = \binom{m}{w_1}\frac{\alpha (n+m-w_1) +\theta w_1 }{(n+m)(\theta+n+m-w_{1})_{(w_{1})}}(1-\alpha)_{(w_1-1)}; 
\end{align}
\item[ii)]
\begin{align} \label{eq_posterior_r1_old_simplified}
&\text{Pr}[B_1, W_{1,n+m}= w_1+m_1 \,|\, K_n=k , \mathbf{M}_{n}=\mathbf{m}]\\
&\notag\quad = \binom{m}{w_1}\frac{\frac{\alpha (n+m-w_1-m_1) +\theta (w_1+m_1)}{(n+m)(\theta+n+m-w_{1}-m_{1})_{w_{1}+m_{1}}}}{\frac{\alpha(n-m_1)+\theta m_1}{n(\theta+n-m_{1})_{(m_{1})}}}(m_1-\alpha)_{(w_1)};
\end{align}
\item[iii)]
\begin{align} \label{eq:W1_distribution}
&\text{Pr}[ W_{1,n+m}= w \,|\, K_n=k ,\mathbf{M}_{n}=\mathbf{m}]\\
& \notag\quad = \frac{[\alpha (n+m-w)+\theta w](1-\alpha)_{(w-1)}}{(n+m)(\theta+n+m-w)_{(w)}}\\
& \notag\quad\quad  \times \Big[ \indic_{\{ 1, \ldots , m\}} (w) \binom{m}{w}  + \indic_{\{ m_1, \ldots , m_1+m \}}(w)  \binom{m}{w-m_1}  \frac{n(\theta+n-m_{1})_{(m_{1})}}{[\alpha (n-m_1)+\theta m_1](1-\alpha)_{(m_1-1)}} \Big].
\end{align}
\end{itemize}
\end{corollary}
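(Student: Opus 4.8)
The plan is to obtain parts (i) and (ii) as the case $r=1$ of Theorem~\ref{prp:posterior}, and then to deduce part (iii) from them by a total-probability argument.

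For part (i) I would set $r=1$ in \eqref{eq:dist_a_posteriori}. Since $|\mathbf{w}|_{1:0}=0$ and $|\mathbf{w}|_{1:1}=w_1$, the factor $C_{1,n+m}(\alpha,\theta,\mathbf{w})$ collapses to $\frac{\alpha(n+m-w_1)+\theta w_1}{n+m}\,(1-\alpha)_{(w_1-1)}$, the multinomial coefficient reduces to $\binom{m}{w_1}$, and the rising factorial in the denominator becomes $(\theta+n+m-w_1)_{(w_1)}$; substituting these three simplifications into \eqref{eq:dist_a_posteriori} yields \eqref{eq_old_posterior}, and the admissibility condition $r\le|\mathbf{w}|_{1:r}\le m$ becomes $1\le w_1\le m$.

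For part (ii) I would likewise set $r=1$ in \eqref{eq:dist_a_posteriori_old}, writing out $C_{1,n+m}(\alpha,\theta,\mathbf{w}+\mathbf{m})$ and $C_{1,n}(\alpha,\theta,\mathbf{m})$ explicitly. After cancelling the factors $(n+m)$ and $n$, the only step needing a computation is the ratio of rising factorials: from the definition $(a)_{(r)}=\prod_{0\le i\le r-1}(a+i)$ one gets $(1-\alpha)_{(w_1+m_1-1)}/(1-\alpha)_{(m_1-1)}=(m_1-\alpha)_{(w_1)}$, which converts \eqref{eq:dist_a_posteriori_old} into \eqref{eq_posterior_r1_old_simplified}, now valid for $0\le w_1\le m$.

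For part (iii) the structural point is that the events $A_1$ and $B_1$ partition the sample space: the species carrying the order-$1$ label in the enlarged sample is the one of globally largest weight, hence it is either absent from (event $A_1$) or present in (event $B_1$) the first $n$ observations; moreover, on $A_1$ that species is itself a newly observed species, so $A_1\subseteq\{K_m^{(n)}\ge1\}$ and the event $\{K_m^{(n)}\ge1\}$ appearing in \eqref{eq_old_posterior} is automatic there. By the law of total probability, $\text{Pr}[W_{1,n+m}=w\,|\,K_n=k,\mathbf{M}_n=\mathbf{m}]$ is the sum of the right-hand side of \eqref{eq_old_posterior} taken at $w_1=w$, which is supported on $w\in\{1,\dots,m\}$ because on $A_1$ all copies of the order-$1$ species lie among the $m$ additional draws, and the right-hand side of \eqref{eq_posterior_r1_old_simplified} taken at $w_1=w-m_1$, which is supported on $w\in\{m_1,\dots,m_1+m\}$. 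Using once more $(m_1-\alpha)_{(w-m_1)}=(1-\alpha)_{(w-1)}/(1-\alpha)_{(m_1-1)}$ to extract the common prefactor $\frac{[\alpha(n+m-w)+\theta w](1-\alpha)_{(w-1)}}{(n+m)(\theta+n+m-w)_{(w)}}$ from both summands leaves exactly the bracketed expression of \eqref{eq:W1_distribution}, the two indicators recording the two ranges of $w$. I expect the main obstacle to be precisely this last bookkeeping of Pochhammer symbols, together with the conceptually simple but worth-stating-carefully claim that $A_1,B_1$ partition and that $\{K_m^{(n)}\ge1\}$ is redundant on $A_1$; everything else is routine substitution.
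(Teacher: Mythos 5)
Your proposal is correct and follows exactly the paper's route: parts (i) and (ii) are the $r=1$ specialization of Theorem~\ref{prp:posterior} (with the Pochhammer identity $(1-\alpha)_{(w_1+m_1-1)}/(1-\alpha)_{(m_1-1)}=(m_1-\alpha)_{(w_1)}$ handling the simplification), and part (iii) is obtained by combining the two, your observations that $A_1,B_1$ partition the relevant event and that $\{K_m^{(n)}\geq 1\}$ is redundant on $A_1$ being the same (implicit) bookkeeping the paper relies on. No gaps.
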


The proofs of Equation \eqref{eq_old_posterior} and Equation \eqref{eq_posterior_r1_old_simplified} follow directly from Theorem \ref{prp:posterior} by setting $r=1$, whereas Equation \eqref{eq:W1_distribution} follows by combining \eqref{eq_old_posterior} and \eqref{eq_posterior_r1_old_simplified}. Within the population genetic setting of \cite{Don(86)}, Equation \eqref{eq:W1_distribution} with $\alpha=0$ provides the posterior distribution of the frequency of the oldest alleles. By exploiting Corollary \ref{cor:oldest_species_posterior}, BNP estimators of the frequency of order $1$, with respect to a squared loss function, are obtained in terms of the expected values of the posterior distributions \eqref{eq_old_posterior}, \eqref{eq_posterior_r1_old_simplified} and \eqref{eq:W1_distribution}. Here, we report the BNP estimator with respect to the posterior distribution \eqref{eq:W1_distribution} and we refer to Section~S1.6 of the Supplementary Material for the BNP estimators with respect to the posterior distributions  \eqref{eq_old_posterior} and \eqref{eq_posterior_r1_old_simplified}. In particular, if we set 
\begin{align*}
& C(\alpha, \theta,n,m,m_1)\\
&\quad=  [\alpha (n+m-m_1)+\theta m_1]\left[m_1 +m\frac{m_1 -\alpha}{\theta+n-\alpha}\right]\\
& \quad\quad +\left[m(\theta-\alpha)\frac{m_{1}-\alpha}{\theta+n-\alpha}\right]\left[(m_{1}+1)+(m-1)\frac{m_{1}+1-\alpha}{\theta+n+1-\alpha}\right]
\end{align*}
then
\begin{align} \label{eq:posteriorE}
&\E [W_{1,n+m}\,|\,  K_n =k ,\mathbf{M}_{n}=\mathbf{m}]\\
&\notag\quad=\frac{m(\theta+n+1-\alpha)_{(m)}}{(n+m)(\theta+n)_{(m)}} \frac{\theta+n\alpha}{\theta+n+1-\alpha}+ \frac{n(\theta+n-\alpha)_{(m)}}{(n+m)(\theta+n)_{(m)}} \frac{C(\alpha, \theta,n,m,m_1)}{\alpha(n-m_1)+\theta m_1}.
\end{align}
We refer to Section~S1.7 of the Supplementary Material for explicit expressions of the posterior probabilities of the events $A_1$ and $B_1$, which complete the main result of Corollary~\ref{cor:oldest_species_posterior}.

\subsection{Estimation of prior's parameter $(\alpha,\theta)$} \label{sec:prior_estimation}

The closed-form expressions of our results facilitate posterior inferences. In particular, if the prior's parameters $(\alpha,\theta)$ are estimated with an empirical Bayes approach and fixed, then the inferential procedure becomes straightforward and efficient. Here we consider both an empirical Bayes approach and a fully Bayes approach for the estimation of $(\alpha,\theta)$, the latter considering the specification of a prior distribution on $(\alpha,\theta)$. While the empirical Bayes approach takes advantage of closed-form formulae, the fully Bayes approach may sometimes be preferable. In both cases, the inference is based on an initial sample of $n$ observations, which are then used to make predictions on a second set of $m$ data points.

Within the empirical Bayes approach, we consider methods relying on maximum likelihood estimation and methods relying on moment-based estimation. With regards to maximum likelihood estimation, the problem consists in finding the values of $\alpha$ and $\theta$ that maximize the (marginal) likelihood function, which in this context is equal to the EPPF.
Under the ordered PYP, this coincides with \eqref{eq_eppf_ord}, and the parameters found by solving:
\begin{displaymath}
\max_{\alpha,\theta} \Phi_k^{(n)}(m_1, \ldots, m_K; \alpha,\theta).
\end{displaymath}
As a term of comparison, we also consider the performance of estimating the prior's parameters when the model is misspecified, specifically assuming an ordered DP prior (i.e. fixing $\alpha = 0$), or ignoring the ordering structure or the model, i.e. maximizing the EPPF of the standard PYP \eqref{eq_eppf}. The sets of prior's parameters obtained by optimizing these EPPFs are respectively denoted with \texttt{ordPYP}, \texttt{ordDP} and \texttt{stdPYP}. Note that the approaches based on the ``misspecified'' likelihood (\texttt{ordDP} and \texttt{stdPYP}) do not take full advantage of the increased complexity of the model, and are considered only as reference. When the model is correctly specified, these methods do not estimate the correct parameters.

With regard to moment-based estimation, we consider a statistic of interest and then match its population first moment with the corresponding observed sample statistic. Rather than doing this for the full initial sample, we consider a collection of samples of increasing size and match the statistics of interest's trajectory given by increasing sample sizes, using a least squares method. Specifically, we consider a grid of sample size values $1 \leq n_1 \leq \ldots \leq n_d = n$; for each $n_i$, we compute the discrepancy between the first moment and the observed statistics for the first $n_i$ observations; finally, we minimize the sum of the squared discrepancies, where the sum is for $i = 1, \ldots, d$. 
The statistics we consider are the frequency of the first ordered species $M_{1,n}$ and the number of distinct species $K_n$; the parameters obtained are respectively denoted as \texttt{lsM1} and \texttt{lsK}:
\begin{align*}
(\hat\alpha,\hat\theta)_{\texttt{lsM1}}=&\argmin_{\alpha,\theta} \sum_{i=1}^d \left( \E[M_{1,n_i};\alpha,\theta] - M_{1,n_i} \right)^2\\
(\hat\alpha,\hat\theta)_{\texttt{lsK}}=&\argmin_{\alpha,\theta} \sum_{i=1}^d \left( \E[K_{n_i};\alpha,\theta] - K_{n_i} \right)^2,
\end{align*}
where $M_{1,n_i}$ and $K_{n_i}$ are the frequency of the first ordered species and the number of distinct species in the first $n_i$ samples, respectively. In Section~S2.1 of the Supplementary Material, we provide additional details, as well as report the pseudocode algorithms, to obtain these parameter estimates. Moment-based estimation allows to focus the estimation problem on a  specific feature or property of the data, by choosing the summary statistics of interest. Moreover, by considering a grid of sample size values, the method learns the growth curve over $n$, ideally being more robust compared to methods that only look at one ``snapshot'' given by the full dataset. Additionally, because they do not rely on the full likelihood, these methods could be more robust in the case of model misspecification.

We also consider a fully Bayes approach (\texttt{FB}), by specifying a suitable prior distribution for the parameters $(\alpha, \theta)$ and focusing the inference on the posterior distribution. This can be implemented using standard MCMC algorithms. Here, we consider independent non-informative prior distributions, setting $p(\theta) = G(0.1,0.1)$ and $p(\alpha) = Unif(0,1)$.


\section{Numerical illustrations}\label{sec4}

We empirically study the ordered PYP and assess the performance of our BNP approach for estimating the frequency of the first ordered species, using synthetic and real data. Moreover, we empirically study distributional properties of the ordered random partitions induced by the model, in particular focusing on the species' ordering distribution. In the synthetic data, we compare the performance when the data is generated from the model, i.e. the PYP prior is correctly specified, and when the data is generated from different distributions, thus under model misspecification. For the application of our model to real data, we analyze genetic variation using samples from the 1000 Genome Project \citep{10002015global}, which we combine with variants' age estimates obtained from the Human Genome Dating Project \citep{Alb(20)}.  Studying genetic variation is of great importance to investigate population structure, to detect related samples, to investigate demographic history, and to learn about the risk of diseases and different quantitative traits. By also incorporating information on the variants' age, similar and further issues can be assessed, such as using the variants' age distribution to compare populations, differentiating age distributions in pathogenic and benign variants, and learning about genealogical history \citep{Alb(20)}. 
Here we assess the performance of our method on predicting frequencies of variants ordered by their age, focusing in particular on the oldest variant. 
Code is available at \url{https://github.com/cecilia-balocchi/OrderedSSP}.

\subsection{Preliminaries}

While the distributional properties of the random partition induced by the PYP are well-known in the BNP literature, the properties induced by the ordered PYP are less understood. Because of the marginality property \eqref{eq:interpl}, some distributional properties of the ordered PYP are equivalent to the ones of the PYP. For example, the distribution of the number of distinct species (or clusters) induced by the ordered PYP is equal to the one induced by the PYP, reported in equation (S3).  However, other properties that relate to the ordering on the species are not as well understood. In particular, we are interested in learning the behavior of the distribution that assigns the order to a new cluster. We aim to characterize it using simple descriptive features. 
We achieve this goal by studying the predictive distribution \eqref{eq:prednew} that 
assigns the $n+1$ observation to a \textit{new} species of order $j$, and to examine its behavior marginalizing on all the configurations of partitions of $n$ observations into $K_n$ species.  In other words, we empirically study the marginal distribution $ \text{Pr}(\text{order}(n+1) = j \vert (T_{n+1},X_{n+1})=(T^*_{\text{new}},X^*_{\text{new}}), K_n = k)$ that a new species after $n$ observations is assigned order $j$, given that the previous $n$ observations are partitioned into $K_n=k$ species, for $j= 1, \ldots, K_n+1$.  This distribution is ``marginal'' compared to \eqref{eq:prednew}, because it does not condition on the frequencies of the ordered species, $\mathbf{M}_{n}=(m_1,\ldots,m_{k})$, and it is obtained by marginalizing over all possible configurations $(m_1,\ldots,m_{k})$ of partitions of $n$ into $K_n=k$ species. 

In Figure \ref{fig:ordering_distr} we depict the marginal ordering distribution for a new species given an observed sample of size $n = 10$. The solid color lines represent the ordering distributions given the number of previously observed clusters, marginally on the partition configuration (different colors correspond to different numbers of clusters), $ \text{Pr}(\text{order}(n+1) = j \vert (T_{n+1},X_{n+1})=(T^*_{\text{new}},X^*_{\text{new}}), K_n = k)$. The colored points instead represent the realizations of the ordering probabilities conditional on individual partition configurations $\mathbf{M}_{n}$, $ \text{Pr}(\text{order}(n+1) = j \vert (T_{n+1},X_{n+1})=(T^*_{\text{new}},X^*_{\text{new}}), K_n = k, \mathbf{M}_{n}=(m_1,\ldots,m_{k}))$, for different values of $(m_1,\ldots,m_{k})$; note that this conditional ordering probability can be found from \eqref{eq:prednew} as $q^{\tiny{(new)}}_{j} / \sum_{i=1}^{k+1} q^{\tiny{(new)}}_{i}$.
The ordering distribution has been sketched for different configurations (parameters) of the PYP prior. In particular, from left to right, we represent the distribution under the ordered DP ($\theta > 0, \alpha = 0$, first panel), the ordered PYP with $\alpha < \theta$, $\alpha = \theta$, and $\alpha > \theta$ and the ordered $\alpha$-stable process ($\theta = 0, \alpha > 0$, last panel). Figure \ref{fig:ordering_distr} shows that the ordering distribution changes depending on the parameters $\theta$ and $\alpha$: for $\theta > \alpha$ the probability that a new cluster is assigned to order $j$ increases with $j$ for each previous number of species $K_n$ (first and second panels from the left) and the increasing trend in stronger when the difference $\theta-\alpha$ is large. For $\theta = \alpha$ (third panel from the left) the trend is constant over $j$, for all $K_n$. For $\theta < \alpha$ we see instead that the trend is decreasing with $j$, for each $K_n$ (fourth panel). The last panel shows that in the case of the $\alpha$-stable process ($\theta = 0 < \alpha$) the trend is again constant. These intuitions are useful for constructing a distribution for ordered partitions that has properties similar to those induced by the ordered PYP prior. Moreover, Figure \ref{fig:ordering_distr} emphasizes an additional aspect of the ordered PYP's improved flexibility compared to the ordered DP ($\alpha = 0$).

\begin{figure}[t]
\centering
\includegraphics[width=0.97\linewidth]{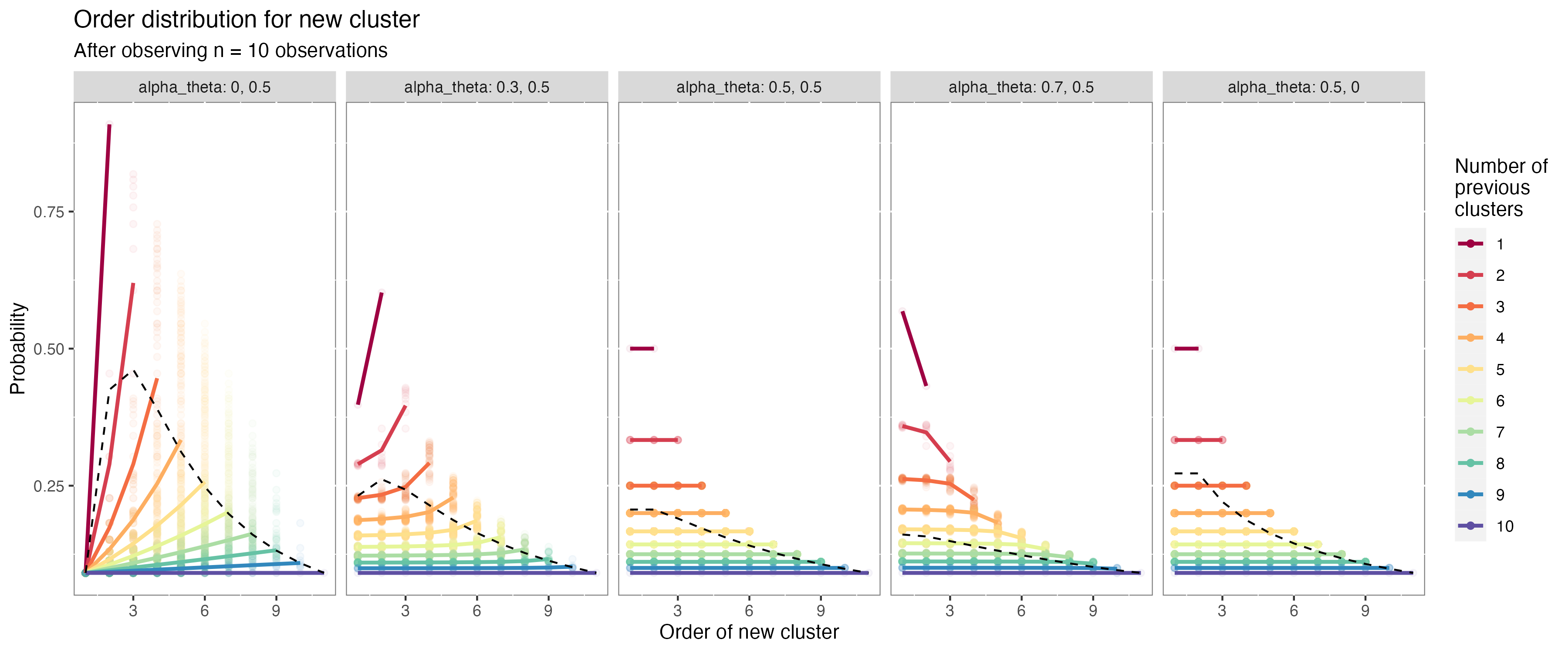} 
\caption{Order distribution for a new species, given a sample of $n = 10$ observations, divided into $K_n = 1, \ldots, 10$ species (each color represents a different value of $K_n$). Solid colored curves represent the ordering distributions marginally on the partition configuration $\mathbf{M}_n$,  $ \text{Pr}(\text{order}(n+1) = j \vert (T_{n+1},X_{n+1})=(T^*_{\text{new}},X^*_{\text{new}}), K_n = k)$; the colored points instead show the variation of the ordering distribution conditional on $\mathbf{M}_n$, across different partition configurations $\mathbf{M}_n=(m_1,\ldots,m_{k})$. The dashed black line represents the order distribution for a new species, marginally on the number of previous species $K_n$.} \label{fig:ordering_distr}
\end{figure}

\subsection{Analysis of synthetic data}

We consider the problem of making inference on some quantities of interest with our BNP approach, in the context of synthetic data.  In particular, we study the performance on an additional sample of size $m$ of the posterior mean predictors for the total number of species $K_{n+m}$, the frequency of the oldest cluster $W_{1,n+m}$, and the frequency $W_{1,n+m}$ conditionally on the knowledge that either the event $A_1$ or $B_1$ happened ($W_{1,n+m}\vert A_1$ and $W_{1,n+m}\vert B_1$). For notational simplicity, we will remove the dependence on $n+m$ in the notation. All of the estimators for these quantities have closed-form espressions, thanks to the results in Section Section~\ref{sec3}. In particular, we estimate $K_{n+m}$ using the posterior mean of the number of unseen species $K_{m}^{(n)}$ (see Section S1.3 of the Supplementary Materials), while $W_{1,n+m}$ is estimated using \eqref{eq:posteriorE}, and $W_{1,n+m}\vert A_1$ and $W_{1,n+m}\vert B_1$ using respectively combining formulas (S14) with (S20), and (S19) with (S21) from the Supplementary Materials.
We compare the predictive performance of the ordered PYP model, under a full Bayes approach, and when the prior's parameters are estimated with the empirical Bayes methods described in Section~\ref{sec:prior_estimation}. We generate the synthetic data under different scenarios. We first consider a framework where the model is correctly specified, i.e. the data is generated from the ordered PYP prior. We then focus on a framework where the model is misspecified, as the data is generated from different distributions. 

\subsubsection{Inference of the first ordered frequency under correct specification} \label{sec:mod_sim}

We first generate the data from the ordered PYP prior, i.e. under correct model specification. Specifically, we consider 100 datasets of size $n = 500$, generated from the ordered PYP prior with randomly sampled parameters $(\theta, \alpha)$; for each dataset, we consider 25 additional datasets of size $m = 5000$, and compute the median prediction error across the 25 additional datasets (absolute percentage error is computed for the four quantities of interest $K$, $W_1$, $W_1 \vert A_1$, $W_1 \vert B_1$). 

Figure~\ref{fig:model_sim} compares the performance of the full Bayes approach with the different empirical Bayes approaches, for the four quantities of interests.
Overall we notice that the performance of the full Bayes approach (\texttt{FB}) is almost identical to the one of the approach based on the EPPF of the ordered PYP (\texttt{ordPYP}). 
These two approaches tend to have the best prediction error for all the quantities of interest, except for the estimation of the frequency of the first ordered species $W_1$, where the best predictive performance is achieved by \texttt{lsM1}. 
The empirical Bayes approach based on the standard PYP likelihood \texttt{stdPYP} has a similar but slightly worse performance to \texttt{ordPYP} and \texttt{FB}. \texttt{ordDP} instead shows poor predictive performance for all quantities of interests.

\begin{figure}[t]
\centering
\includegraphics[width=0.97\linewidth]{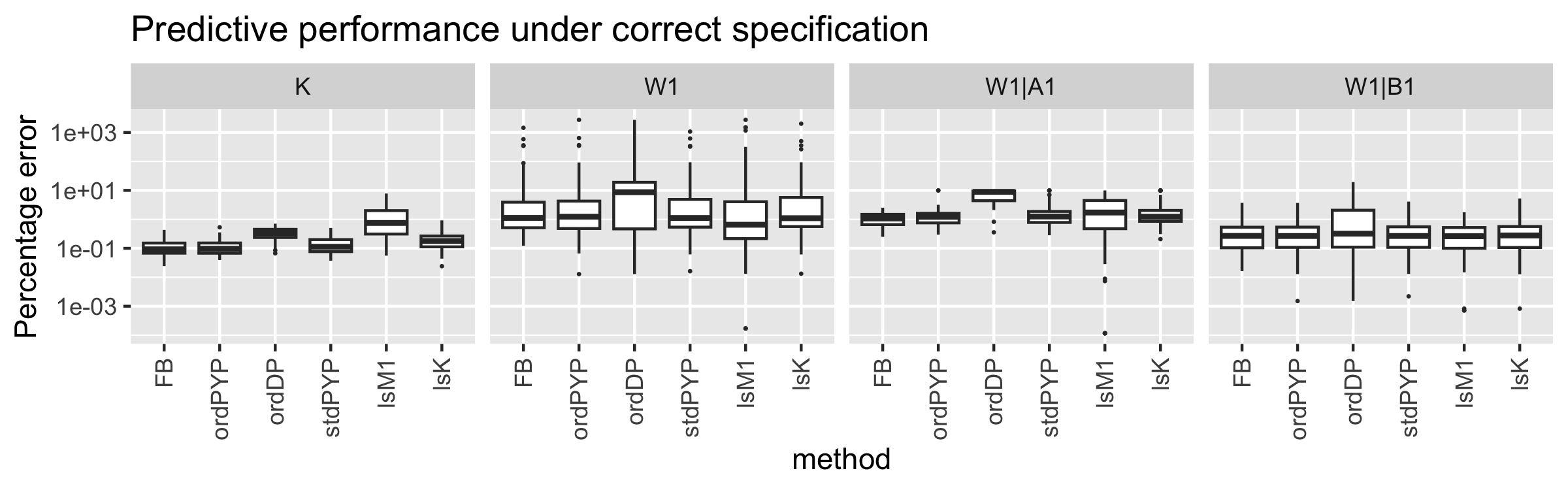} 
\caption{Predictive performance for our BNP approach for the total number of distinct species $K$, the frequency of the first ordered cluster $W_1$, and the conditional frequencies $W_1\vert A_1$ and $W_1\vert B_1$ (shown in different columns). The boxplots display the median absolute percentage error between the predicted posterior mean and the true value, across several datasets simulated from the ordered PYP model. The full Bayes approach and different parameter-estimating methods are compared.} \label{fig:model_sim}
\end{figure}

\subsubsection{Inference of the first ordered frequency under misspecification}

We then consider a synthetic data framework where the data is not generated from the ordered PYP prior, i.e. under model misspecification. We aim at evaluating how our method performs in such adverse conditions, which are in general also to be expected from real data. We consider several different data-generating processes. In all of them, we generate the species order independently from the observations' species (or cluster) assignment. For the clustering distribution, we consider: (a) the standard \texttt{DP}, (b) the standard \texttt{PYP}, and (c) the (infinite support) \texttt{Zipf} distribution (also known as the zeta distribution). In the latter, each observation is associated with an integer (sampled from the Zipf distribution) and clusters are formed by aggregating observations mapped to the same integer. Given that these distributions do not induce an order on the species, we additionally consider an ordering distribution for each new species. For the ordering distribution, we consider: (1) the \texttt{alpha-stable} distribution (induced by the ordered $\alpha$-stable process, a special case of the ordered PYP with $\theta = 0$ where the predictive distribution \eqref{eq:prednew} simplifies significantly), and (2) what we call the \texttt{arrival-weighted} distribution. The latter considers the ordering induced by the cluster arrival and introduces a random component by sampling for each new cluster an exponentially distributed weight with mean given by the arrival order, and ordering the species according to the weight.
We generate 100 datasets of size $n = 500$ using randomly sampled parameters, and for each of them consider 25 additional datasets of size $m = 5000$; we consider the median absolute percentage errors across the 25 additional datasets.

Figure~\ref{fig:misspec_sim} displays the predictive performance results for the different measures of interests ($K$, $W_1$, $W_1\vert A_1$ and $W_1\vert B_1$) across different rows, and for different data-generating distributions (clustering and ordering distributions) across different columns. We compare the performance of the full Bayes and of the different empirical Bayes approaches.
The first row of Figure~\ref{fig:misspec_sim} focuses on the prediction of the number of distinct species $K$. Overall, the best prediction error is achieved by the empirical Bayes method maximizing the likelihood of the non-ordered (standard) PYP (\texttt{stdPYP}). This is not surprising given that the ordered PYP is not correctly specified, and the number of distinct species has the same behavior under the ordered and the standard PYP. The second best performance is achieved by \texttt{lsK}, and this is consistent with the fact that it was designed to be more robust and learn this ``feature'' ($K$) even under model misspecification. The performance of the full Bayes approach \texttt{FB} and the empirical Bayes approach based on the ordered PYP likelihood \texttt{ordPYP} are similar and slightly worse when the clustering distribution is the PYP or the Zipf distribution, but it gets considerably worse when the clustering distribution is the DP, suggesting that these methods are not able to learn that $\alpha$ is equal to zero in such simulated datasets. The performance of \texttt{ordDP} is good under the DP clustering distribution, but quite bad otherwise.

The second row of figure~\ref{fig:misspec_sim} shows the error committed when predicting the frequency of the first ordered species $W_1$. The best performance is often achieved by \texttt{lsM1}, followed by \texttt{ordPYP} and \texttt{FB}. When the clustering distribution is the Zipf distribution, the difference between \texttt{lsM1} and the other two is considerable, with the former being more robust. Sometimes, \texttt{stdPYP} achieves comparable results. 
Overall, the distribution of the errors is much more spread out when the ordering is generated from the \texttt{arrival-weighted} distribution, compared to the \texttt{alpha-stable} distribution, meaning that the behavior induced by the former is more different from the one described by \ref{eq:posteriorE}.
Finally, the third and the fourth rows focus on the predictive performance for the conditional frequencies $W_1\vert A_1$ and $W_1\vert B_1$. 
In both cases, the best performance is achieved by \texttt{ordPYP}, \texttt{FB}, and \texttt{lsM1}. 

\begin{figure}[t!]
\centering
\includegraphics[width=0.97\linewidth]{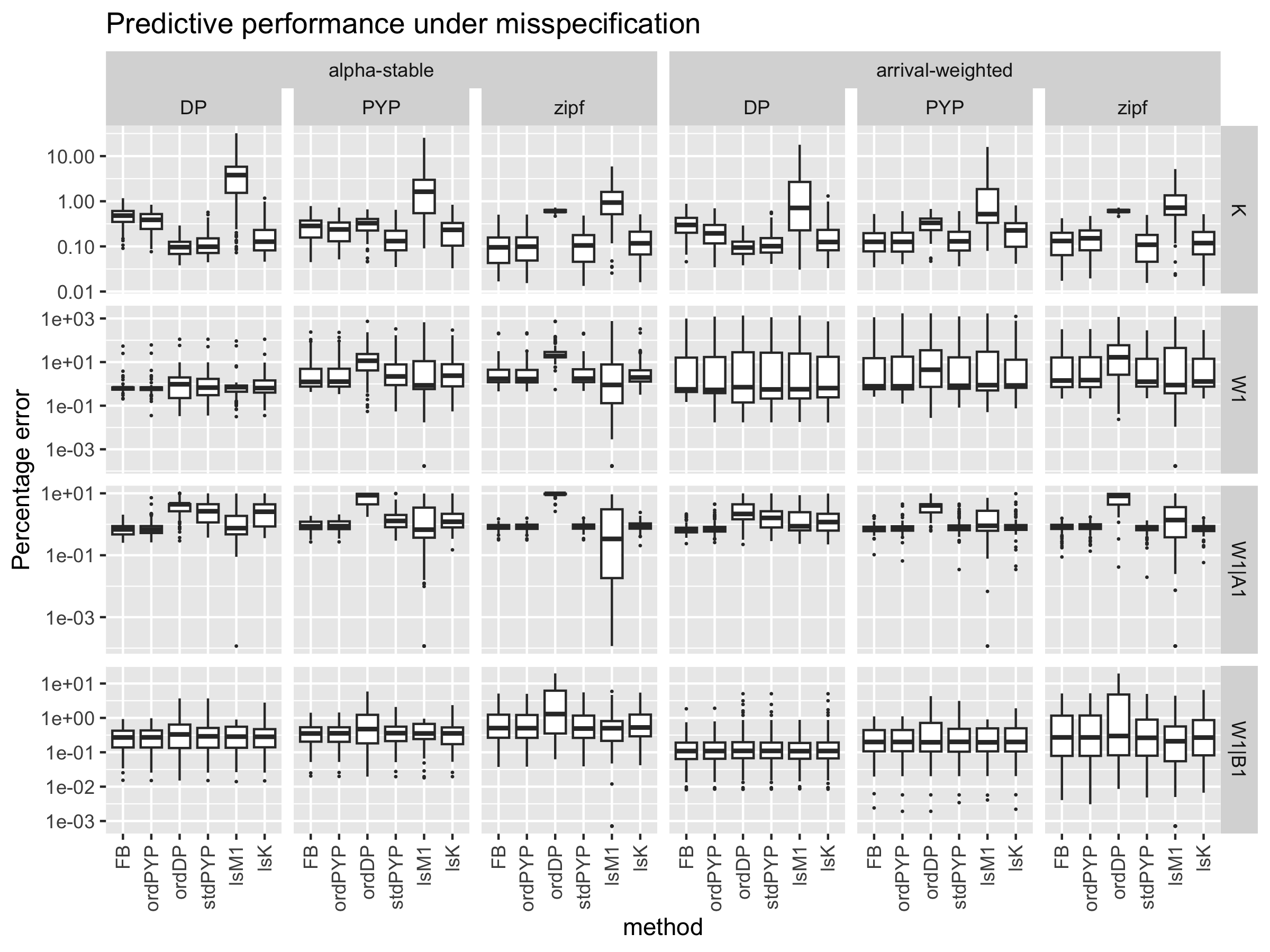} 
\caption{Predictive performance for our BNP approach, for the total number of clusters $K$, the frequency of the first ordered cluster $W_1$, and the conditional frequencies $W_1\vert A_1$ and $W_1\vert B_1$ (shown in different rows).
The boxplots display the median absolute percentage error between the predicted posterior mean and the true value, across several datasets simulated under different distributions (shown in different columns). We compare various parameter-estimating methods. 
} \label{fig:misspec_sim}
\end{figure}

Overall, the full Bayes approach achieves results very similar to the ones of the empirical Bayes approach based on the likelihood of the ordered PYP (\texttt{ordPYP}).
When the model is misspecified, \texttt{lsM1} seems to be more robust for the estimation of both the marginal distribution and the conditional distribution of $W_1$, but it performs quite poorly for the prediction of $K$. While the parameter-estimation method \texttt{ordPYP} and the full Bayes approach \texttt{FB} do not always produce the best predictions, they seem to provide a good balance between learning the number of species $K$ and the frequency of the first order species $W_1$. Alternatively, the parameter estimation method could be selected based on the quantity of interest. 

\subsection{Analysis of genetic variation data} \label{sec:genetic}

We employ our method to analyze genetic variation data from the 1000 Genomes Project \citep{10002015global}. We examine single nucleotide polymorphisms (SNPs) corresponding to certain genes for a sample of 2548 individuals. In particular, we consider a variant to be present for a given individual and SNP locus if the present DNA base is different from the reference genome in either allele. We combine the variants from all individuals and all SNPs corresponding to a certain gene, to create our basic sample, where the variant location (i.e. the SNP) represents the species each sample unit belongs to. To associate an ordering to each species, we used SNPs age estimates from Human Genome Dating project \citep{Alb(20)}, and discarded any variant for which no age information is available. We study two datasets, formed by the genetic variants corresponding respectively to the BRCA2 and the EDAR genes, for which we analyze respectively 1482 and 1073 unique SNPs. We focus on the predictive performance for the number of distinct variants $K$ and the frequency of the oldest variant $W_1$, using the different parameter estimation methods for our BNP approach.
We repeat our analysis for 100 different training and testing sets, randomly sampled so that the testing set size is approximately 20 times larger than the training set size. 

\begin{figure}[t!]
\centering
\includegraphics[width=0.98\linewidth]{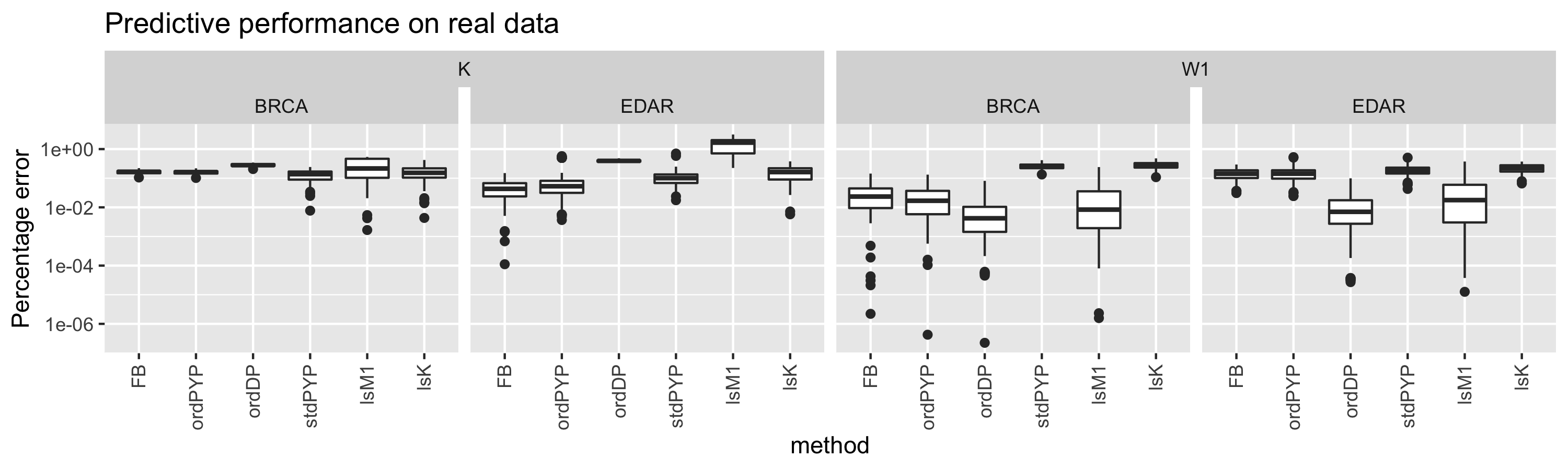}
\caption{Prediction performance for the number of distinct species $K$ and the frequency of the oldest cluster $W_1$ across several training-testing sets for BRCA2 (panels one and three) and EDAR (panels two and four).} \label{fig:crossval}
\end{figure}

In Figure~\ref{fig:crossval} we report the percentage errors for predicting $K$ and $W_1$ for the datasets corresponding to the BRCA2 and the EDAR genes. For the prediction of the number of species, we find the method \texttt{stdPYP} provides the best results for the BRCA2 dataset, but that \texttt{ordPYP} and \texttt{FB} have the best prediction for the EDAR dataset. In terms of predicting the frequency of the oldest variant, \texttt{ordDP} and \texttt{lsM1} perform better for both. We note that the performances of \texttt{ordPYP} and \texttt{FB} are comparable to that of \texttt{ordDP} in the BRCA2 dataset, as the former often estimated values of $\alpha$ close to zero; however, the two methods achieved different performance in the EDAR dataset, due to the fact that \texttt{ordPYP} and \texttt{FB} estimated positive values of $\alpha$. Under further inspection, we note that \texttt{lsM1} has performance comparable to \texttt{ordDP} in the prediction of $W_1$ for the EDAR dataset because it often estimates values of $\alpha$ very close to zero.
It is also surprising to see that these datasets seem to have power-law behavior in terms of the number of species, for which values of $\alpha$ greater than 0 provide more accurate predictions, but the best predictions for the frequency of the oldest variant $W_1$ are produced by methods that estimate $\alpha = 0$. This is probably due to the model being not correctly specified for these data. Thus we recommend analyzing the two prediction problems separately.

\begin{figure}[t!]
\centering
\includegraphics[width=0.97\linewidth]{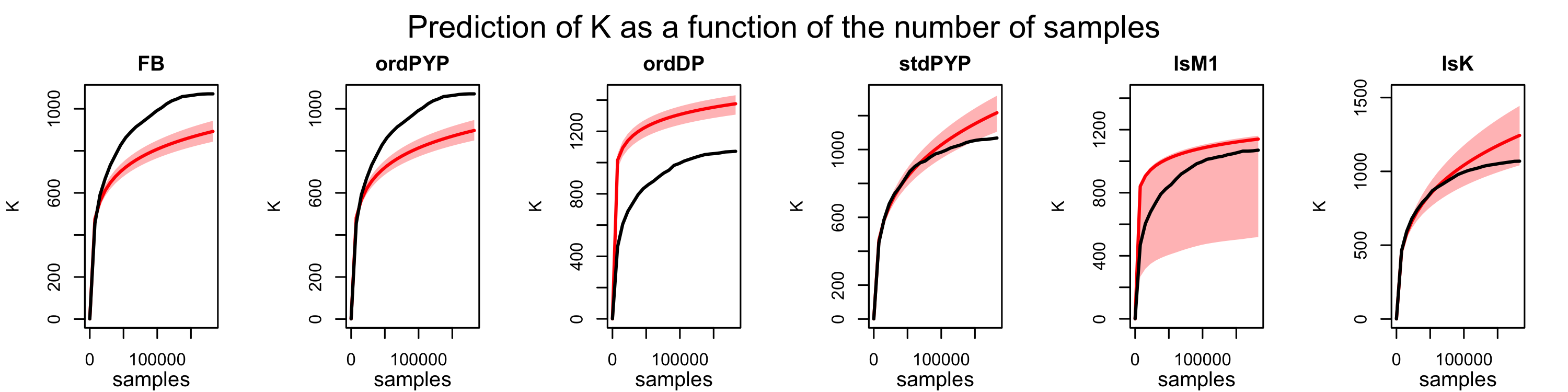} \\
\includegraphics[width=0.97\linewidth]{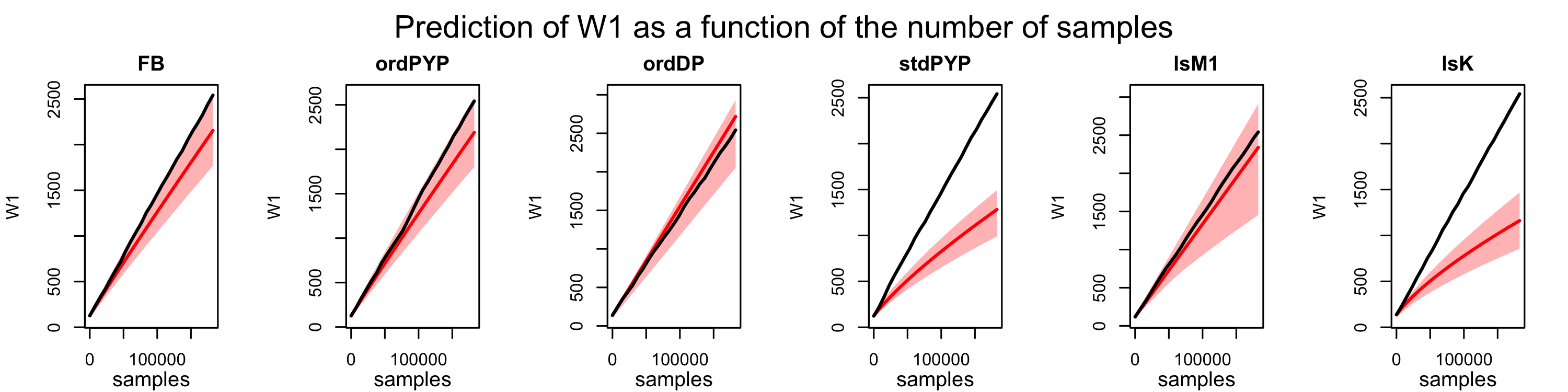} 
\caption{Prediction of the curve of $K$ (top panels) and $W_1$ (bottom panels) as a function of sample size, for the BRCA2 dataset. The curves represented correspond to the training-testing split that resulted in the median error, with the observed curve depicted in black, the predicted curve in red, and the red confidence bands corresponding to 95\% empirical quantiles.} \label{fig:median_curves}
\end{figure}

Our method can also be used to analyze the predictions as functions of the number of samples, rather than simply make a prediction for the entire test set. Using the multiple training and testing splits method, we also looked at the performance in the prediction of the curve of $K$ and $W_1$. Figure~\ref{fig:median_curves} reports the results for each parameter-estimating method, applied to the BRCA2 dataset. Similar plots are reported in Section~S2.2 of the Supplementary Material for the EDAR gene dataset.
For simplicity of visualization, we depict in black the actual curve of $K$ (or respectively $W_1$) observed in the training-testing split that most closely represents the median error achieved by each method. The prediction curve (depicted in red) is also the one corresponding to the ``median'' training-testing split, while the red bands represent the $95\%$ confidence bands. We report the empirical confidence bands, computed using the empirical quantiles of the curve estimates across the various training-testing splits.
For the estimation of the number of distinct species $K$, the top panel of Figure~\ref{fig:median_curves} confirms that \texttt{stdPYP} and \texttt{lsK} have the best performance, with quite accurate prediction and with the represented curve falling within the 95\% bands. In the other parameter-estimating methods the prediction is much worse, and often the curve does not fall within the confidence bands.
In terms of predictions for $W_1$, the results are consistent with the expectations from the predictions on the whole test set, with \texttt{ordDP}, \texttt{ordPYP}, \texttt{FB} and \texttt{lsM1} performing better, with very accurate prediction of $W_1$ as a function of the number of additional samples.


\section{Discussion}\label{sec5}

A common feature of SSPs is the invariance with respect to species labeling, i.e. species’ labels are immaterial in the definition of the functional of interest, which is at the core of the development of the BNP approach to SSPs under the popular PYP prior \citep{Lij(07),Lij(08),Fav(09),Fav(13)}. In this paper, we considered SSPs that are not ``invariant'' to species labeling, in the sense that an ordering or ranking is assigned to species’ labels, and we developed a BNP approach to such problems. In particular, inspired by the seminal work of \citet{Don(86)} on age-ordered alleles' compositions, with a renowned interest in the frequency of the oldest allele \citep{Cro(72),Kel(77),Wat(77)}, we studied the following SSP with ordering: given an observable sample from an unknown population of individuals belonging to some species (alleles), with species' labels being ordered according to some weights (ages), estimate the frequencies of the first $r$ order species’ labels in an enlarged sample obtained by including additional unobservable samples from the same population. Our BNP approach relies on the ordered PYP prior, which leads to an explicit posterior distribution of the first $r$ order frequencies, with corresponding estimates being simple and also computationally efficient. We presented an empirical validation of our approach on both synthetic and real data. The proposed methodology has been applied to the analysis of the genetic variation, showing its effectiveness in the estimation of the frequency of the oldest allele.

The sampling structure of the ordered PYP has been first presented in \citet{Gne(05)} and \citet{Jam(06)}, and since then no other works have further investigated such a sampling structure in BNPs. To date, the sampling formulae of ordered PYP prior appear to be largely unknown and unexplored in the BNP literature. Our work highlights the great potential of the ordered PYP in BNP inference for SSPs with ordering, paving the way for future research. First, in our work we considered the problem of estimating the frequency of the first $r$ order species, which is arguably the most natural SSP with ordering; other (discrete) functionals of the ordered species' composition of unobservable samples may be of interest, e.g. the number of unseen species with order less than $r$ that would be observed in additional samples, and the number of unseen species observed with a certain prevalence and order less than $r$ in the enlarged sample. Second, while SSPs with ordering have a natural motivation in population genetics, they may be of interest in different fields; in Section~S2.3 of the Supplementary Material, we present an application in the context of citations to academic articles, where each cited article is a species whose order is determined by the publication date; another application is in the context of online purchasing of items, with order being determined by a suitable measure of popularity assigned to items. When analyzing these applications in terms of SSMs, some simplifications need to be considered, which might impact certain aspects of the analysis. For example, our model cannot consider the case of different species having the same ranking (such as different papers published on the same date). However, we argue that it is still interesting to understand if the ordered PYP is an adequate model to describe the behaviors observed in such applied contexts.

We believe that interest in ordered SSMs is not limited to BNP inference for SSPs with ordering. Ordered SSMs, and in particular the ordered PYP prior, may be also applied to a setting in which species are not explicitly observed, but need to be inferred. This is the case of Bayesian mixture models \citep{Fru(19)}, where each observation is assumed to be assigned to a latent component (species' label), which characterizes some features of the distribution of the observations. In such a setting, ordered SSMs may be used to specify the distribution of the mixture's probabilities of the latent components, with such components being ordered according to some weights \citep{Jam(06a)}. Ordered SSMs admit a natural extension to the features sampling framework, which generalizes the species sampling framework by allowing each observation to belong to multiple species, now called features. Feature sampling problems first appeared in ecology for modeling the presence or absence of an animal in a trap, and their importance has grown dramatically in recent years driven by numerous applications in biological and physical sciences \citep{Mas(21),Mas(22)}. In such a context, the Beta process prior \citep{Bro(13)} is the most popular nonparametric prior for modeling the unknown feature's composition of the population.  The definition of an ordered version of the Beta process prior, and generalizations thereof, would be the starting point to introduce and investigate feature sampling problems with ordering.


\subsection*{Acknowledgments}

The authors are grateful to the Associate Editor and two anonymous Referees for their comments and corrections that allow them to improve remarkably the paper. The authors wish to thank Paul Jenkins for useful discussions on the use of age-ordered random partitions in population genetics. 
Federico Camerlenghi is a member of the \textit{Gruppo Nazionale per l'Analisi Matematica, la Probabilit\`a e le loro Applicazioni} (GNAMPA) of the \textit{Istituto Nazionale di Alta Matematica} (INdAM). 
Stefano Favaro is also affiliated to IMATI-CNR ``Enrico Magenes'' (Milan, Italy).

\subsection*{Funding}
The first and third authors gratefully acknowledge funding from the European Research Council (ERC), under the European Union's Horizon 2020 research and innovation programme, Grant agreement No. 817257.
The second and third authors gratefully acknowledge the support from the Italian Ministry of Education, University and Research (MIUR), ``Dipartimenti di Eccellenza" grant 2023-2027.
The second author was supported by the European Union – Next Generation EU funds, component M4C2, investment 1.1., PRIN-PNRR 2022 (P2022H5WZ9).
 


\newpage

\begin{center}
{\Large {\bf Supplementary Materials for }}

\bigskip

{\Large {\bf ``A Bayesian Nonparametric Approach to Species Sampling Problems with Ordering"}}

\bigskip

\end{center}

\setcounter{equation}{0}
\setcounter{figure}{0}
\setcounter{table}{0}
\setcounter{page}{1}
\setcounter{section}{0}
\setcounter{theorem}{0}
\makeatletter
\renewcommand{\theequation}{S\arabic{equation}}
\renewcommand{\thefigure}{S\arabic{figure}}
\renewcommand{\thetable}{S\arabic{table}}
\renewcommand{\bibnumfmt}[1]{[S#1]}
\renewcommand{\citenumfont}[1]{S#1}

\section{Proofs and details}

\subsection{Proof of Equation (9)}  \label{app:1}

The proof of Equation (9) is based on the following identity
\begin{equation} \label{eq:induction}
\sum_{\pi \in S_k}\theta^{-1} \prod_{j=1}^k \frac{\alpha (m_{\pi(j+1)}+ \ldots + m_{\pi (k)}) +\theta m_{\pi (j)}}{(m_{\pi(j)}+ \ldots + m_{\pi (k)})} = \prod_{i=1}^{k-1} (\theta+i \alpha).
\end{equation}
that we are going to show by induction.
When $k=1$ this is trivially true. Now assume that (inductive hypothesis)
Equation \eqref{eq:induction} is true with $k-1$ in place of $k$, thus one has
\begin{align*}
& \frac{1}{\theta}\sum_{\pi \in S_k} \prod_{j=1}^k \frac{\alpha (m_{\pi(j+1)}+ \ldots + m_{\pi (k)}) +\theta m_{\pi (j)}}{(m_{\pi(j)}+ \ldots + m_{\pi (k)})} \\
&\qquad = \frac{1}{\theta}\sum_{\ell=1}^k \sum_{\pi \in S_{k}^{(\ell)}} \prod_{j=1}^k\frac{\alpha (m_{\pi(j+1)}+ \ldots + m_{\pi (k)}) +\theta m_{\pi (j)}}{(m_{\pi(j)}+ \ldots + m_{\pi (k)})}
\end{align*}
where $S_k^{(\ell)}$ denotes all the permutations $\pi$ satisfying $\pi(1)=\ell$.
As a consequence one has
\begin{align*}
& \frac{1}{\theta}\sum_{\pi \in S_k} \prod_{j=1}^k \frac{\alpha (m_{\pi(j+1)}+ \ldots + m_{\pi (k)}) +\theta m_{\pi (j)}}{(m_{\pi(j)}+ \ldots + m_{\pi (k)})} \\
&\qquad = \frac{1}{\theta}\sum_{\ell=1}^k \sum_{\pi \in S_{k}^{(\ell)}} \prod_{j=2}^{k}\frac{\alpha (m_{\pi(j+1)}+ \ldots + m_{\pi (k)}) +\theta m_{\pi (j)}}{(m_{\pi(j)}+ \ldots + m_{\pi (k)})}
\cdot \frac{\alpha (n-m_\ell)+\theta m_{\ell}}{n}\\
&\qquad = \frac{1}{\theta}\sum_{\ell=1}^k \frac{\alpha (n-m_\ell)+\theta m_{\ell}}{n} \sum_{\pi \in S_{k}^{(\ell)}} \prod_{j=2}^{k}\frac{\alpha (m_{\pi(j+1)}+ \ldots + m_{\pi (k)}) +\theta m_{\pi (j)}}{(m_{\pi(j)}+ \ldots + m_{\pi (k)})}.
\end{align*}
We now use the inductive hypothesis for $k-1$ to evaluate the last sum over partitions, and we obtain
\begin{align*}
& \frac{1}{\theta}\sum_{\pi \in S_k} \prod_{j=1}^k \frac{\alpha (m_{\pi(j+1)}+ \ldots + m_{\pi (k)}) +\theta m_{\pi (j)}}{(m_{\pi(j)}+ \ldots + m_{\pi (k)})} = \sum_{\ell=1}^k \frac{\alpha (n-m_\ell)+\theta m_{\ell}}{n} \prod_{i=1}^{k-2} (\theta+i \alpha)\\
& \qquad\qquad =  \prod_{i=1}^{k-2} \cdot \left[k \alpha + \frac{1}{n} (\theta-\alpha) \sum_{j=1}^k m_j \right]
=  \prod_{i=1}^{k-2} (\theta+i \alpha) \cdot [\theta+ \alpha (k-1)] = \prod_{i=1}^{k-1} (\theta+i \alpha)
\end{align*}
thus, \eqref{eq:induction} follows by induction. Now, Equation (9) is an immediate consequence of this equality and the expression of the ordered EPPF (8). \\
\qed

\subsection{Proof of Equation (11)}
\label{app:proof_prop1}

First of all we observe that if $i=n$, then the probability $P_{n}(n; \alpha, \theta)=1$, indeed it is the only species in the sample and it is both the youngest and the oldest. We now assume that $i<n$ and we would like to evaluate the probability of the following event
\[
\Ecr := \{  \text{a species represented $i$ times in a sample of size $n$ has highest weight}\}.
\]
In order to  evaluate $\text{Pr}[\Ecr]$, we denote by $\tilde{\pi}_n$ the random partition generated by the sample of size $n$, which will be characterized by 
a species having frequency $i$, while the remaining $n-i$ observations are partitioned into $K_{n-i} \geq 1$ distinct values with frequencies $w_1, \ldots , w_{k}$. Note that $\tilde{\pi}_n$ does not take into account the ordering of the species. Thus we get
\begin{equation}
\label{eq:ECR}
\text{Pr}[\Ecr]  = \E [\text{Pr}[\Ecr | \tilde{\pi}_n]] =  \E \left[   \frac{\sum_{\pi \in S_{K_{n-i}}}\Phi_{K_{n-i}+1}^{(n)} (i,w_{\pi (1)}, \ldots , w_{\pi (
K_{n-i})})}{\frac{(1-\alpha)_{(i-1)}\prod_{j=1}^{K_{n-i}} (1-\alpha)_{(w_j-1)} }{(\theta+1)_{(n-1)}} \prod_{s=1}^{K_{n-i}+1-1} (\theta+\alpha s)} \right]
\end{equation}
where the denominator is nothing but the EPPF of the PYP, whereas the numerator is the sum of all the order EPPF of the PYP in which the oldest species equals a specific species having frequency $i$. Note that the sum in \eqref{eq:ECR} is over all possible permutations of the remaining $K_{n-i}$ distinct species, thus the probability $\text{Pr}[\Ecr] $ equals
\begin{align*}
& \E \left[\frac{\frac{ (1-\alpha)_{(i-1)}\prod_{j=1}^{K_{n-i}} (1-\alpha)_{(w_j-1)} }{(\theta+1)_{(n-1)}}\sum_{\pi \in S_{K_{n-i}}}  \theta^{-1}  \prod_{j=1}^{K_{n-i}} \frac{\alpha (w_{\pi(j+1)}+ \ldots + w_{\pi (K_{n-i})}) +\theta w_{\pi (j)}}{(w_{\pi(j)}+ \ldots + w_{\pi (K_{n-i})})}
\cdot \frac{\alpha (n-i)+ \theta i }{n}
 }{\frac{(1-\alpha)_{(i-1)}\prod_{j=1}^{K_{n-i}} (1-\alpha)_{(w_j-1)} }{(\theta+1)_{(n-1)}} \prod_{s=1}^{K_{n-i}} (\theta+\alpha s)}  \right] \\
  & =  \E \left[\frac{\sum_{\pi \in S_{K_{n-i}}}  \theta^{-1}  \prod_{j=1}^{K_{n-i}} \frac{\alpha (w_{\pi(j+1)}+ \ldots + w_{\pi (K_{n-i})}) +\theta w_{\pi (j)}}{(w_{\pi(j)}+ \ldots + w_{\pi (K_{n-i})})}
\cdot \frac{\alpha (n-i)+ \theta i }{n}}{ \prod_{s=1}^{K_{n-i}} (\theta+\alpha s)}  \right]  \\
& =  \E \left[ \frac{\prod_{s=1}^{K_{n-i}-1} (\theta+\alpha s)}{\prod_{s=1}^{K_{n-i}} (\theta+\alpha s)}  \cdot \frac{\alpha (n-i)+\theta i}{n} \right]
 = \frac{\alpha n +i (\theta-\alpha)}{n} \cdot \E \left[ \frac{1}{\theta +\alpha K_{n-i}}  \right]
\end{align*}
where we used the explicit expression of the ordered EPPF and the marginalization result \eqref{eq:induction}. Note that the expected value in the last
equation is made w.r.t. the distribution of $K_{n-i}$ (number of distinct types in a sample of size $n-i$). \\ 
\qed

Additionally, in Section~2 of the main manuscript, we reported the graphical representation of $P_n(i; \alpha,\theta)$ for various values of $\alpha < \theta$. Here we report analogous figures for values of $\alpha > \theta$ (left panel of Figure~\ref{fig:Pni_additional}) and for the specific case of $\theta =0$ (right panel of Figure~\ref{fig:Pni_additional}). These are particularly interesting as they display non-increasing patterns, emphasizing the difference and increased flexibility of the ordered PYP compared to the ordered DP of \cite{Don(86)}. 

\begin{figure}[t!]
\centering
\includegraphics[width=0.47\linewidth]{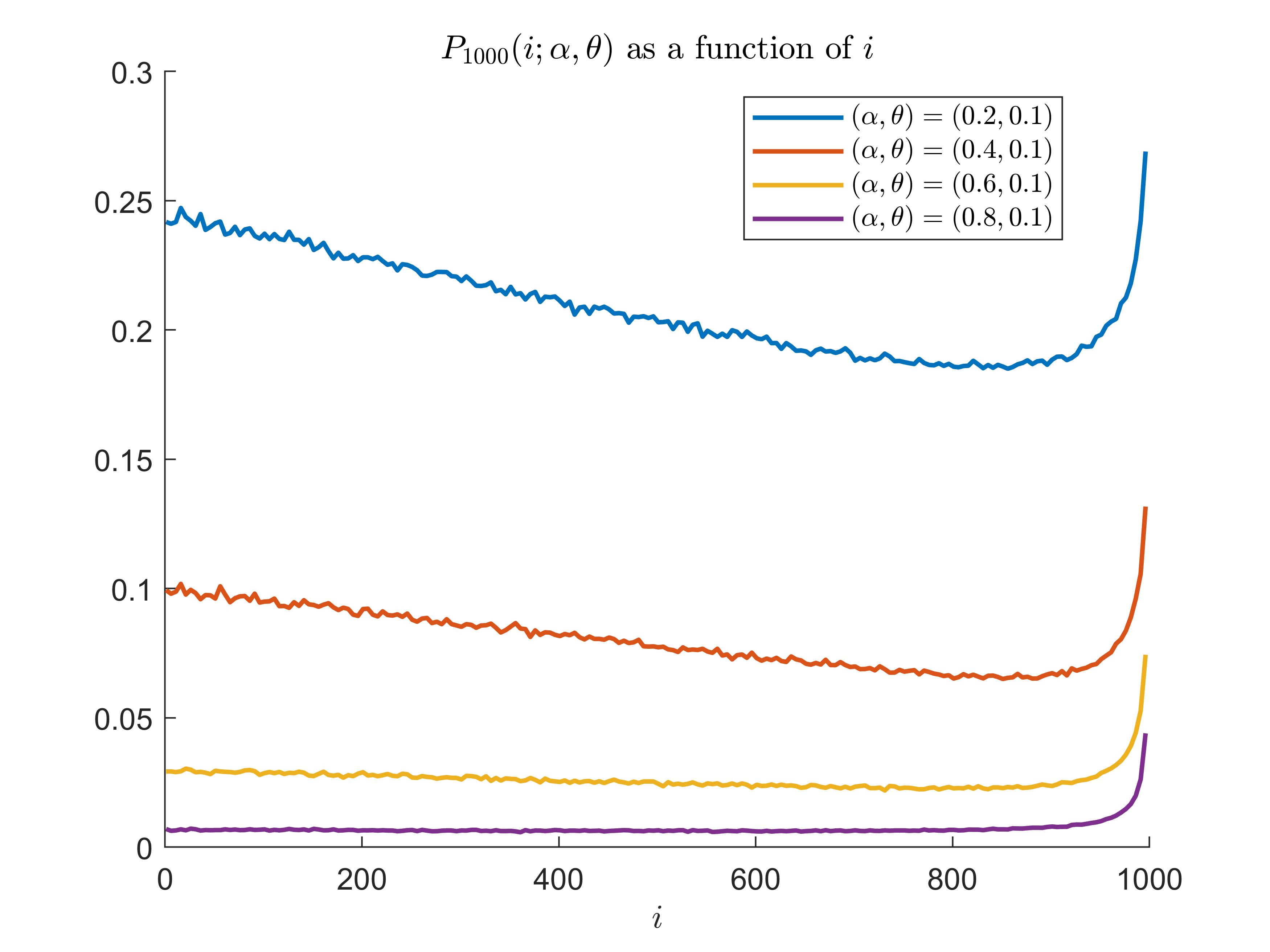}
\includegraphics[width=0.47\linewidth]{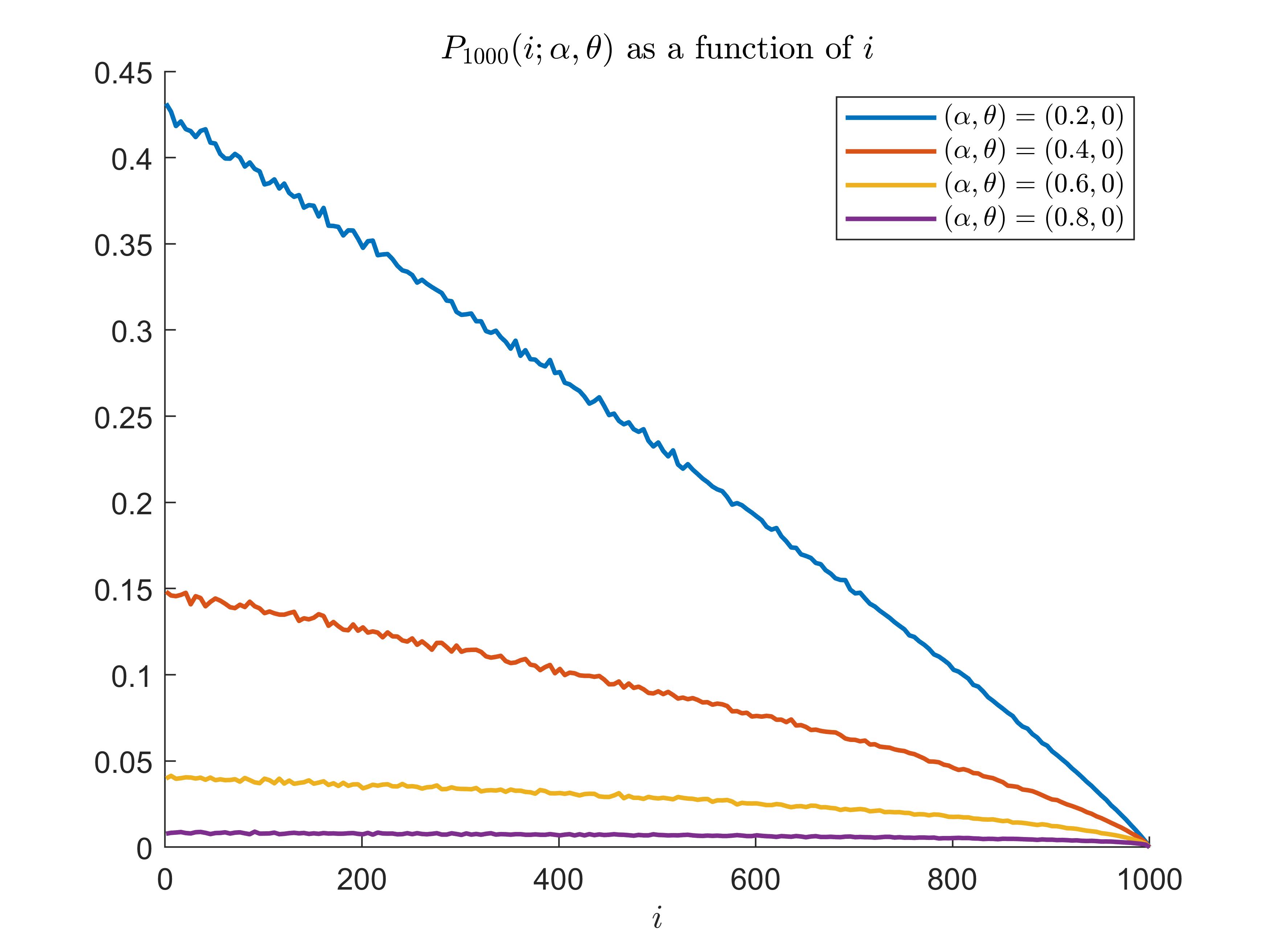} 
\caption{The probability $P_n(i; \alpha, \theta)$, where $n = 1000$, as a function of the frequency $i$, for different values of $\alpha$ and $\theta$. Each panel corresponds to a different $\theta$, from left to right, we have: $\theta = 0.1, 0$.} \label{fig:Pni_additional}
\end{figure}

\subsection{Distributions of the number of distinct species}
\label{app:K}

In this section we remind the distributions of $K_n$ and of $K_{m}^{(n)}$, which has been used in proofs and in the main paper.
The distribution of the number of distinct species $K_n$ is contained in  \citet[Chapter 3]{Pit(06)}. \cite{Pit(06)} has shown that 
that for any $k=1,\ldots,n$
\begin{equation} \label{eq:Kn}
\text{Pr}[K_{n}=k]=\frac{\left(\frac{\theta}{\alpha}\right)_{(k)}}{(\theta)_{(n)}}\mathscr{C}(n,k;\alpha),
\end{equation}
where $\mathscr{C}(n,k;\alpha)=(k!)^{-1}\sum_{0\leq i\leq k}{k\choose i}(-1)^{i}(-i\alpha)_{(n)}$ denotes the generalized factorial coefficient, with the proviso that $\mathscr{C}(0,0;\alpha)=1$ and $\mathscr{C}(m,0;\alpha)=0$ \citep[Chapter 3]{Cha(05)}. See also \citet[Chapter 2]{Pit(06)} for details. 

Now we remind the posterior distribution of $K_{m}^{(n)}=K_{n+m}-K_{n}$, that has been determined in  \citet[Proposition 1]{Lij(07)}. For any $s=0,\ldots,m$ one has 
\begin{equation}
 \label{eq:Kmn}
 \text{Pr}[K_m^{(n)}= s\, |\, K_n=k] = \frac{\left(k+\frac{\theta}{\alpha}\right)_{(s)}}{(\theta+n)_{(m)}} \frac{\Ccr(m,s; \alpha ; -n+k\alpha)}{\alpha^{s}}
\end{equation}
where $\mathscr{C}(n,k;\alpha,\gamma)=(k!)^{-1}\sum_{0\leq i\leq k}{k\choose i}(-1)^{i}(-i\alpha-\gamma)_{(n)}$ denotes the non-centered (shifted) generalized factorial coefficient, with the proviso that $\mathscr{C}(0,0;\alpha,\gamma)=1$ and $\mathscr{C}(n,0;\alpha,\gamma)=(-\gamma)_{(n)}$ \citep[Chapter 3]{Cha(05)}. 
Moreover, the expected value of \eqref{eq:Kmn} can be derived \citep{Fav(09)}  as 
\begin{equation}
 \label{eq:EKmn}
 \mathbb{E}[K_m^{(n)}\, |\, K_n=k] = 
 \left(k+\frac{\theta}{\alpha}\right)\left(\frac{(\theta+n+\alpha)_{(m)}}{(\theta+n)_{(m)}}-1\right).
 \end{equation}

\subsection{Proof of Equation (13)} \label{app:priori}

Recall the ordered EPPF in Equation (8), thus  we can evaluate the following probability
\begin{align*}
& \text{Pr}[M_{1,n} = m_1 , \ldots , M_{r,n} = m_r ,  K_n =k] = \sum_{(\star)}  \binom{n}{m_1, \ldots , m_k}  \Phi_k^{(n)}  (m_1, \ldots , m_k) \\
& \qquad \sum_{(\star)}  \binom{n}{m_1, \ldots , m_{r}, m_{r+1} ,\ldots , m_k}  \Phi_k^{(n)}  (m_1, \ldots , m_{r}, m_{r+1} ,\ldots , m_k)
\end{align*}
where the sum runs over all the vector of positive integers $(m_{r+1}, \ldots , m_{k})$ with $m_i \geq 1$ and 
$m_{r+1}+ \cdots +m_{k}=  n-m_1- \cdots -m_r$. Using the explicit expression of the previous ordered EPPF we get
\begin{align*}
&\text{Pr}[M_{1,n} = m_1 , \ldots , M_{r,n} = m_r ,  K_n =k]  \\
& \qquad = \sum_{(\star)}  \binom{n}{m_1, \ldots , m_{r}, m_{r+1} ,\ldots , m_k}  \frac{\prod_{j=r+1}^{k}  (1-\alpha)_{(m_j-1)}  \prod_{j=1}^{r}  (1-\alpha)_{(m_j-1)}}{(\theta+1)_{(n-1)}}\\
& \qquad\qquad\times  \theta^{-1}  \prod_{j=r+1}^{k}  \frac{\alpha r_{j+1}+\theta m_j}{r_j}
\frac{\prod_{j=1}^r  [\alpha (n-m_1 - \cdots -m_j )+ \theta m_j]}{n (n-m_1) \cdots (n-m_1 -\cdots -m_{r-1})}\\
& \quad =  \frac{n!}{(\theta+1)_{(n-1)}}\frac{1}{(n-m_1 - \cdots -m_r)!}  
\frac{\prod_{j=1}^r  [\alpha (n-m_1 - \cdots -m_j )+ \theta m_j]}{n (n-m_1) \cdots (n-m_1 -\cdots -m_{r-1})}  \cdot \prod_{j=1}^r  \frac{(1-\alpha)_{(m_j-1)}}{m_j !}  \\
& \qquad\qquad \times  \sum_{(\star)}  \binom{n-m_1 - \cdots -m_r}{m_{r+1}, \ldots , m_{k}}  \theta^{-1} \prod_{j=r+1}^{k}  (1-\alpha)_{(m_j-1)}
\prod_{j=r+1}^{k}  \frac{\alpha (m_{j+1} + \cdots+m_{k}) +\theta m_j }{m_j +\cdots +m_k}.
\end{align*}
If we divide the last sum by the coefficient $(\theta+1)_{(n-m_1 - \ldots - m_r-1)}$, we can recognize the probability 
$\text{Pr}[K_{n-m_1 - \ldots - m_r}= k-r]$, thus  one has
\begin{align*}
&\text{Pr}[M_{1,n} = m_1 , \ldots , M_{r,n} = m_r ,  K_n =k ] = \frac{n!}{(\theta+1)_{(n-1)}}\frac{(\theta+1)_{(n-m_1 - \ldots - m_r-1)}}{(n-m_1 - \cdots -m_r)!}  \\& \qquad \times
\frac{\prod_{j=1}^r  [\alpha (n-m_1 - \cdots -m_j )+ \theta m_j]}{n (n-m_1) \cdots (n-m_1 -\cdots -m_{r-1})}  \cdot \prod_{j=1}^r  \frac{(1-\alpha)_{(m_j-1)}}{m_j !} \text{Pr}[K_{n-m_1 - \cdots - m_r}= k-r]  .
\end{align*}
Now we can exploit \eqref{eq:Kn} to get
\begin{align*}
&\text{Pr}[M_{1,n} = m_1 , \ldots , M_{r,n} = m_r ,  K_n =k] = \frac{n!}{(\theta+1)_{(n-1)}}\frac{ \Ccr (n-m_1 - \ldots - m_r,k-r;\alpha)}{(n-m_1 - \cdots -m_r)!}  \\& \qquad \times
\frac{\prod_{j=1}^r  [\alpha (n-m_1 - \cdots -m_j )+ \theta m_j]}{n (n-m_1) \cdots (n-m_1 -\cdots -m_{r-1})}  \cdot \prod_{j=1}^r  \frac{(1-\alpha)_{(m_j-1)}}{m_j !} \frac{\prod_{i=1}^{k-r-1} (\theta+i \alpha)}{\alpha^{k-r} } .
\end{align*}
By dividing the previous expression by the probability $\text{Pr}[K_n=k]$ in \eqref{eq:Kn}, one may now obtain the following conditional distribution 
\begin{align} \label{eq:prior_frequenze_ordinate_cond}
& \text{Pr} [M_{1,n} = m_1 , \ldots , M_{r,n} = m_r\, |\,  K_n =k]\\
&\notag\quad= {n\choose m_{1},\ldots,m_{r},n-|\mathbf{m}|_{1:r}}\frac{\displaystyle\prod_{j=1}^r \frac{[\alpha (n-|\mathbf{m}|_{1:j} )+ \theta m_j](1-\alpha)_{(m_j-1)}}{n-|\mathbf{m}|_{1:j-1}}}{\displaystyle\frac{\Ccr (n,k; \alpha)}{\left(\frac{\theta}{\alpha}+k-r\right)_{(r)}\Ccr (n-|\mathbf{m}|_{1:r},k-r;\alpha)}}.
\end{align}
In order to obtain Equation (13), note that:
\begin{align*}
\text{Pr}[M_{1,n} = m_1 , \ldots , M_{r,n} = m_r ,  K_n \geq r] = \sum_{k=r}^s \text{Pr}[M_{1,n} = m_1 , \ldots , M_{r,n} = m_r ,  K_n =k] 
\end{align*}
 where $s= r+n-m_1- \cdots-m_r$, thus, we get
 \begin{align*}
&\text{Pr}[M_{1,n} = m_1 , \ldots , M_{r,n} = m_r ,  K_n \geq r] = \sum_{k=r}^s 
\frac{n!}{(\theta+1)_{(n-1)}}\frac{(\theta+1)_{(n-m_1 - \ldots - m_r-1)}}{(n-m_1 - \cdots -m_r)!}  \\& \qquad \times
\frac{\prod_{j=1}^r  [\alpha (n-m_1 - \cdots -m_j )+ \theta m_j]}{n (n-m_1) \cdots (n-m_1 -\cdots -m_{r-1})}  \cdot \prod_{j=1}^r  \frac{(1-\alpha)_{(m_j-1)}}{m_j !} \text{Pr}[K_{n-m_1 - \cdots - m_r}= k-r]
\end{align*}
where we observe that the sum over $k$ of the probability $\text{Pr}[K_{n-m_1 - \cdots - m_r}= k-r]$ is equal to $1$, as a consequence we obtain
 \begin{align*}
&\text{Pr}[M_{1,n} = m_1 , \ldots , M_{r,n} = m_r ,  K_n \geq r] =
\frac{n!}{(\theta+1)_{(n-1)}}\frac{(\theta+1)_{(n-m_1 - \ldots - m_r-1)}}{(n-m_1 - \cdots -m_r)!}  \\& \qquad \times
\frac{\prod_{j=1}^r  [\alpha (n-m_1 - \cdots -m_j )+ \theta m_j]}{n (n-m_1) \cdots (n-m_1 -\cdots -m_{r-1})}  \cdot \prod_{j=1}^r  \frac{(1-\alpha)_{(m_j-1)}}{m_j !}
\end{align*}
and the thesis follows.\\
\qed

\subsection{Proof of Theorem 1} \label{app:posterior}

We first focus on the proof of Equation (17). To this end, evaluate
\begin{align*}
&\text{Pr}[A_r, W_{1,n+m} = w_1 , \ldots , W_{r,n+m} = w_r ,  K_m^{(n)} = s | K_n=k , \mathbf{M}_{n}=\mathbf{m}]\\
&\qquad = \frac{\text{Pr}[A_r, W_{1,n+m} = w_1 , \ldots , W_{r,n+m} = w_r ,  K_m^{(n)} = s , K_n=k , M_{i,n}= m_i, \; i =1, \ldots , k]}{
\text{Pr}[ K_n=k , M_{i,n}= m_i, \; i =1, \ldots , k]}
\end{align*}
where the numerator may be evaluated along similar lines as in Section \ref{app:priori} of the Supplementary Material. In fact one has
\begin{align*}
& \text{Pr}[A_r, W_{1,n+m} = w_1 , \ldots , W_{r,n+m} = w_r ,  K_m^{(n)} = s , K_n=k , M_{i,n}= m_i, \; i =1, \ldots , k]\\
& \qquad = \sum_{(\star)}  \binom{n}{n_{r+1}, \ldots , n_{k+s}}  
\binom{m}{w_{1}, \ldots , w_{r}, w_{r+1} , \ldots , w_{k+s}} \frac{\prod_{j=r+1}^{k+s} (1-\alpha)_{(w_{j}+n_{j}-1)}}{
(\theta+1)_{(n+m-1)}} \\
& \qquad\qquad \times  \prod_{j=1}^r (1-\alpha)_{(w_j-1)} \theta^{-1} \prod_{j=r+1}^{k+s} \frac{\alpha (w_{j+1}+n_{j+1}+ \cdots + w_{k+s}+n_{k+s})+\theta (w_j +n_j)}{w_{j}+n_{j}+ \cdots + w_{k+s}+n_{k+s}}\\
& \qquad\qquad\qquad \times \frac{\prod_{j=1}^r (\alpha (n+m-w_1 - \cdots -w_j) +\theta w_j)}{(n+m) (n+m-w_1) \cdots (n+m-w_1- \cdots - w_{r-1})}
\end{align*}
where the sum runs over all the vectors $(w_{r+1}, \ldots , w_{k+s})$ and $(n_{r+1}, \ldots , n_{k+s})$ such that $w_i \geq 0$ and $w_{r+1}+\cdots+w_{k+s}=m-w_1- \cdots-w_r$, whereas $n_{r+1}+\cdots+n_{k+s}= n$ and $n_{i_j}= m_j$ for $j=1, \ldots, k$ and $i_1 < \cdots < i_k$ the other values of $n_i$ are equal to $0$; 
finally $w_i+n_i \geq 1$ for any $i=r+1, \ldots , k+s$. Thus
\begin{align*}
&\text{Pr}[A_r, W_{1,n+m} = w_1 , \ldots , W_{r,n+m} = w_r ,  K_m^{(n)} = s , K_n=k , M_{i,n}= m_i, \; i =1, \ldots , k]\\
& \qquad = \frac{\prod_{j=1}^r (\alpha (n+m-w_1 - \cdots -w_j) +\theta w_j)}{(n+m) (n+m-w_1) \cdots (n+m-w_1- \cdots - w_{r-1})} \prod_{j=1}^r \frac{(1-\alpha)_{(w_j-1)}}{w_j !}\\
& \qquad \times \frac{m!}{ (m-w_1 - \cdots -w_r)! (\theta+1)_{(n+m-1)}}  
\sum_{(\star)} \binom{n}{n_{r+1}, \ldots , n_{k+s}} \binom{m-w_1 - \cdots -w_r}{w_{r+1}, \ldots , w_{k+s}}   \\
& \qquad\qquad\times  \theta^{-1}\prod_{j=r+1}^{k+s} (1-\alpha)_{(w_j+n_j-1)} \prod_{j=r+1}^{k+s} \frac{\alpha (w_{j+1}+n_{j+1}+ \cdots + w_{k+s}+n_{k+s})+\theta (w_j +n_j)}{w_{j}+n_{j}+ \cdots + w_{k+s}+n_{k+s}}
\end{align*}
if we divide the last sum by the coefficient $(\theta+1)_{(n+m-w_1- \cdots -w_r -1)}$ we obtain the following 
\begin{equation}
\label{eq:su_post}
\begin{split}
& \frac{1}{(\theta+1)_{(n+m-w_1- \cdots -w_r -1)}}\sum_{(\star)}  \binom{n}{n_{r+1}, \ldots , n_{k+s}} \binom{m-w_1 - \cdots -w_r}{w_{r+1}, \ldots , w_{k+s}}  \\
& \quad \times  \theta^{-1}\prod_{j=r+1}^{k+s} (1-\alpha)_{(w_j+n_j-1)} \\
& \qquad\qquad \times\prod_{j=r+1}^{k+s} \frac{\alpha (w_{j+1}+n_{j+1}+ \cdots + w_{k+s}+n_{k+s})+\theta (w_j +n_j)}{w_j+n_j+ \cdots + w_{k+s}+n_{k+s}}\\
 & \qquad  = \text{Pr}[K_{m-w_1- \cdots -w_r}^{(n)}= s-r |  K_n=k , \mathbf{M}_{n}=\mathbf{m}] \\
 & \qquad\qquad \times\text{Pr}[K_n=k ,\mathbf{M}_{n}=\mathbf{m}].
\end{split}
\end{equation}
Substituting \eqref{eq:su_post} in the expression of the joint probability under study, we obtain 
\begin{align*}
&\text{Pr}[A_r, W_{1,n+m} = w_1 , \ldots , W_{r,n+m} = w_r ,  K_m^{(n)} = s , K_n=k , M_{i,n}= m_i, \; i =1, \ldots , k]\\
& \qquad = \frac{\prod_{j=1}^r (\alpha (n+m-w_1 - \cdots -w_j) +\theta w_j)}{(n+m) (n+m-w_1) \cdots (n+m-w_1- \cdots - w_{r-1})} \prod_{j=1}^r \frac{(1-\alpha)_{(w_j-1)}}{w_j !}\\
& \qquad\qquad \times \frac{m! (\theta+1)_{(n+m-w_1- \cdots -w_r -1)}}{ (m-w_1 - \cdots -w_r)! (\theta+1)_{(n+m-1)}}  \\
& \qquad\qquad\qquad \times
 \text{Pr}[K_{m-w_1- \cdots -w_r}^{(n)}= s-r |  K_n=k , \mathbf{M}_{n}=\mathbf{m}] \\
 & \qquad\qquad\qquad\qquad \times\text{Pr} [K_n=k , \mathbf{M}_{n}=\mathbf{m}].
\end{align*}
As a consequence:
\begin{align*}
&\text{Pr}[A_r, W_{1,n+m} = w_1 , \ldots , W_{r,n+m} = w_r ,  K_m^{(n)} = s | K_n=k , \mathbf{M}_{n}=\mathbf{m}]\\
& \qquad = \frac{\prod_{j=1}^r (\alpha (n+m-w_1 - \cdots -w_j) +\theta w_j)}{(n+m) (n+m-w_1) \cdots (n+m-w_1- \cdots - w_{r-1})} \prod_{j=1}^r \frac{(1-\alpha)_{(w_j-1)}}{w_j !}\\
& \qquad\qquad \times \frac{m! (\theta+1)_{(n+m-w_1- \cdots -w_r -1)}}{ (m-w_1 - \cdots -w_r)! (\theta+1)_{(n+m-1)}}  \\
& \qquad\qquad\qquad \times
 \text{Pr}[K_{m-w_1- \cdots -w_r}^{(n)}= s-r |  K_n=k , \mathbf{M}_{n}=\mathbf{m}]
\end{align*}
where the last probability can be evaluated resorting to the posterior distribution of $K_{m-w_1- \cdots -w_r}^{(n)}$ in \eqref{eq:Kmn}. Summing over $s= r, \ldots , r+m-w_1-\cdots-w_r$ the previous expression, we obtain Equation (17).\\

We now move to the proof of Equation (18). Proceeding along the same lines as in the first part of the proof, we have
\begin{align*}
&\text{Pr}[B_r, W_{1,n+m} = w_1+m_1 , \ldots , W_{r,n+m} = w_r+m_r ,  K_m^{(n)} = s | K_n=k , \mathbf{M}_{n}=\mathbf{m}]\\
&\quad = \frac{\text{Pr}[B_r, W_{1,n+m} = w_1+m_1 , \ldots , W_{r,n+m} = w_r +m_r,  K_m^{(n)} = s , K_n=k , \mathbf{M}_{n}=\mathbf{m}]}{
\text{Pr}[ K_n=k , \mathbf{M}_{n}=\mathbf{m}]}
\end{align*}
where the numerator may be evaluated along similar lines as in the proof of Theorem 1. In fact, as for the numerator,  one has
\begin{align*}
& \text{Pr} [B_r, W_{1,n+m} = w_1+m_1 , \ldots , W_{r,n+m} = w_r +m_r,  K_m^{(n)} = s , K_n=k , \mathbf{M}_{n}=\mathbf{m}]\\
& \qquad = \sum_{(\star)}  \binom{n}{m_1, \ldots , m_r, n_{r+1}, \ldots , n_{k+s}}  
\binom{m}{w_1, \ldots , w_r,  w_{r+1}, \ldots , w_{k+s}}
\frac{\prod_{j=r+1}^{k+s} (1-\alpha)_{(w_j+n_j-1)}}{
(\theta+1)_{(n+m-1)}} \\
& \qquad\qquad \times  \prod_{j=1}^r (1-\alpha)_{(w_j+m_j-1)} \theta^{-1} \prod_{j=r+1}^{k+s} \frac{\alpha (w_{j+1}+n_{j+1}+ \cdots + w_{k+s}+n_{k+s})+\theta (w_j +n_j)}{w_j+n_j+ \cdots + w_{k+s}+n_{k+s}}\\
& \qquad\qquad\qquad \times \frac{\prod_{j=1}^r (\alpha (n+m-w_1 -m_1- \cdots -w_j-m_j) +\theta (w_j+m_j))}{(n+m) (n+m-w_1-m_1) \cdots (n+m-w_1-m_1- \cdots - w_{r-1}-m_{r-1})}
\end{align*}
where the sum runs over all the vectors $(w_{r+1}, \ldots , w_{k+s})$ and $(n_{r+1}, \ldots , n_{k+s})$ such that $w_i \geq 0$ and $w_{r+1}+\cdots+w_{k+s}=m-w_1- \cdots-w_r$, whereas $n_{r+1}+\cdots+n_{k+s}= n-m_1- \cdots-m_r$ and $n_{i_j}= m_j$ for $j=1, \ldots, k-r$ and $i_1 < \cdots < i_{k-r}$ the other values of $n_i$ are equal to $0$; 
finally $w_i+n_i \geq 1$ for any $i=r+1, \ldots , k+s$. Thus
\begin{align*}
& \text{Pr}[B_r, W_{1,n+m} = w_1+m_1 , \ldots , W_{r,n+m} = w_r+m_r ,  K_m^{(n)} = s , K_n=k , \mathbf{M}_{n}=\mathbf{m}]\\
& \qquad = \frac{\prod_{j=1}^r (\alpha (n+m-w_1-m_1 - \cdots -w_j-m_j) +\theta (w_j+m_j))}{(n+m) (n+m-w_1-m_1) \cdots (n+m-w_1-m_1- \cdots - w_{r-1}-m_{r-1})} \\
& \qquad \times  \prod_{j=1}^r \frac{(1-\alpha)_{(w_j+m_j-1)}}{w_j ! m_j!} \frac{m!n!}{ (m-w_1 - \cdots -w_r)! (n-m_1-\cdots-m_r)! (\theta+1)_{(n+m-1)}} 
  \\
& \qquad\qquad \times \sum_{(\star)} \binom{n-m_1-\cdots-m_r}{n_{r+1}, \ldots , n_{k+s}} \binom{m-w_1 - \cdots -w_r}{w_{r+1}, \ldots , w_{k+s}} \theta^{-1}\prod_{j=r+1}^{k+s} (1-\alpha)_{(w_j+n_j-1)}  \\
& \qquad\qquad\qquad\times  \prod_{j=r+1}^{k+s} \frac{\alpha (w_{j+1}+n_{j+1}+ \cdots + w_{k+s}+n_{k+s})+\theta (w_j +n_j)}{w_j+n_j+ \cdots + w_{k+s}+n_{k+s}}
\end{align*}
if we divide the last sum by the coefficient $(\theta+1)_{(n+m-w_1-m_1- \cdots -w_r-m_r -1)}$ we obtain the following 
\begin{equation}
\label{eq:su_post_old}
\begin{split}
& \frac{1}{(\theta+1)_{(n+m-w_1-m_1- \cdots -w_r-m_r -1)}}\sum_{(\star)}  \binom{n-m_1-\cdots-m_r}{n_{r+1}, \ldots , n_{k+s}} \binom{m-w_1 - \cdots -w_r}{w_{r+1}, \ldots , w_{k+s}}  \\
& \quad \times  \theta^{-1}\prod_{j=r+1}^{k+s} (1-\alpha)_{(w_j+n_j-1)}\\
& \qquad\qquad \times\prod_{j=r+1}^{k+s} \frac{\alpha (w_{j+1}+n_{j+1}+ \cdots + w_{k+s}+n_{k+s})+\theta (w_j +n_j)}{w_j+n_j+ \cdots + w_{k+s}+n_{k+s}}\\
 & \qquad  =  \text{Pr}[K_{m-w_1- \cdots -w_r}^{(n-m_1-\cdots-m_r)}= s |  K_{n-m_1-\cdots-m_r}=k -r, M_{i,n}= m_i, \; i =r+1, \ldots , k] \\
 & \qquad\qquad \times\text{Pr}[ K_{n-m_1-\cdots-m_r}=k-r , M_{i,n}= m_i, \; i =r+1, \ldots , k].
\end{split}
\end{equation}
Substituting \eqref{eq:su_post_old} in the expression of the joint probability under study, we obtain 
\begin{align*}
& \text{Pr}[B_r, W_{1,n+m} = w_1+m_1 , \ldots , W_{r,n+m} = w_r+m_r ,  K_m^{(n)} = s , K_n=k , \mathbf{M}_{n}=\mathbf{m}]\\
& \qquad = \frac{\prod_{j=1}^r (\alpha (n+m-w_1-m_1 - \cdots -w_j-m_j) +\theta (w_j+m_j))}{(n+m) (n+m-w_1-m_1) \cdots (n+m-w_1-m_j- \cdots - w_{r-1}-m_{r-1})} \\
& \qquad\quad \times \prod_{j=1}^r \frac{(1-\alpha)_{(w_j+m_j-1)}}{w_j !m_j!} \cdot
 \frac{m!n!(\theta+1)_{(n+m-w_1-m_1-\cdots-w_r-m_r-1)} }{ (m-w_1 - \cdots -w_r)! (n-m_1-\cdots-m_r)!(\theta+1)_{(n+m-1)}}  \\
& \qquad\quad\quad \qquad\times
 \text{Pr} [K_{m-w_1- \cdots -w_r}^{(n-m_1-\cdots-m_r)}= s |  K_{n-m_1-\cdots-m_r}=k-r , M_{i,n}= m_i, \; i =r+1, \ldots , k] \\
 & \qquad\quad\quad\quad \qquad\times\text{Pr}[ K_{n-m_1-\cdots-m_r}=k-r , M_{i,n}= m_i, \; i =r+1, \ldots , k].
\end{align*}
As a consequence:
\begin{align*}
&\text{Pr}[B_r, W_{1,n+m} = w_1+m_1 , \ldots , W_{r,n+m} = w_r+m_r ,  K_m^{(n)} = s | K_n=k , \mathbf{M}_{n}=\mathbf{m}]\\
& \qquad = \frac{\prod_{j=1}^r (\alpha (n+m-w_1-m_1 - \cdots -w_j-m_j) +\theta (w_j+m_j))}{(n+m) (n+m-w_1-m_1) \cdots (n+m-w_1-m_j- \cdots - w_{r-1}-m_{r-1})} \\
& \qquad\quad \times \prod_{j=1}^r \frac{(1-\alpha)_{(w_j+m_j-1)}}{w_j !m_j!} \cdot
 \frac{m!n!(\theta+1)_{(n+m-w_1-m_1-\cdots-w_r-m_r-1)} }{ (m-w_1 - \cdots -w_r)! (n-m_1-\cdots-m_r)!(\theta+1)_{(n+m-1)}}  \\
& \qquad\quad\quad \qquad\times
 \text{Pr}[K_{m-w_1- \cdots -w_r}^{(n-m_1-\cdots-m_r)}= s |  K_{n-m_1-\cdots-m_r}=k-r , M_{i,n}= m_i, \; i =r+1, \ldots , k] \\
 & \qquad\quad\quad\quad \qquad\times\frac{\text{Pr}[K_{n-m_1-\cdots-m_r}=k-r , M_{i,n}= m_i, \; i =r+1, \ldots , k]}{
\text{Pr} [ K_{n}=k , M_{i,n}= m_i, \; i =1, \ldots , k] }.
\end{align*}
Thus we have to evaluate the last ratio in the previous expression, this calculation can be easily addressed resorting to the available expression of the ordered EPPF for the PYP (8). Indeed we get:
\begin{align*}
&\frac{\text{Pr} [K_{n-m_1-\cdots-m_r}=k-r , M_{i,n}= m_i, \; i =r+1, \ldots , k]}{
\text{Pr} [ K_{n}=k , M_{i,n}= m_i, \; i =1, \ldots , k] }= \frac{\binom{n-m_1-\cdots-m_r}{m_{r+1}, \ldots , m_k}}{\binom{n}{m_1, \ldots , m_k}}  \\
& \qquad  \times\frac{(\theta+1)_{(n-1)}}{(\theta+1)_{(n-m_1-\cdots-m_r-1)}} \cdot \frac{1}{\prod_{j=1}^r (1-\alpha)_{(m_j-1)}}
\cdot \frac{n(n-m_1)\cdots (n-m_1-\cdots-m_{r-1})}{\prod_{j=1}^r (\alpha (n-m_1-\cdots-m_j )+\theta m_j)}.
\end{align*}
Substituting the previous equality in the expression of the conditional probability under study we get:
\begin{align*}
&\text{Pr} [B_r, W_{1,n+m} = w_1+m_1 , \ldots , W_{r,n+m} = w_r+m_r ,  K_m^{(n)} = s | K_n=k , \mathbf{M}_{n}=\mathbf{m}]\\
& \qquad = \frac{\prod_{j=1}^r (\alpha (n+m-w_1-m_1 - \cdots -w_j-m_j) +\theta (w_j+m_j))}{(n+m) (n+m-w_1-m_1) \cdots (n+m-w_1-m_j- \cdots - w_{r-1}-m_{r-1})} \\
& \qquad\quad \times \prod_{j=1}^r \frac{(1-\alpha)_{(w_j+m_j-1)}}{w_j !(1-\alpha)_{(m_j-1)}} \cdot
 \frac{m!(\theta+1)_{(n+m-w_1-m_1-\cdots-w_r-m_r-1)} (\theta+1)_{(n-1)}}{ (m-w_1 - \cdots -w_r)!(\theta+1)_{(n+m-1)} (\theta+1)_{(n-m_1-\cdots-m_r-1)}}  \\
& \qquad\quad\quad \qquad\times \text{Pr} [K_{m-w_1- \cdots -w_r}^{(n-m_1-\cdots-m_r)}= s |  K_{n-m_1-\cdots-m_r}=k-r , M_{i,n}= m_i, \; i =r+1, \ldots , k]  \\
 & \qquad\quad\quad\quad \qquad\times \frac{n(n-m_1)\cdots (n-m_1-\cdots-m_{r-1})}{\prod_{j=1}^r (\alpha (n-m_1-\cdots-m_j )+\theta m_j)}  .
\end{align*}
Summing overall the possible values of $s \in \{ 0, \ldots , m-w_1-\cdots -w_r\}$ we finally get the expression Equation (18).\\
\qed

\subsection{Proof of Equation (22)} \label{app:expectation}

We focus on the proof of Equation (22) by evaluating the two expected values
\begin{equation} \label{eq:EW1A}
    \E [W_{1, n+m} \indic_{A_1}| K_n =k , \mathbf{M}_{n}=\mathbf{m}] 
\end{equation}
and
\begin{equation} \label{eq:EW1B}
    \E [W_{1, n+m} \indic_{B_1}| K_n =k , \mathbf{M}_{n}=\mathbf{m}].
\end{equation}
To calculate \eqref{eq:EW1A}, we need to evaluate the posterior expected value of $W_{1,n+m}$ on the event $A_1$:
\begin{align}
&\E [W_{1, n+m} \indic_{A_1}| K_n =k , \mathbf{M}_{n}=\mathbf{m}] \nonumber\\
 &\qquad = \sum_{w=1}^m
w \binom{m}{w}\frac{\alpha (n+m-w) +\theta w}{n+m}  \cdot  \frac{(1-\alpha)_{(w-1)} (\theta+1)_{(n+m-w-1)}}{(\theta+1)_{(n+m-1)}} \nonumber\\
& \qquad =  \frac{m}{(n+m) (\theta+1)_{(n+m-1)}} \nonumber \\
& \qquad\qquad \times \sum_{w=0}^{m-1} \binom{m-1}{w} [\alpha  (n+m-w-1)+\theta (w+1)]
(1-\alpha)_{(w)} (\theta+1)_{(n+m-w-2)}   \nonumber\\
& \qquad =  \frac{m}{(n+m)\Gamma (\theta+n+m)\Gamma (1-\alpha)}  (S_1+S_2)  \label{eq:WA_S12}
\end{align}
where $S_1$ and $S_2$ are the two sums defined as follows
\begin{align}
\label{eq:sum1}
S_1 := \sum_{w=0}^{m-1} \binom{m-1}{w} [\alpha (n+m-1) +\theta] \Gamma (w-\alpha+1) \Gamma (\theta+n+m-q-1)\\
S_2 :=  \sum_{w=0}^{m-1}  \binom{m-1}{w} w (\theta-\alpha) \Gamma (w-\alpha+1) \Gamma (\theta+n+m-w-1).
\label{eq:sum2}
\end{align} 
We now concentrate on the two sums separately. In order to evaluate the first one \eqref{eq:sum1} we exploit the integral representation of the beta function:
\begin{align*}
S_1 & =  [\alpha (n+m-1)+\theta] \Gamma (\theta+n+m-\alpha) \sum_{w=0}^{m-1} \binom{m-1}{w} B (w-\alpha+1, \theta+n+m-w-1)  \\
& = [\alpha (n+m-1)+\theta] \Gamma (\theta+n+m-\alpha) \\
& \qquad\qquad \times \int_0^1  x^{-\alpha}  (1-x)^{\theta+n+m-2}
 \sum_{w=0}^{m-1}  \binom{m-1}{w}   \frac{x^w}{(1-x)^w} \de x \\
 & =  [\alpha (n+m-1)+\theta] \Gamma (\theta+n+m-\alpha)   \int_0^1  x^{-\alpha} (1-x)^{\theta+n+m-2} 
 \Big( 1+\frac{x}{1-x} \Big)^{m-1}  \de x  \\
 & =  [\alpha (n+m-1)+\theta] \Gamma (\theta+n+m-\alpha)  \Gamma (\theta+n+m-\alpha)   B (1-\alpha, \theta+n) .
\end{align*}
The evaluation of the sum \eqref{eq:sum2} proceeds in a similar fashion. We first observe that for $w=0$ the summand 
in $S_2$ is zero, thus we can write:
\begin{align*}
S_2 & = (\theta-\alpha) (m-1) \sum_{w=0}^{m-2} \binom{m-2}{w} \Gamma (w-\alpha+2) \Gamma (\theta+n+m-w-2)\\
& = (\theta-\alpha) (m-1)  \Gamma (\theta+n+m-\alpha) \sum_{w=0}^{m-2} \binom{m-2}{w} B(w-\alpha+2, \theta+n+m-w-2)\\
& = (\theta-\alpha) (m-1)  \Gamma (\theta+n+m-\alpha) \int_0^1 \sum_{w=0}^{m-2} \binom{m-2}{w} x^{w-\alpha+1} (1-x)^{\theta+n+m-w-3}\de x\\
&  = (\theta-\alpha) (m-1)  \Gamma (\theta+n+m-\alpha) \int_0^1  x^{-\alpha+1} (1-x)^{\theta+n+m-3} \Big(  1+\frac{x}{1-x}\Big)^{m-2}  \de x\\
& =(\theta-\alpha) (m-1)  \Gamma (\theta+n+m-\alpha) B (2-\alpha , \theta+n).
\end{align*}
We can now put together the expressions we determined for $S_1$ and $S_2$ in \eqref{eq:WA_S12} to get
\begin{align*}
&\E [W_{1, n+m} \indic_{A_1}| K_n =k , \mathbf{M}_{n}=\mathbf{m}]  \\
& \qquad  =  \frac{m   \Gamma (\theta+n+m-\alpha) \Gamma (\theta+n)}{(n+m)  \Gamma (\theta+n+1-\alpha) \Gamma (\theta+n+m)} \\
& \qquad\qquad \qquad\times \left\{  \alpha (n+m-1)+\theta +\frac{(\theta-\alpha) (m-1)(1-\alpha)}{\theta+n+1-\alpha}  \right\}\\
& \qquad =  \frac{m}{n+m} \frac{(\theta+n-\alpha)_{(m)}}{(\theta+n)_{(m)} (\theta+n-\alpha)}  \left\{ \alpha n+\theta
+\alpha (m-1) +\frac{(\theta-\alpha)(m-1)(1-\alpha)}{\theta+n+1-\alpha} \right\} \\
&  \qquad = \frac{m}{n+m} \frac{(\theta+n-\alpha)_{(m)}}{(\theta+n)_{(m)} (\theta+n-\alpha)} \cdot (\theta+n \alpha) 
\frac{\theta+n+m-\alpha}{\theta+n+1-\alpha}
\end{align*}
where the last equality follows from straightforward calculations. Thus the expression of \eqref{eq:EW1A} follows
\begin{equation} \label{eq:EWA1_final}
\E [W_{1, n+m} \indic_{A_1}| K_n =k , \mathbf{M}_{n}=\mathbf{m}] = \frac{m}{n+m}\cdot \frac{\theta+n\alpha}{\theta+n+1-\alpha} \frac{(\theta+n+1-\alpha)_{(m)}}{(\theta+n)_{(m)}}.
\end{equation}

We now concentrate on the evaluation  of \eqref{eq:EW1B}, which follows in a similar manner.
\begin{align*}
&\E [W_{1, n+m} \indic_{B_1}| K_n =k , \mathbf{M}_{n}=\mathbf{m}] \\
 &\qquad = \sum_{w=m_1}^{m_1+m} w  \frac{\alpha (n+m-w) +\theta w}{n+m} \cdot  \frac{(1-\alpha)_{(w-1)} (\theta+1)_{(n+m-w-1)}}{(\theta+1)_{(n+m-1)}  } \\
 & \qquad \qquad \qquad \times \frac{n}{\alpha (n-m_1) +\theta m_1} \binom{m}{w-m_1} \frac{(\theta+1)_{(n-1)}}{(1-\alpha)_{(m_1-1)} (\theta+1)_{(n-m_1-1)}}\\
 &  \qquad =  \sum_{w=0}^{m} (w+m_1)  \frac{\alpha (n+m-w-m_1) +\theta (w+m_1)}{n+m} \cdot  \frac{(1-\alpha)_{(w+m_1-1)} (\theta+1)_{(n+m-w-m_1-1)}}{(\theta+1)_{(n+m-1)}  } \\
 & \qquad \qquad \qquad \times \frac{n}{\alpha (n-m_1) +\theta m_1} \binom{m}{w} \frac{(\theta+1)_{(n-1)}}{(1-\alpha)_{(m_1-1)} (\theta+1)_{(n-m_1-1)}}\\
 & \qquad =  \frac{n  }{(n+m) (\alpha (n-m_1) +\theta m_1)}  \cdot  \frac{(\theta+1)_{(n-1)}}{(\theta+1)_{(n+m-1)}
 (1-\alpha)_{(m_1-1)}  (\theta+1)_{(n-m_1-1)}}  \\
 & \qquad\qquad\qquad \times  \frac{1}{\Gamma (1-\alpha) \Gamma (\theta+1)}  \sum_{w=0}^m  [\alpha (n+m-w-m_1)
 +\theta  (w+m_1)] (w+m_1)  \binom{m}{w}\\
& \qquad\qquad\qquad\qquad \times  \Gamma (w+m_1-\alpha) \Gamma (\theta+n+m-w-m_1).
\end{align*}
With some simple rearrangements of the terms the sum in the previous expression can be written as
\begin{equation}
\label{eq:sommone}
\begin{split}
&\sum_{w=0}^m  [\alpha (n+m-w-m_1)
+\theta  (w+m_1)] (w+m_1)\\
& \qquad\qquad \times  \binom{m}{w} \Gamma (w+m_1-\alpha) \Gamma (\theta+n+m-w-m_1)\\
& \qquad\qquad\qquad = [\alpha (n+m-m_1)+\theta m_1] m_1  R_1   \\
& \qquad\qquad\qquad\qquad + [m_1 (\theta-\alpha)+\alpha (n+m-m_1) +\theta m_1] R_2  +
(\theta-\alpha)  R_3
\end{split}
\end{equation}
where we have defined
\begin{align}
R_1 & :=  \sum_{w=0}^m  \binom{m}{w} \Gamma (w+m_1-\alpha) \Gamma (\theta+n+m-w-m_1)  \label{eq:R1}\\
R_2 & :=  \sum_{w=0}^m  w \binom{m}{w} \Gamma (w+m_1-\alpha) \Gamma  (\theta+n+m-w-m_1) \label{eq:R2}\\
R_3 & :=  \sum_{w=0}^m   w^2  \binom{m}{w} \Gamma  (w+m_1-\alpha) \Gamma  (\theta+n+m-w-m_1)  \label{eq:R3}.
\end{align}
We focus on the evaluation of $R_1$:
\begin{align*}
R_1 & =  \Gamma (\theta+n+m-\alpha) \sum_{w=0}^m  \binom{m}{w}   B(w+m_1-\alpha , \theta+n+m-m_1-w)\\
& =  \Gamma (\theta+n+m-\alpha)   \int_0^1 \sum_{w=0}^m  \binom{m}{w} x^{w+m_1-\alpha-1} (1-x)^{\theta+n+m-m_1-w-1}\de x\\
& =   \Gamma (\theta+n+m-\alpha)  \int_0^1  x^{m_1-\alpha-1} (1-x)^{\theta+n+m-m_1-1} \Big( 1+\frac{x}{1-x}  \Big)  \de x\\
& =   \Gamma (\theta+n+m-\alpha) \int_0^1   x^{m_1-\alpha-1} (1-x)^{\theta+n-m_1-1} \de x   \\
& =  \Gamma (\theta+n+m-\alpha) B(m_1-\alpha , \theta+n-m_1),
\end{align*}
where we used the integral representation of the Beta function. The two sums \eqref{eq:R2}--\eqref{eq:R3} may be evaluated in a similar fashion to get
\begin{equation*}
R_2=  m \Gamma (\theta+n+m-\alpha) B (m_1+1-\alpha, \theta+n-m_1) 
\end{equation*}
and
\begin{equation*}
\begin{split}
R_3 &= m(m-1) \Gamma (\theta+n+m-\alpha) B (m_1-\alpha+2, \theta+n-m_1)\\
& \qquad\qquad +  m \Gamma (\theta+n+m-\alpha) B (m_1-\alpha+1 , \theta+n-m_1) .
\end{split}
\end{equation*}
Substituting the previous formulas for $R_1, R_2$ and $R_3$ in \eqref{eq:sommone}, the resulting expression can be used to evaluate the posterior expected value of $W_{1, n+m} \indic_{B_1}$ to get
\begin{equation} \label{eq:EW1B_final}
\begin{split}
&\E [W_{1, n+m} \indic_{B_1}| K_n =k , \{M_{i,n}=m_i\}_{i=1}^k] \\
 &\qquad =  \frac{n}{n+m} \cdot \frac{(\theta+1)_{(n-1)}}{[\alpha (n-m_1) +\theta m_1]  (\theta+1)_{(n+m-1)}}
 \cdot (\theta+n-\alpha)_{(m)}   C(n,m,\alpha, \theta,m_1)
\end{split}
\end{equation}
where 
\begin{equation*}
\begin{split}
& C(n,m,\alpha, \theta,m_1)  :=  m_1 [\alpha (n+m-m_1)+\theta m_1] +m\frac{m_1 -\alpha}{\theta+n-\alpha}[\alpha (n+m-m_1)+\theta m_1]\\
& \qquad +  \frac{m_1-\alpha}{\theta+n-\alpha} m (\theta-\alpha) (m_1+1) + \frac{(m_1-\alpha) (m_1+1-\alpha)}{(\theta+n-\alpha)(\theta+n+1-\alpha)} (\theta-\alpha) m (m-1) ,
\end{split}
\end{equation*}
and the thesis now follows putting together \eqref{eq:EWA1_final} and \eqref{eq:EW1B_final} to evaluate Equation (22).\\
\qed

\subsection{Auxiliary results}   \label{app:auxiliary}

We prove the following corollary of Theorem 1, which provides us with the posterior probabilities of the events $A_1$ and $B_1 $, defined in the main paper.
\begin{corollary}\label{cor:prob}
Let $((T_{1},X_{1}),\ldots,(T_{n},X_{n}))$ be a random sample under the BNP model (12), such that the sample features $K_{n}=k$ distinct species with corresponding ordered frequencies $\mathbf{M}_{n}=\mathbf{m}$. Then,
\begin{equation} \label{eq:PA1}
\text{Pr}  (A_1 | K_n =k ,\mathbf{M}_{n}=\mathbf{m})  =1-\frac{n}{n+m} \cdot \frac{(\theta+n+1-\alpha)_{(m)}}{(\theta+n)_{(m)}}
\end{equation}
and
\begin{equation} \label{eq:PB1}
\text{Pr}  (B_1 | K_n =k , \mathbf{M}_{n}=\mathbf{m} ) = \frac{n}{n+m} \cdot \frac{(\theta+n+1-\alpha)_{(m)}}{(\theta+n)_{(m)}}.
\end{equation}
\end{corollary}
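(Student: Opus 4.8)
The plan is to establish \eqref{eq:PB1} first and then read off \eqref{eq:PA1} for free. Since $m\ge 1$, the enlarged sample always has a species' label of order $1$, and that label either did or did not appear among $(T_1,X_1),\dots,(T_n,X_n)$; hence, conditionally on $K_n=k$ and $\mathbf M_n=\mathbf m$, the events $A_1$ and $B_1$ partition the sample space, so $\text{Pr}(A_1\mid K_n=k,\mathbf M_n=\mathbf m)=1-\text{Pr}(B_1\mid K_n=k,\mathbf M_n=\mathbf m)$. I would explicitly avoid trying to get \eqref{eq:PB1} by summing \eqref{eq_old_posterior} over $w_1\in\{1,\dots,m\}$: that sum is governed by $\Gamma(w_1-\alpha)$ and the associated Beta integral $\int_0^1 x^{-\alpha-1}(1-x)^{\theta+n-1}\,\de x$ diverges, whereas working through $B_1$ shifts the frequency by $m_1\ge 1$ and keeps every exponent admissible.

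To prove \eqref{eq:PB1}, I would start from Corollary~\ref{cor:oldest_species_posterior}(ii), i.e.\ \eqref{eq_posterior_r1_old_simplified}, and sum over all admissible frequencies, $\text{Pr}(B_1\mid K_n=k,\mathbf M_n=\mathbf m)=\sum_{w_1=0}^{m}\text{Pr}[B_1,\,W_{1,n+m}=w_1+m_1\mid K_n=k,\mathbf M_n=\mathbf m]$. Rewriting each Pochhammer symbol as a ratio of Gamma functions — in particular $(m_1-\alpha)_{(w_1)}=\Gamma(m_1-\alpha+w_1)/\Gamma(m_1-\alpha)$, $(\theta+n+m-w_1-m_1)_{(w_1+m_1)}=\Gamma(\theta+n+m)/\Gamma(\theta+n+m-w_1-m_1)$, and $(\theta+n-m_1)_{(m_1)}=\Gamma(\theta+n)/\Gamma(\theta+n-m_1)$ — and pulling every $w_1$-independent factor out of the sum, the whole problem reduces to evaluating
\begin{equation*}
\Sigma:=\sum_{w=0}^{m}\binom{m}{w}\bigl[\alpha(n+m-w-m_1)+\theta(w+m_1)\bigr]\,\Gamma(w+m_1-\alpha)\,\Gamma(\theta+n+m-w-m_1).
\end{equation*}
This is exactly the "lighter" analogue of the sum handled in the proof of \eqref{eq:posteriorE} in Section~\ref{app:expectation}: the summand there carries an extra factor $(w+m_1)$, which is precisely why that computation needed $R_3$ as well, whereas here splitting the bracket as $\alpha(n+m-m_1)+\theta m_1+(\theta-\alpha)w$ gives $\Sigma=[\alpha(n+m-m_1)+\theta m_1]\,R_1+(\theta-\alpha)\,R_2$, with $R_1,R_2$ as in \eqref{eq:R1}--\eqref{eq:R2}, namely $R_1=\Gamma(\theta+n+m-\alpha)\,B(m_1-\alpha,\theta+n-m_1)$ and $R_2=m\,\Gamma(\theta+n+m-\alpha)\,B(m_1+1-\alpha,\theta+n-m_1)$ (the Beta integral is legitimate here since $m_1\ge 1$).

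Using $B(m_1+1-\alpha,\theta+n-m_1)=\tfrac{m_1-\alpha}{\theta+n-\alpha}B(m_1-\alpha,\theta+n-m_1)$ and substituting back, the proof of \eqref{eq:PB1} reduces to the polynomial identity
\begin{equation*}
(\theta+n-\alpha)\bigl[\alpha(n+m-m_1)+\theta m_1\bigr]+(\theta-\alpha)m(m_1-\alpha)=\bigl[\alpha(n-m_1)+\theta m_1\bigr](\theta+n+m-\alpha),
\end{equation*}
in which every monomial cancels; after this all the remaining Gamma factors collapse to $\frac{n}{n+m}\cdot\frac{\Gamma(\theta+n+m-\alpha)\Gamma(\theta+n)}{\Gamma(\theta+n-\alpha)\Gamma(\theta+n+m)}\cdot\frac{\theta+n+m-\alpha}{\theta+n-\alpha}$, which equals $\frac{n}{n+m}\cdot\frac{(\theta+n+1-\alpha)_{(m)}}{(\theta+n)_{(m)}}$, giving \eqref{eq:PB1}; then \eqref{eq:PA1} follows from the complementarity noted above. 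There is no real conceptual obstacle: the only work is the careful bookkeeping of the Gamma-function conversions and the verification of the displayed polynomial identity, and the two nontrivial Beta-function sums $R_1,R_2$ have already been computed in Section~\ref{app:expectation}.
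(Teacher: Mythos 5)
Your proposal is correct and follows essentially the same route as the paper: the paper also obtains \eqref{eq:PA1} by complementarity and proves \eqref{eq:PB1} by summing \eqref{eq_posterior_r1_old_simplified} over $w_1=0,\ldots,m$, splitting the bracket into a constant part plus $(\theta-\alpha)w$ (its sums $V_1,V_2$ are exactly your $[\alpha(n+m-m_1)+\theta m_1]R_1$ and $(\theta-\alpha)R_2$, reusing the Beta-integral evaluations from Section~\ref{app:expectation}), and your final polynomial identity is just the paper's closing simplification multiplied through by $\theta+n-\alpha$.
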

\begin{proof}
The proof of Equation (19) and Equation (20) follows by a direct application of Theorem 1.
We now focus on the evaluation of Equation (21)
\eqref{eq:PB1}, and obviously \eqref{eq:PA1} follows from the fact that
\[
\text{Pr} [A_1 | K_n =k , \mathbf{M}_{n}=\mathbf{m}] = 1-\text{Pr} [B_1 | K_n =k , \mathbf{M}_{n}=\mathbf{m}] .
\]
In order to evaluate the probability on the r.h.s. of the previous expression, we consider Equation (20) and we sum all over the possible values of the random variable $W_{1,n+m}$:
\begin{align*}
&\text{Pr}[B_1 | K_n =k , \mathbf{M}_{n}=\mathbf{m}] \\
& \qquad = \sum_{w=0}^m  \binom{m}{w} \cdot \frac{n}{n+m} \cdot
\frac{\alpha (n+m-w-m_1)+\theta  (w+m_1)}{\alpha (n-m_1)+\theta m_1}\cdot\frac{(\theta+n-m_1)_{(m-w)} (m_1-\alpha )_{(w)}}{(\theta+n)_{(m)}}\\
& \qquad =  \frac{n}{(n+m) (\theta+n)_{(m)}  [\alpha (n-m_1)+\theta m_1] \Gamma (m_1-\alpha) \Gamma (\theta+n-m_1)}
(V_1+V_2),
\end{align*}
where $V_1$ and $V_2$ are defined as follows
\begin{align}
\label{eq:sumV1}
V_1 &:=  \sum_{w=0}^m  \binom{m}{w} [\alpha (n+m) +m_1 (\theta-\alpha)] \Gamma (m_1+w-\alpha) \Gamma (\theta+n+m-m_1-w)\\
\label{eq:sumV2}
V_2  & :=  \sum_{w=0}^m  \binom{m}{w} w  (\theta-\alpha) \Gamma (m_1+w-\alpha) \Gamma (\theta+n+m-m_1-w).
\end{align}
We focus on the evaluation of the two sums \eqref{eq:sumV1}--\eqref{eq:sumV2}. As for $V_1$ we get
\begin{equation}\label{eq:V1_valuated}
\begin{split}
V_1 & =  [\alpha (n+m) +m_1 (\theta-\alpha)] \sum_{w=0}^m  \binom{m}{w}  \Gamma (m_1+w-\alpha) \Gamma (\theta+n+m-m_1-w)\\
&  = [\alpha (n+m) +m_1 (\theta-\alpha)] \Gamma (\theta+n+m-\alpha) B(m_1-\alpha , \theta+n-m_1)
\end{split}
\end{equation}
where we have observed that the last sum coincides with $R_1$, that has been evaluated in Section \ref{app:expectation} of the Supplementary Material. The sum \eqref{eq:sumV2} equals
\begin{equation}
\label{eq:V2_evaluated}
\begin{split}
V_2 & = (\theta-\alpha) \sum_{w=0}^m  \binom{m}{w} w   \Gamma (m_1+w-\alpha) \Gamma (\theta+n+m-m_1-w) \\
& =  (\theta-\alpha)  m \Gamma (\theta+n+m-\alpha) B (m_1+1-\alpha, \theta+n-m_1)
\end{split}
\end{equation}
where we have now used the fact that the  sum coincides with $R_2$, which has been determined in Section \ref{app:expectation} of the Supplementary Material. We now substitute \eqref{eq:V1_valuated}--\eqref{eq:V2_evaluated} in the posterior probability of $B_1$, and with some simple calculations we get
\begin{align*}
\text{Pr}[B_1 | K_n =k , \mathbf{M}_{n}=\mathbf{m}] 
&  =  \frac{n \Gamma (\theta+n+m-\alpha) B(m_1-\alpha,  \theta+n-m_1)}{(n+m) (\theta+n)_m  [\alpha (n-m_1)+\theta m_1] \Gamma (m_1-\alpha) \Gamma (\theta+n-m_1)}\\
& \qquad \times   \Big\{ \alpha (n+m) +(\theta-\alpha)
\Big[ m_1 + m \frac{m_1-\alpha}{\theta+n-\alpha} \Big] \Big\}.
\end{align*}
The last expression in parenthesis equals
\[
\alpha (n+m) +(\theta-\alpha)
\Big[ m_1 + m \frac{m_1-\alpha}{\theta+n-\alpha} \Big] = (\theta+n+m-\alpha)\cdot \frac{\alpha (n-m_1)+\theta  m_1}{\theta+n-\alpha},
\]
as a consequence the posterior probability of $B_1$ boils down to 
\begin{align*}
\text{Pr}[B_1 | K_n =k , \mathbf{M}_{n}=\mathbf{m}] & =  \frac{n \Gamma (\theta+n+m-\alpha) B(m_1-\alpha,  \theta+n-m_1)}{(n+m) (\theta+n)_{(m)}  \Gamma (m_1-\alpha) \Gamma (\theta+n-m_1)}\cdot \frac{\theta+n+m-\alpha}{\theta+n-\alpha}\\
& = \frac{n}{n+m}\cdot \frac{(\theta+n+1-\alpha)_{(m)}}{(\theta+n)_{(m)}}
\end{align*}
where the last expression follows by a simple rearrangement of the terms, thus \eqref{eq:PB1} is now proved.
\end{proof}

\section{Additional empirical analyses} \label{app:application}

\subsection{Parameter estimation}

As described in section 3.2 of our main manuscript, parameters are either estimated using empirical Bayes approaches, or with a full Bayes approach.

Among the empirical Bayes approaches, we consider various methods that maximize the likelihood function. In this context, this means finding the optimal values of $\alpha$ and $\theta$ that maximize the EPPF.
Thanks to the available closed functional form, this can easily be done with an optimization routine, such as \texttt{optim} in R. 

We also consider a different kind of empirical Bayes approach, which instead of optimizing the full likelihood, considers a summary statistics and uses method of moments to match the empirical with the theoretical moments. Specifically, we consider the first theoretical moment and the observed value of the summary statistic, for a grid of sample size values. The summary statistics we consider are the number of distinct species and the frequency of the first ordered species; the pseudocode for these procedures is given respectively in algorithm \ref{alg:K} and algorithm \ref{alg:M1}.

These empirical Bayes algorithm are quite simple from a computational point of view, and they require on average 0.02 seconds to be computed on a MacBook Pro with a 3.1 GHz Dual-Core Intel Core i5 and 16GB LPDDR3 RAM. 

\begin{algorithm}
\caption{Empirical Bayes with method of moment for number of distinct species \label{alg:K}}
\begin{algorithmic}
\Procedure {$f_K$}{$\alpha,\theta;n$}
\State $EK \leftarrow \frac{\theta}{\alpha}(\frac{\Gamma(\theta+\alpha+n)}{\Gamma(\theta+\alpha)} \frac{\Gamma(\theta)}{\Gamma(\theta+n)} -1)$
\State \textbf{return} $EK$
\EndProcedure
\Statex
\Procedure {lsK}{$\{(T_1,X_1),\ldots,(T_n,X_n)\}$, $d$}
\For {$i$ in $1, \ldots, d$} 
    \State $n_i \leftarrow \textrm{floor}(i \cdot n/d)$ \Comment{Define equally spaced grid}
    \State $K_{n_i} \leftarrow \textrm{unique}(\{X_1,\ldots,X_{n_i}\})$ \Comment{Compute number of species up to $n_i$}
\EndFor
\State $(\hat \alpha,\hat \theta) = \argmin_{(\alpha,\theta)} \sum_{i=1}^d (K_{n_i} - f_{K}(\alpha,\theta; n_i))^2$ \Comment{Minimize least squares}
\EndProcedure
\end{algorithmic}
\end{algorithm}

\begin{algorithm}
\caption{Empirical Bayes with method of moment for first ordered species \label{alg:M1}}
\begin{algorithmic}
\Procedure {$f_{M_1}$}{$\alpha,\theta;n$}
\State $EM_1 \leftarrow \frac{\Gamma(\theta+1-\alpha+n)}{\Gamma(\theta+2-\alpha)} \frac{\Gamma(\theta+1)}{\Gamma(\theta+n)}$
\State \textbf{return} $EM_1$
\EndProcedure
\Statex
\Procedure {lsM1}{$\{(T_1,X_1),\ldots,(T_n,X_n)\}$, $d$}
\For {$i$ in $1, \ldots, d$} 
    \State $n_i \leftarrow \textrm{floor}(i \cdot n/d)$ \Comment{Define equally spaced grid}
    \State $i_{\max} \leftarrow \textrm{which.max}(\{T_1, \ldots, T_{n_i}\})$ \Comment{Find 1st ordered species up to $n_i$}
    \State $M_{1,n_i} \leftarrow \textrm{size}(\{X_1,\ldots,X_{n_i}\} = X_{i_{\max}})$ \Comment{Frequency of 1st ord. species up to $n_i$}
\EndFor
\State $(\hat \alpha,\hat \theta) = \argmin_{(\alpha,\theta)} \sum_{i=1}^d (M_{1,n_i} - f_{M_1}(\alpha,\theta; n_i))^2$ \Comment{Minimize least squares}
\EndProcedure
\end{algorithmic}
\end{algorithm}

For the full Bayes approach, a Metropolis Hasting algorithm is used to perform posterior inference on $\theta$ and $\alpha$. We iteratively update the value of $\theta$ and $\alpha$. We consider truncated normal distributions as proposals for both $\theta$ and $\alpha$. A burn-in of 500 iterations is discarded, and a total of 1000 samples are collected after thinning one every 50 iterations. This algorithm requires on average 16 seconds to run on the same MacBook Pro.

\subsection{Synthetic data evaluation: additional results} \label{app:additional}

In table~\ref{tab:additional} we report the performance of each parameter-estimation method, for each of the four quantities considered: the number of distinct species $K$, the frequency of the first ordered species $W_1$ and the frequency $W_1$ conditional on the events that the first ordered species is new ($A_1$) and that the first ordered species is old ($B_1$). We report the median percentage error, computed across 500 synthetic datasets, generated with different distributions: the ordered PYP model (\texttt{model}), or with a clustering distribution equal to either \texttt{DP},\texttt{PYP} or \texttt{zipf}, and with a ordering distribution equal to either \texttt{alpha-stable} or \texttt{arrival-weighted}.
For each one of 100 initial datasets of size $n = 500$, we consider the median percentage error across the 25 additional datasets of size $m = 5000$; we then report the median percentage error across those 100 median values. These results are the numeric correspondent to Figure 3 and 4 of the main manuscript.

\begin{table}
\centering
\begin{tabular}[t]{l|rrr|rrr|r}
\toprule
& \multicolumn{3}{c|}{alpha-stable} & \multicolumn{3}{c|}{arrival-weighted} &  \\
method & \multicolumn{1}{c}{DP} & \multicolumn{1}{c}{PYP} & \multicolumn{1}{c|}{zipf} & \multicolumn{1}{c}{DP} & \multicolumn{1}{c}{PYP} & \multicolumn{1}{c|}{zipf} & \multicolumn{1}{c}{model} \\
\midrule
\multicolumn{8}{l}{\textbf{K}}\\
\cellcolor{gray!6}{FB} & \cellcolor{gray!6}{0.481} & \cellcolor{gray!6}{0.284} & \cellcolor{gray!6}{0.095} & \cellcolor{gray!6}{0.299} & \cellcolor{gray!6}{0.126} & \cellcolor{gray!6}{0.132} & \cellcolor{gray!6}{0.092}\\
ordPYP & 0.388 & 0.238 & 0.099 & 0.195 & 0.126 & 0.151 & 0.096\\
\cellcolor{gray!6}{ordDP} & \cellcolor{gray!6}{0.096} & \cellcolor{gray!6}{0.327} & \cellcolor{gray!6}{0.606} & \cellcolor{gray!6}{0.094} & \cellcolor{gray!6}{0.332} & \cellcolor{gray!6}{0.608} & \cellcolor{gray!6}{0.328}\\
stdPYP & 0.098 & 0.131 & 0.105 & 0.101 & 0.129 & 0.109 & 0.113\\
\cellcolor{gray!6}{lsM1} & \cellcolor{gray!6}{3.798} & \cellcolor{gray!6}{1.626} & \cellcolor{gray!6}{0.935} & \cellcolor{gray!6}{0.713} & \cellcolor{gray!6}{0.519} & \cellcolor{gray!6}{0.720} & \cellcolor{gray!6}{0.750}\\
lsK & 0.127 & 0.233 & 0.117 & 0.125 & 0.226 & 0.118 & 0.179\\
\addlinespace
\multicolumn{8}{l}{\textbf{W1}}\\
\cellcolor{gray!6}{FB} & \cellcolor{gray!6}{0.628} & \cellcolor{gray!6}{1.237} & \cellcolor{gray!6}{1.718} & \cellcolor{gray!6}{0.562} & \cellcolor{gray!6}{0.778} & \cellcolor{gray!6}{1.425} & \cellcolor{gray!6}{1.121}\\
ordPYP & 0.617 & 1.269 & 1.671 & 0.530 & 0.776 & 1.493 & 1.232\\
\cellcolor{gray!6}{ordDP} & \cellcolor{gray!6}{0.962} & \cellcolor{gray!6}{11.487} & \cellcolor{gray!6}{19.028} & \cellcolor{gray!6}{0.707} & \cellcolor{gray!6}{4.469} & \cellcolor{gray!6}{16.609} & \cellcolor{gray!6}{8.635}\\
stdPYP & 0.675 & 2.195 & 1.742 & 0.556 & 0.816 & 1.259 & 1.114\\
\cellcolor{gray!6}{lsM1} & \cellcolor{gray!6}{0.708} & \cellcolor{gray!6}{0.856} & \cellcolor{gray!6}{0.896} & \cellcolor{gray!6}{0.567} & \cellcolor{gray!6}{0.874} & \cellcolor{gray!6}{0.883} & \cellcolor{gray!6}{0.650}\\
lsK & 0.642 & 2.376 & 1.962 & 0.640 & 0.847 & 1.289 & 1.097\\
\addlinespace
\multicolumn{8}{l}{\textbf{W1$\vert$A1}}\\
\cellcolor{gray!6}{FB} & \cellcolor{gray!6}{0.683} & \cellcolor{gray!6}{0.856} & \cellcolor{gray!6}{0.821} & \cellcolor{gray!6}{0.647} & \cellcolor{gray!6}{0.714} & \cellcolor{gray!6}{0.834} & \cellcolor{gray!6}{1.102}\\
ordPYP & 0.669 & 0.869 & 0.829 & 0.694 & 0.727 & 0.844 & 1.220\\
\cellcolor{gray!6}{ordDP} & \cellcolor{gray!6}{4.427} & \cellcolor{gray!6}{8.668} & \cellcolor{gray!6}{9.612} & \cellcolor{gray!6}{2.156} & \cellcolor{gray!6}{4.052} & \cellcolor{gray!6}{8.241} & \cellcolor{gray!6}{8.919}\\
stdPYP & 2.633 & 1.282 & 0.858 & 1.605 & 0.755 & 0.779 & 1.255\\
\cellcolor{gray!6}{lsM1} & \cellcolor{gray!6}{0.750} & \cellcolor{gray!6}{0.671} & \cellcolor{gray!6}{0.337} & \cellcolor{gray!6}{0.866} & \cellcolor{gray!6}{0.894} & \cellcolor{gray!6}{1.372} & \cellcolor{gray!6}{1.722}\\
lsK & 2.541 & 1.208 & 0.849 & 1.180 & 0.759 & 0.768 & 1.233\\
\addlinespace
\multicolumn{8}{l}{\textbf{W1$\vert$B1}}\\
\cellcolor{gray!6}{FB} & \cellcolor{gray!6}{0.270} & \cellcolor{gray!6}{0.352} & \cellcolor{gray!6}{0.505} & \cellcolor{gray!6}{0.109} & \cellcolor{gray!6}{0.199} & \cellcolor{gray!6}{0.271} & \cellcolor{gray!6}{0.259}\\
ordPYP & 0.271 & 0.357 & 0.503 & 0.109 & 0.200 & 0.270 & 0.266\\
\cellcolor{gray!6}{ordDP} & \cellcolor{gray!6}{0.330} & \cellcolor{gray!6}{0.473} & \cellcolor{gray!6}{1.294} & \cellcolor{gray!6}{0.110} & \cellcolor{gray!6}{0.196} & \cellcolor{gray!6}{0.297} & \cellcolor{gray!6}{0.321}\\
stdPYP & 0.291 & 0.361 & 0.490 & 0.110 & 0.201 & 0.263 & 0.266\\
\cellcolor{gray!6}{lsM1} & \cellcolor{gray!6}{0.285} & \cellcolor{gray!6}{0.353} & \cellcolor{gray!6}{0.500} & \cellcolor{gray!6}{0.108} & \cellcolor{gray!6}{0.195} & \cellcolor{gray!6}{0.209} & \cellcolor{gray!6}{0.262}\\
lsK & 0.282 & 0.355 & 0.526 & 0.109 & 0.199 & 0.268 & 0.278\\
\bottomrule
\end{tabular}
\caption{Median percentage error for simulations based on synthetic data, corresponding to the results displayed in Figure 3 and Figure 4. \label{tab:additional}}
\end{table}

\subsection{Genetic data evaluation: additional results} \label{app:additional_genetic}

We report here the plots showing the curve of $K$ and $W_1$ for the EDAR gene and the corresponding predictions and uncertainty quantification. Figure~\ref{fig:median_curves_EDAR}, which mirrors Figure~6, reports the curves (as black lines) of $K$ and $W_1$ for the training-testing split that represents the median error for a given parameter-estimating method, together with the corresponding predicted curves (depicted in red) and the confidence intervals (red bands). The confidence bands are created using bootstrap, computing the quantiles of the curve estimates across the training-testing splits. 

\begin{figure}[t!]
\centering
\includegraphics[width=0.97\linewidth]{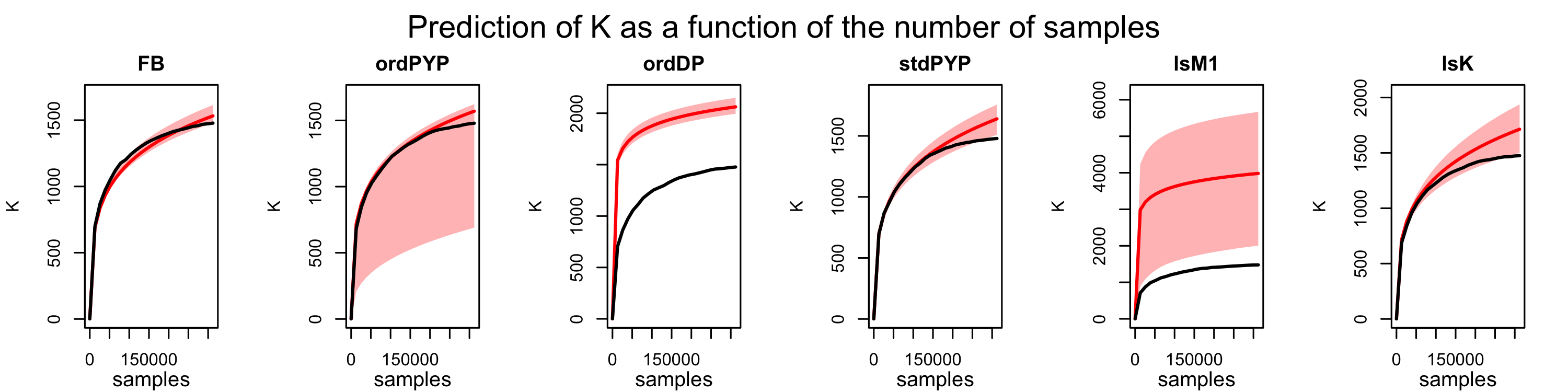} \\
\includegraphics[width=0.97\linewidth]{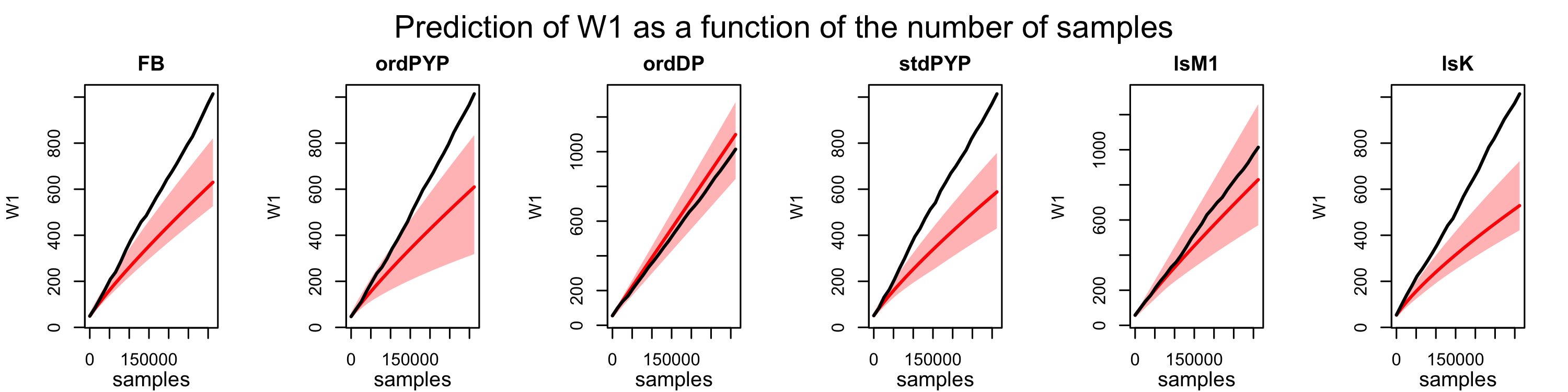} 
\caption{Prediction for the curve for the gene EDAR, for $K$ (top panels) and $W_1$ (bottom panels).} \label{fig:median_curves_EDAR}
\end{figure}

For the analysis of genetic variation in gene EDAR, Figure~\ref{fig:median_curves_EDAR} shows that the best performance in predicting the number of distinct variants is given by \texttt{stdPYP}, \texttt{ordPYP} and \texttt{FB}, while \texttt{ordDP} and \texttt{lsM1} better predict the frequency of the oldest variant. 

Moreover, table~\ref{tab:additional_genetic} reports the numeric results associated to Figure 5 of the main manuscript.

\begin{table}
\centering
\begin{tabular}[t]{l|rr|rr}
\toprule
& \multicolumn{2}{c|}{$\mathbf{K}$} & \multicolumn{2}{c}{$\mathbf{W1}$} \\
method & BRCA & EDAR & BRCA & EDAR\\
\midrule
\cellcolor{gray!6}{FB} & \cellcolor{gray!6}{0.167} & \cellcolor{gray!6}{0.043} & \cellcolor{gray!6}{0.023} & \cellcolor{gray!6}{0.142}\\
ordPYP & 0.162 & 0.053 & 0.017 & 0.143\\
\cellcolor{gray!6}{ordDP} & \cellcolor{gray!6}{0.280} & \cellcolor{gray!6}{0.394} & \cellcolor{gray!6}{0.004} & \cellcolor{gray!6}{0.007}\\
stdPYP & 0.138 & 0.100 & 0.261 & 0.182\\
\cellcolor{gray!6}{lsM1} & \cellcolor{gray!6}{0.216} & \cellcolor{gray!6}{1.665} & \cellcolor{gray!6}{0.008} & \cellcolor{gray!6}{0.018}\\
lsK & 0.153 & 0.162 & 0.293 & 0.220\\
\bottomrule
\end{tabular}
\caption{Median percentage error for the analysis of genetic data, corresponding to the results displayed in Figure 5. \label{tab:additional_genetic}}
\end{table}


\subsection{Analysis of citation data} \label{app:citations}

While ordered SSP were originally motivated by problems arising in population genetics, many other contexts can give rise to problems species sampling problems where species are associated to an order. One such context is the study of citation networks. In particular, given a collection of scientific articles, each citation reported in any of those articles can be seen as an individual observation, which can be classified into species based on which article is cited. Moreover, cited articles can be ordered based on their publication date.


In the following, we analyze citation data, using the dataset collected and cleaned by \cite{Ji(16)}, which focuses on papers published in the four top statistics journals from 2003 to 2012. The data is publicly available at \url{https://www.stat.uga.edu/directory/people/pengsheng-ji}, and detailed citation information (bibtex files, used to compute the ordering of the papers) where shared by the authors. 
Note that only the citations from articles in the dataset to articles in the dataset are considered.

\paragraph{Statistics citation data}

The Statistics Journals citation data provides the citation network of 3248 articles, of which 1798 contains citations to other papers in the dataset, and 1555 are cited by papers in the datasets, for a total of 5722 citations.
Within each journals, we ordered the papers by considering the year, the journal issue number, and the pages. To order the articles across journals we used the months for each issue, and when two journals published an issue in the same months, the order given by the pages was used. The order is increasing from the oldest paper to the most recent one.

We divide the original dataset into a training set and a testing set, so that the size of the former is one tenth of the original sample size. We repeat our analysis for 100 different randomly sampled training and testing sets. 
We focus our analysis on the prediction of the number of distinct species $K$ and the frequency of the first ordered species $W_1$. 
We compare the performance of the different parameter-estimation methods described in Sections~3.2. 

\begin{figure}[t!]
\centering
\includegraphics[width=0.6\linewidth]{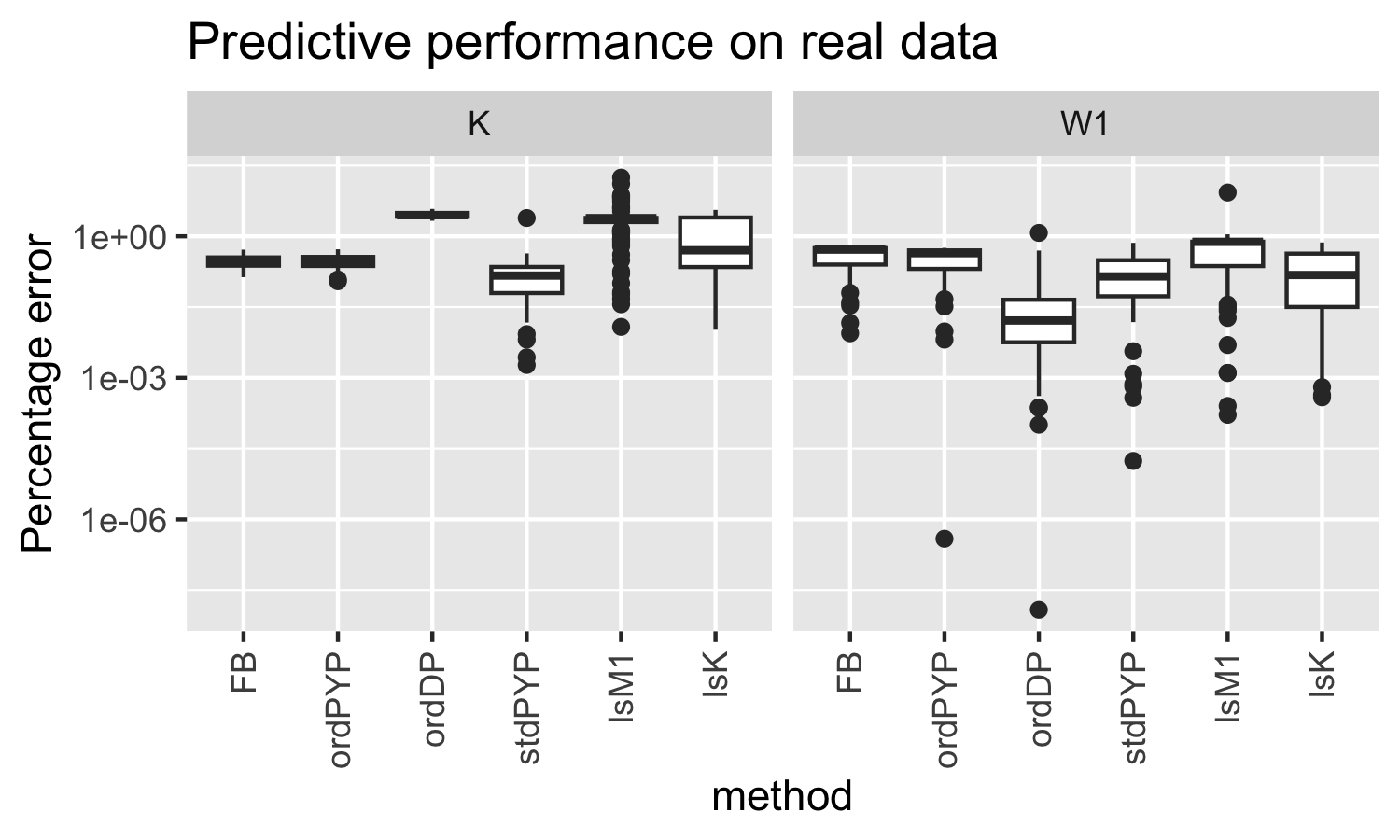}
\caption{Prediction performance across several training-testing sets for the statistics citation data. Note that results in the right panel are displayed using a log-transformed $y$ axis.} \label{fig:crossval_bibtex}
\end{figure}

Figure~\ref{fig:crossval_bibtex} reports the percentage absolute error for the prediction of $K$ and $W_1$ in the statistics citation data, computed across the several training-testing sets. We note that for the prediction of the number of distinct species $K$ the best performance is achieved by \texttt{stdPYP} with a median percentage error of $15\%$, followed by the \texttt{ordPYP} and \texttt{FB} with a median percentage error of $30\%$. 
For the prediction of the frequency of the oldest cited paper, the best performance is achieved by \texttt{ordDP}, with a median percentage error of $1\%$.
We also note that the percentage errors are quite smaller compared to those observed in Section~4.3, and this is due to the fact that in this dataset the event $A_1$, i.e. that the oldest citation is observed in the additional sample, is much more likely to occur.

\begin{figure}[t!]
\centering
\includegraphics[width=0.97\linewidth]{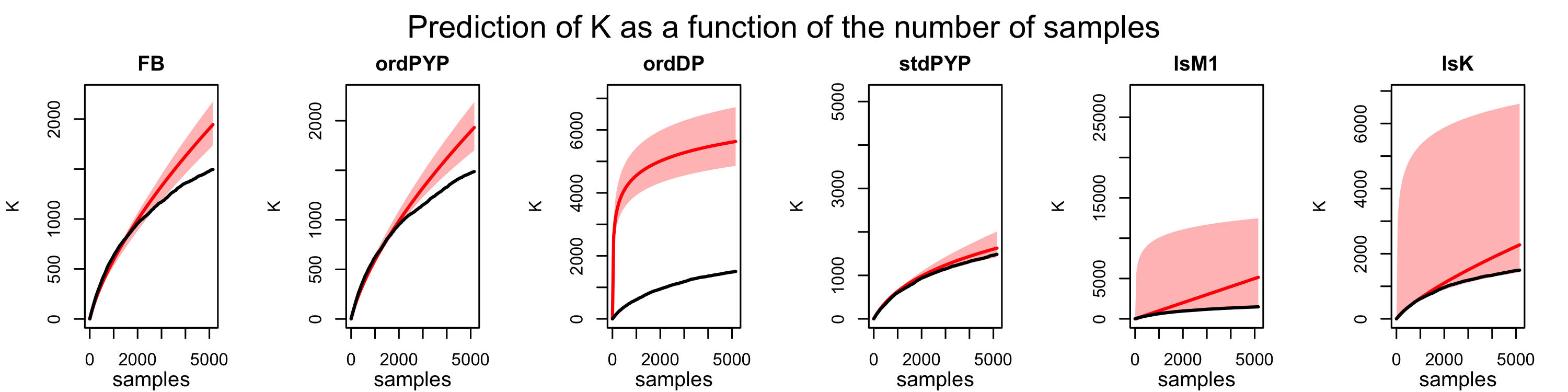} \\
\includegraphics[width=0.97\linewidth]{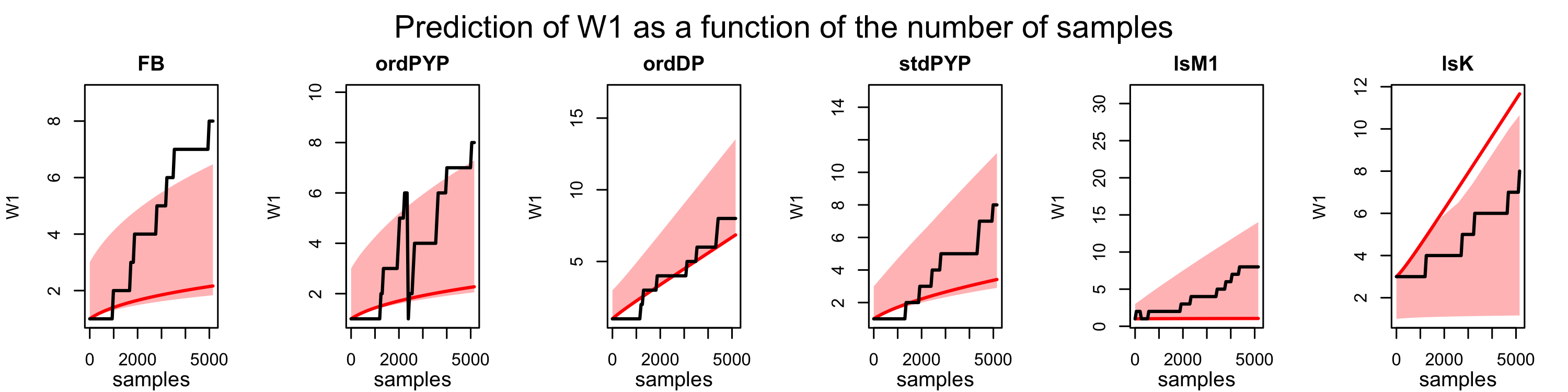} 
\caption{Prediction for the curve for the statistics citation data, for $K$ (top panels) and $W_1$ (bottom panels).} \label{fig:median_curves_bibtex}
\end{figure}

Similar behaviors are observed in Figure~\ref{fig:median_curves_bibtex}, which displays the predictions as functions of the number of samples, for the curve of the number of distinct species $K$ (top panels) and the frequency of the oldest species $W_1$ (bottom panels). 
For simplicity of visualization, we depict in black the actual curve of $K$ (or respectively $W_1$) observed in the training-testing split that most closely represent the median error achieved by each method. The prediction curve (represented in red) is also the one corresponding to the ``median'' training-testing split, while the red bands represent the $95\%$ confidence bands. 
We report the empirical confidence bands, computed using the empirical quantiles of the curve estimates across the various training-testing splits.
From the top panels of Figure~\ref{fig:median_curves_bibtex} it is clear that \texttt{stdPYP} provides a better fit to the whole curve of $K$, while the bottom panels show that \texttt{ordDP} has the best prediction for the curve of $W_1$. We also note that the curve of $W_1$ is not necessarily monotone increasing, as shown in the second and fourth bottom panels, because a new older cluster can be observed and reduce the frequency $W_1$ to one.

Overall, the parameters estimated with standard Pitman Yor Process seems to provide a better fit for the behavior of the number of distinct species (number of unique cited papers), but the parameters estimated with the ordered Dirichlet Process seems to better predict the frequency of the oldest cited paper $W_1$. This contrasting behavior is most likely explained by the fact that the ordered PYP model is somewhat misspecified for these data, which display power-law behavior for the number of species (consistent with the standard PYP), and a first ordered species frequency similar to that induced by the ordered DP.

\end{document}